\colorlet{shade}{black!10}
\pgfplotsset{compat=1.16}
\newcommand{\R}{\mathbb{R}}
\newcommand{\N}{\mathbb{N}}
\newcommand{\Z}{\mathbb{Z}}
\renewcommand{\S}{\mathbb{S}}
\newcommand{\mcN}{\mathcal{N}}
\newcommand{\mcL}{\mathcal{L}}
\newcommand{\mcS}{\mathcal{S}}
\newcommand{\mcG}{\mathcal{G}}
\newcommand{\mcC}{\mathcal{C}}
\newcommand{\mcT}{\mathcal{T}}
\newcommand{\mcH}{\mathcal{H}}
\newcommand{\mcF}{\mathcal{F}}
\newcommand{\mcV}{\mathcal{V}}
\newcommand{\mcD}{\mathcal{D}}
\newcommand{\mcA}{\mathcal{A}}
\newcommand{\x}{\vec{x}}
\newcommand{\y}{\vec{y}}
\newtheoremstyle{theorems}
  {3pt}
  {3pt}
  {\itshape}
  {}
  {\bfseries}
  {.}
  { }
  {}
\newtheoremstyle{proofparts}
  {3pt}
  {0pt}
  {}
  {\parindent}
  {\scshape}
  {:}
  {\newline}
  {}
\newtheoremstyle{claims}
  {2pt}
  {2pt}
  {}
  {\parindent}
  {\bfseries}
  {.}
  { }
  {}
\theoremstyle{theorems}
\newtheorem{thm}{Theorem}[section]
\newtheorem{lemma}[thm]{Lemma}
\newtheorem*{lemma*}{Lemma}
\newtheorem{cor}[thm]{Corollary}
\newtheorem{prop}[thm]{Proposition}
\theoremstyle{definition}
\newtheorem{defn}[thm]{Definition}
\newtheorem{remark}[thm]{Remark}
\newtheorem{notation}[thm]{Notation}
\theoremstyle{proofparts}
\theoremstyle{claims}
\newtheorem*{claim*}{Claim}
\crefname{thm}{theorem}{theorems}
\crefname{problem}{problem}{problems}
\crefname{lemma}{lemma}{lemmas}
\crefname{cor}{corollary}{corollaries}
\crefname{prop}{proposition}{propositions}
\crefname{conj}{conjecture}{conjectures}
\crefname{defn}{definition}{definitions}
\crefname{note}{note}{notes}
\crefname{ex}{example}{examples}
\crefname{remark}{remark}{remarks}
\crefname{notation}{notation}{notations}
\crefname{assumption}{assumption}{assumptions}
\crefname{claim}{claim}{claims}
\crefname{claim*}{claim}{claims}
\newcommand{\Biggg}{\bBigg@{3}}
\newcommand{\vast}{\bBigg@{4}}
\newcommand{\Vast}{\bBigg@{5}}
\newcommand{\abs}[1]{\left\vert #1 \right\vert}
\DeclareMathOperator{\Tr}{Tr}
\DeclareMathOperator{\Li}{Li}
\definecolor{emphcolor}{rgb}{0,0,1}           
\newcommand{\expect}[1]{\left\langle #1 \right\rangle}
\newcommand{\ud}{\,\textnormal{d}}
\newcommand{\dd}[1]{\frac{\textnormal{d}}{\textnormal{d} #1}}
\newcommand{\bra}[1]{\left\langle #1 \right\vert}
\newcommand{\ket}[1]{\left\vert #1 \right\rangle}
\let\epsilon\varepsilon
\let\varepsilon\epsilon
\let\eps\epsilon
\title{Pressure of a dilute spin-polarized Fermi gas: Lower bound}
\author{Asbjørn Bækgaard Lauritsen\thanks{\href{mailto:alaurits@ist.ac.at}{\nolinkurl{alaurits@ist.ac.at}}}\,\,}
\author{Robert Seiringer\thanks{\href{mailto:robert.seiringer@ist.ac.at}{\nolinkurl{robert.seiringer@ist.ac.at}}}}
\affil{Institute of Science and Technology Austria, Am Campus 1, 3400 Klosterneuburg, Austria}
\begin{document}
\maketitle

\begin{abstract}
We consider a dilute spin-polarized Fermi gas at positive temperature 
in dimensions $d\in\{1,2,3\}$.
We show that the pressure of the interacting gas is bounded from below by that of the free gas 
plus, to leading order, an explicit term of order $a^d\rho^{2+2/d}$, where $a$ is the $p$-wave scattering length of the repulsive interaction 
and $\rho$ is the particle density.
The results are valid for a wide range of repulsive interactions, including that of a hard core,
and uniform in temperatures at most of the order of the Fermi temperature.
A central ingredient in the proof is a rigorous implementation 
of the fermionic cluster expansion of 
Gaudin, Gillespie and Ripka (Nucl. Phys. A, 176.2 (1971), pp. 237–260).
\end{abstract}

\section{Introduction}
The study of dilute quantum gases has received much interest from the mathematical physics community in the recent decades. 
In particular much work has been done pertaining to the ground state energies of both Fermi and Bose gases in the thermodynamic limit.

For Bose gases in $3$ dimensions the leading term of the ground state energy was first shown by Dyson \cite{Dyson.1957} as an upper bound 
and Lieb--Yngvason \cite{Lieb.Yngvason.1998} as a lower bound. 
The leading term depends only on density and the $s$-wave scattering length of the interaction. 
More recently the second order correction, known as the Lee--Huang--Yang correction, was shown \cite{Yau.Yin.2009,Fournais.Solovej.2020,Fournais.Solovej.2022}.
Also the $2$-dimensional \cite{Lieb.Yngvason.2001,Fournais.Girardot.ea.2022} and $1$-dimensional \cite{Agerskov.Reuvers.ea.2022,Agerskov.2023} settings  
have been studied.

The fermionic setting has been similarly studied in 
the $3$-dimensional 
\cite{Giacomelli.2022a,Falconi.Giacomelli.ea.2021,Lieb.Seiringer.ea.2005,Lauritsen.2023,Lauritsen.Seiringer.2023},
$2$-dimensional \cite{Lieb.Seiringer.ea.2005,Lauritsen.Seiringer.2023} and $1$-dimensional \cite{Lauritsen.Seiringer.2023,Agerskov.Reuvers.ea.2022,Agerskov.2023} 
case.
For fermions the spin is important. For non-zero spin, the leading correction to the energy of the free gas 
is similar to the leading term for bosons and depends only on the density and the $s$-wave scattering length of the interaction.
For spin-polarized (i.e., effectively spin-$0$) fermions the behavior is different. 
The leading correction to the energy of the free gas depends on the $p$-wave scattering length of the interaction instead
and is much smaller for dilute gases, which makes its analysis significantly harder.

A natural question to consider is the extension of these results on the ground state energy to 
positive temperature. This has been done both for bosons \cite{Yin.2010,Seiringer.2008,Haberberger.Hainzl.ea.2023,Mayer.Seiringer.2020,Deuchert.Mayer.ea.2020}
and non-zero spin fermions \cite{Seiringer.2006}.
In this paper we consider the extension for spin-polarized fermions.
More precisely, we consider the problem of finding the pressure $\psi(\beta,\mu)$ at positive temperature $T = 1/\beta$ and chemical potential $\mu$
in the setting of a spin-polarized Fermi gas. 
We are interested in the dilute limit $a^d\rho \ll 1$, 
where $a$ denotes the $p$-\emph{wave scattering length} of the interaction and $\rho$ denotes the particle density.
In this dilute limit we show the lower bound in dimensions $d\in \{1,2,3\}$
\begin{equation*}
\psi(\beta, \mu) \geq \psi_0(\beta,\mu) - c_d(\beta\mu) a^d \rho^{2+2/d}(1 + o(1))
\qquad \textnormal{as } a^d \rho \to 0,
\end{equation*}
for an explicit (temperature dependent) coefficient $c_d(\beta\mu)$.
Here $\psi$ respectively $\psi_0$ denote the pressure of the interacting respectively non-interacting system
at inverse temperature $\beta$ and chemical potential $\mu$.

The term $c_d(\beta\mu) a^d \rho^{2+2/d}$ arises naturally from the two-body interaction and the fact that 
the two-body density vanishes quadratically for incident particles. 
In the low-temperature limit $\beta\mu \to \infty$ the coefficients $c_d(\beta\mu)$ converge to 
the corresponding zero-temperature constants \cite{Lauritsen.Seiringer.2023,Agerskov.Reuvers.ea.2022}.
The temperature dependence of this term can then be understood via the temperature dependence of the two-particle density 
of the free state.

The result is valid for temperatures $T$ at most of the order of the Fermi temperature $T_F \sim \rho^{2/d}$ of the free gas.
For larger temperatures one should expect that thermal effects become larger than quantum effects,
and thus the gas should behave more like a (high temperature) classical gas. 
The natural parameter capturing the temperature is the \emph{fugacity} $z=e^{\beta\mu}$. 
In terms of the fugacity the constraint that the temperature satisfies $T\lesssim T_F$ reads $z \gtrsim 1$.

Our method of proof consists of  computing the pressure of a Jastrow-type trial state 
using a rigorous implementation \cite{Lauritsen.Seiringer.2023,Lauritsen.2023} of the fermionic cluster expansion of Gaudin--Gillepie--Ripka 
\cite{Gaudin.Gillespie.ea.1971}.
A similar method was employed in the zero-temperature setting \cite{Lauritsen.Seiringer.2023}, 
with the important difference that, because of the smoothness of the momentum distribution, 
the condition for convergence we obtain at positive temperature is uniform in the volume.
Thus we can compute the thermodynamic limit directly, without appealing to a box method 
of localizing a trial state into large but finite boxes as done in \cite{Lauritsen.Seiringer.2023}.

\subsection{Precise statement}
To state our main theorem precisely, define the (spin-polarized) fermionic Fock space 
$\mcF = \bigoplus_{n=0}^\infty L_a^2\left([0,L]^{3n}\right) = \bigoplus_{n=0}^\infty \bigwedge^{n} L^2\left([0,L]^{3}\right)$.
On this space we define the free Hamiltonian $\mcH$, the number operator $\mcN$ and interaction operator $\mcV$ as follows
(in natural units where $\frac{\hbar}{2m}=1$)
\begin{equation*}
\begin{aligned}
\mcH & = (0, H_{1}, \ldots, H_{n}, \ldots),
\qquad
  & 
  H_n & = \sum_{j=1}^n -\Delta_{x_j},
\\
\mcN & = (0, 1, \ldots, n, \ldots),
\\
  \mcV & = (0,0,V_2,\ldots,V_n,\ldots),
  & 
  V_n & = \sum_{1\leq i < j \leq n} v(x_i-x_j).
\end{aligned}
\end{equation*}
The interacting Hamiltonian is then $\mcH + \mcV$.
In the calculations below we will use periodic boundary conditions for convenience. 
The thermodynamical quantities don't depend on the choice of boundary conditions \cite{Robinson.1971}
and hence we are free to choose the most convenient ones.
We are interested in determining the pressure of the system described by this Hamiltonian at inverse temperature $\beta$ and chemical potential $\mu$.
We denote this by 
\begin{equation*}
	\psi(\beta, \mu) = \lim_{L\to \infty}
  \sup_{\varGamma} 
    P[\varGamma],
    \qquad 
     -L^d P[\varGamma] = 
     \Tr_{\mcF}\left[(\mcH - \mu \mcN + \mcV)\varGamma\right]
    - \frac{1}{\beta} S(\varGamma),
\end{equation*}
where $S(\varGamma) = - \Tr \varGamma \log \varGamma$ is the entropy of the state $\varGamma$ and 
$P[\varGamma]$ is the pressure functional.
By \emph{state} we mean a density matrix, i.e., a positive trace-class operator on $\mcF$ of unit trace.
(We suppress from the notation the dependence on the dimension $d$ and the length $L$.)
We denote moreover by
\begin{equation*}
  \psi_0(\beta, \mu) = \lim_{L\to \infty}
  \sup_{\varGamma} 
    P_0[\varGamma],
    \qquad 
     -L^d P_0[\varGamma] = 
    \Tr_{\mcF}\left[(\mcH - \mu \mcN)\varGamma\right]
    - \frac{1}{\beta} S(\varGamma),
\end{equation*}
the pressure and pressure functional of the free gas.
The supremum is a maximum and is achieved for the Gibbs state 
\begin{equation}\label{eqn.define.Gamma}
	\varGamma = Z^{-1} \exp \left({-\beta (\mcH - \mu \mcN)}\right) = Z^{-1} (\varGamma_0, \varGamma_1,\ldots,\varGamma_n, \ldots),
	\qquad 
	\varGamma_n = e^{\beta \mu n} e^{-\beta H_n}.
\end{equation}
Then \cite[Equation (8.63)]{Huang.1987}
\begin{equation}\label{eqn.formula.psi0}
\begin{aligned}
\psi_0(\beta,\mu) 
	& = \lim_{L\to \infty} \frac{1}{L^d} 
    \left[-\Tr_{\mcF}\left[(\mcH - \mu \mcN)\varGamma\right] 
    + \frac{1}{\beta} S(\varGamma)\right]
	= \lim_{L\to\infty} \frac{1}{L^d\beta} \log Z
  \\ & 
  = \frac{1}{\beta (2\pi)^d}  \int_{\R^d} \log\left(1 + e^{\beta\mu - \beta|k|^2}\right) \ud k.
\end{aligned}
\end{equation}
To state our main theorem we moreover define the $p$-\emph{wave scattering length} $a$.
(See also \cite[Appendix A]{Lieb.Yngvason.2001} and \cite[Equations (2.9), (4.3)]{Seiringer.Yngvason.2020}.)
\begin{defn}[{\cite[Definitions 1.1, 1.6 and 1.8]{Lauritsen.Seiringer.2023}}]
\label{def.scattering.length}
The $p$-\emph{wave scattering length} $a$ of the interaction $v$ in dimension $d$ is defined by 
\begin{equation*}
	c_d a^d = \inf \left\{\int_{\R^d} \left(\abs{\nabla f_0(x)}^2 + \frac{1}{2} v(x) f_0(x)^2\right)|x|^2 \ud x
		: f_0(x) \to 1 \textnormal{ for } |x|\to \infty \right\},
\end{equation*}
where 
\begin{equation}\label{eqn.define.cd}
  c_d = \begin{cases}
  12\pi & d=3, \\ 4\pi & d=2, \\ 2 & d=1.
  \end{cases}
\end{equation}
The minimizer $f_0$ is the $p$-\emph{wave scattering function}. 
(If $v(x) = +\infty$ for some $x$ we interpret $v(x)\ud x$ as a measure.
We suppress from the notation the dependence of $a$ and $f_0$ on the dimension $d$.)
\end{defn}

\noindent
The dimensionless parameter measuring the diluteness is then $a^d \rho$, with $\rho$ the particle density
given by $\rho = \partial_\mu \psi(\beta,\mu)$.
We are interested in a dilute limit, meaning that $a^d\rho \ll 1$. 
Moreover, we are considering temperatures $T\lesssim T_F \sim \rho^{2/d}$ meaning that $z \gtrsim 1$.
As mentioned in the introduction, small $z$ corresponds to a (high-temperature) classical gas.


We shall prove the following theorem.
\begin{thm}\label{thm.main}
Let $v\geq 0$ be radial and of compact support. 
If $d=1$ assume moreover that $\int \left(\abs{\partial f_0}^2 + \frac{1}{2} v f_0^2\right) \ud x < \infty$. 
For any $z_0 > 0$ there exists $c > 0$ such that if $a^d\rho_0 < c$ then, uniformly in $z = e^{\beta \mu} \geq z_0$,
we have the lower bound
\begin{equation*}
  \psi(\beta,\mu) 
    \geq \psi_0(\beta,\mu) 
    - 2\pi c_d \frac{-\Li_{d/2+1}(-z)}{(-\Li_{d/2}(-z))^{1+d/2}} a^d \rho_0^{2+2/d}
    \left[
    1 + \delta_d
    \right],
\end{equation*}
where $\rho_0=\partial_\mu \psi_0(\beta,\mu)$ is the particle density of the free gas, 
the constants $c_d$ are defined in \Cref{eqn.define.cd} and
\begin{equation}\label{eqn.errors.thm.main}
\abs{\delta_d} \leq \begin{cases}
C(a^3\rho_0)^{1/39} \abs{\log a^3\rho_0}^{12/13} & d=3,
\\
C(a^2\rho_0)^{1/5} \abs{\log a^2\rho_0}^{8/7} & d=2, 
\\
C(a\rho_0)^{1/7} \abs{\log a\rho_0}^{12/7} & d=1.
\end{cases}
\end{equation}
\end{thm}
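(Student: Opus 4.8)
The bound is variational, so the plan is to exhibit, for each $L$, a single trial state $\widetilde\varGamma$ and show $\liminf_{L\to\infty}P[\widetilde\varGamma]$ is at least the claimed right-hand side. The first step I would take is to rewrite the pressure functional relative to the free Gibbs state $\varGamma$ of \Cref{eqn.define.Gamma}. Since $\log\varGamma=-\log Z-\beta(\mcH-\mu\mcN)$, a short computation gives the identity
\begin{equation*}
P[\widetilde\varGamma]=\frac{\log Z}{\beta L^d}-\frac{1}{\beta L^d}S(\widetilde\varGamma\,\|\,\varGamma)-\frac{1}{L^d}\Tr_{\mcF}[\mcV\widetilde\varGamma],
\end{equation*}
where $S(\widetilde\varGamma\,\|\,\varGamma)=\Tr_{\mcF}[\widetilde\varGamma(\log\widetilde\varGamma-\log\varGamma)]\ge 0$ is the relative entropy and $\tfrac{1}{\beta L^d}\log Z\to\psi_0(\beta,\mu)$ by \Cref{eqn.formula.psi0}. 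The whole problem thus reduces to bounding from above the single quantity $\tfrac{1}{\beta L^d}S(\widetilde\varGamma\,\|\,\varGamma)+\tfrac{1}{L^d}\Tr_{\mcF}[\mcV\widetilde\varGamma]$, which packages together the interaction energy and the excess free energy (entropy loss plus kinetic cost) of building correlations.

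For the trial state I would take a Jastrow modification of the free Gibbs state. Let $f$ be the $p$-wave scattering function $f_0$ of \Cref{def.scattering.length}, truncated to equal $1$ outside a ball of radius $b$, with $b$ a length scale to be optimized in the regime $\rho_0^{-1/d}\gg b\gg a$, and set $F=\bigoplus_n\prod_{1\le i<j\le n}f(x_i-x_j)$, a bounded multiplication operator on $\mcF$. The trial state is $\widetilde\varGamma=F\varGamma F/\Tr_{\mcF}[F\varGamma F]$. Conjugation by $F$ suppresses the wave function on $\supp v$, so that $\Tr_{\mcF}[\mcV\widetilde\varGamma]$ feels $vf^2$ rather than $v$, while the gradient of $f$ supplies precisely the kinetic term $\abs{\nabla f}^2$ of \Cref{def.scattering.length}.

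The computational engine is the fermionic cluster expansion of Gaudin--Gillespie--Ripka in the rigorous form of \cite{Lauritsen.Seiringer.2023,Lauritsen.2023}. I would use it to expand both $\Tr_{\mcF}[\mcV\widetilde\varGamma]$ and $S(\widetilde\varGamma\,\|\,\varGamma)$ into sums of connected cluster diagrams built from the short-ranged factor $f^2-1$ (supported in $\abs{x}\le b$) and the free one-particle density matrix $\gamma(x,y)=(2\pi)^{-d}\int(1+e^{\beta(\abs{k}^2-\mu)})^{-1}e^{ik\cdot(x-y)}\,dk$. The decisive feature at positive temperature — and the reason the remainder is uniform in $L$, so that the thermodynamic limit can be taken directly rather than through the box localization of \cite{Lauritsen.Seiringer.2023} — is that $\gamma$ is now smooth with rapid off-diagonal decay, so the cluster sum converges geometrically once $a^d\rho_0$ is small and $z\ge z_0$. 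The leading contribution is the two-body diagram:
\begin{equation*}
\frac{1}{\beta L^d}S(\widetilde\varGamma\,\|\,\varGamma)+\frac{1}{L^d}\Tr_{\mcF}[\mcV\widetilde\varGamma]\approx\frac{1}{2}\int_{\R^d}\Big(2\abs{\nabla f(r)}^2+v(r)f(r)^2\Big)\rho_0^{(2)}(r)\,dr,
\end{equation*}
where $\rho_0^{(2)}(r)=\rho_0^2-\abs{\gamma(r)}^2$ is the free pair density. By the Pauli principle this vanishes quadratically, $\rho_0^{(2)}(r)=\tfrac12\rho_2\abs{r}^2+O(\abs{r}^4)$, and since the relative-coordinate integrand $2\abs{\nabla f}^2+vf^2=2(\abs{\nabla f}^2+\tfrac12 vf^2)$ is weighted by exactly the factor $\abs{r}^2$ of \Cref{def.scattering.length}, the $r$-integral collapses to $\tfrac12\rho_2\,c_d a^d$ up to truncation errors. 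Evaluating $\rho_2$ and $\rho_0$ as Fermi--Dirac integrals, both of which reduce to the polylogarithms $\Li_{d/2}(-z)$ and $\Li_{d/2+1}(-z)$, then yields the coefficient stated in the theorem.

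The main obstacle is the control of the expansion rather than the leading term: one must establish absolute convergence of the full cluster series with a remainder bounded uniformly in $L$, and separately bound from above the genuinely entropic part of $S(\widetilde\varGamma\,\|\,\varGamma)$, since the correlations lower the entropy and quantifying this loss for the non-quasi-free state $\widetilde\varGamma$ is delicate. I would organize all error estimates around the small parameters $a^d\rho_0$, $b\rho_0^{1/d}$ and $a/b$: the higher clusters and the replacement of $\rho_0^{(2)}$ by its quadratic Taylor polynomial each contribute explicit powers of these, and optimizing $b$ balances them against one another. Carrying out this bookkeeping produces the explicit error exponents $\delta_d$ of \Cref{eqn.errors.thm.main}, whose dimension-dependence reflects the different convergence rates of the expansion in $d=1,2,3$.
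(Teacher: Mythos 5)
Your proposal reproduces, in outline, the paper's proof of its \emph{high-temperature} result (\Cref{lem.pressure.high.temp}): the Jastrow-modified Gibbs state $F\varGamma F$, the rigorous GGR cluster expansion, the identification of the leading two-body contribution with $\int\left(\abs{\nabla f}^2+\tfrac12 vf^2\right)\rho^{(2)}$, and the polylogarithm coefficient coming from the quadratic vanishing of the free pair density. But there is a genuine gap at the point where the theorem claims uniformity in $z\ge z_0$. Your assertion that the cluster sum converges geometrically ``once $a^d\rho_0$ is small and $z\ge z_0$'' is not correct: convergence is governed by $\rho_0 I_g I_\gamma$, and $I_\gamma=\int\abs{\gamma^{(1)}}$ grows like $\zeta^{d/2}=(1+\abs{\log z})^{d/2}$ as $z\to\infty$ (\Cref{lem.bdd.gamma.Ig.Igamma}); the smoothness and integrable decay of $\gamma^{(1)}$ degrade as the temperature decreases, and at $T=0$ the kernel is not integrable uniformly in $L$ at all (which is exactly why the zero-temperature paper needed a box method). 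Consequently the expansion is only valid under the temperature-dependent condition $a^d\rho_0\,\zeta^{d/2}\abs{\log a^d\rho_0}<c$ (\Cref{lem.conv.GGR}), which, for fixed small $a^d\rho_0$, fails once $z$ is large enough. Since the theorem must cover all $z\ge z_0$, including $z\to\infty$, a single trial-state/cluster-expansion argument cannot close the proof.

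The paper fills this hole with a second, independent argument for the low-temperature regime (\Cref{lem.pressure.low.temp}): write $\psi$ and $\psi_0$ as Legendre transforms of the free energy densities, bound the interacting free energy from above by the ground-state energy density, and invoke the zero-temperature upper bounds of \cite{Lauritsen.Seiringer.2023}; the price is an error of size $(a^d\rho_0)^{-1}\zeta^{-2}$, which is small precisely when $\zeta$ is large. \Cref{thm.main} then follows by using the low-temperature bound for $\zeta$ above an $(a^d\rho_0)$-dependent threshold and the cluster-expansion bound below it; the peculiar exponents in \Cref{eqn.errors.thm.main}, e.g.\ $(a^3\rho_0)^{1/39}\abs{\log a^3\rho_0}^{12/13}$, arise exactly from optimizing this crossover and cannot be produced by optimizing only over $b$ and the parameters $a^d\rho_0$, $b\rho_0^{1/d}$, $a/b$ as you propose. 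Two smaller points: your truncated $f$ should be rescaled by $(1-a^d/b^d)^{-1}$ inside the ball, as in \Cref{eqn.define.f}, since a bare truncation of $f_0$ is discontinuous at $\abs{x}=b$ and would produce a singular kinetic term; and your relative-entropy identity is correct and equivalent to the paper's cancellation of the $\pm\partial_\beta\log Z_J$ terms, but the real work is the upper bound on $S(\widetilde\varGamma\,\|\,\varGamma)$ for the non-quasi-free state, which in the paper rests on the isospectrality of $\varGamma_J$ with $\frac{Z}{Z_J}\varGamma^{1/2}F^2\varGamma^{1/2}$ and operator monotonicity of the logarithm.
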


\noindent
Here $\Li_s$ denotes the \emph{polylogarithm}. It satisfies 
\cite[Equation 25.12.16]{dlmf}
\begin{equation}\label{eqn.polylog}
-\Li_{s}(-e^x) = \frac{1}{\Gamma(s)} \int_0^\infty \frac{t^{s-1}}{e^{t-x}+1} \ud t
\end{equation}
with $\Gamma$ the Gamma-function.

We expect that the lower bound of \Cref{thm.main} is in fact an equality (with a potentially different bound on the error-term).
It remains an open problem to prove this.

\begin{remark}
For better comparison with the zero-temperature result in \cite{Lauritsen.Seiringer.2023}, we find it convenient 
to write the correction to the pressure of the free gas in terms of the particle density (of the free gas) $\rho_0$.
The latter is given explicitly as 
\begin{equation}
\rho_0 =  - \frac{1}{(4\pi \beta)^{d/2}}  \Li_{d/2}(-z)
    \label{eqn.rho1.free.TL}
\end{equation}
This follows from an elementary computation, which we give in \Cref{lem.prop.rho0} below.
\end{remark}

To leading order $\rho \simeq \rho_0$. More precisely

\begin{cor}\label{prop.rho=rho0}
Under the same assumptions as in \Cref{thm.main} we have for the particle density $\rho = \partial_\mu \psi(\beta,\mu)$
\begin{equation*}
\rho = \rho_0\left[1 + O( (a^d\rho_0)^{1/2})\right].
\end{equation*}
\end{cor}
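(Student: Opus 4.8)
The plan is to derive the density estimate from the pressure bound of \Cref{thm.main} together with the convexity of the pressure in $\mu$. The first ingredient is the complementary trivial upper bound: since $v\geq 0$ we have $\mcV\geq 0$, hence $P[\varGamma]\leq P_0[\varGamma]$ for every state $\varGamma$, and therefore $\psi\leq\psi_0$. Abbreviating the correction in \Cref{thm.main} by
\begin{equation*}
E(\mu) = 2\pi c_d\,\frac{-\Li_{d/2+1}(-z)}{(-\Li_{d/2}(-z))^{1+d/2}}\,a^d\rho_0^{2+2/d}\,[1+\delta_d],
\end{equation*}
I then have $0\leq\psi_0(\beta,\mu)-\psi(\beta,\mu)\leq E(\mu)$ for all $\mu$ with $z\geq z_0$ and $a^d\rho_0$ small.

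Both $\psi$ and $\psi_0$ are convex in $\mu$, being (limits of) suprema of the affine-in-$\mu$ functionals $P[\varGamma]$ and $P_0[\varGamma]$. For a convex function $f$ and $t>0$ one has $\frac{f(\mu)-f(\mu-t)}{t}\leq\partial_\mu f(\mu)\leq\frac{f(\mu+t)-f(\mu)}{t}$. Applying both inequalities to $\psi$, inserting $\psi_0-E\leq\psi\leq\psi_0$ at the relevant points (so that $E$ enters only at $\mu$), and using the elementary convexity estimate $\left|\frac{\psi_0(\mu\pm t)-\psi_0(\mu)}{\pm t}-\rho_0(\mu)\right|\leq\frac t2\sup_{|s-\mu|\leq t}\partial_\mu\rho_0(s)$, I obtain
\begin{equation*}
|\rho(\mu)-\rho_0(\mu)|\leq\frac t2\sup_{|s-\mu|\leq t}\partial_\mu\rho_0(s)+\frac{E(\mu)}{t}.
\end{equation*}
Optimizing over $t>0$ gives $|\rho-\rho_0|\lesssim\sqrt{E\,\partial_\mu\rho_0}$, once the supremum is shown to be comparable to $\partial_\mu\rho_0(\mu)$.

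It remains to estimate the two factors as functions of $z$. Differentiating \eqref{eqn.rho1.free.TL} via $\partial_\mu\Li_s(-e^{\beta\mu})=\beta\Li_{s-1}(-e^{\beta\mu})$ yields $\partial_\mu\rho_0=(4\pi)^{-d/2}\beta^{1-d/2}\,(-\Li_{d/2-1}(-z))$, and hence the key cancellation of $\beta$,
\begin{equation*}
\frac{\partial_\mu\rho_0}{\rho_0^{\,1-2/d}}=\frac{1}{4\pi}\,\bigl(-\Li_{d/2-1}(-z)\bigr)\bigl(-\Li_{d/2}(-z)\bigr)^{2/d-1},
\end{equation*}
a function of $z$ alone. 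The large-$z$ asymptotics $-\Li_s(-z)\sim(\log z)^s/\Gamma(s+1)$ show that the powers of $\log z$ match, so this ratio converges to a finite constant as $z\to\infty$ and, being continuous and positive, is bounded on $[z_0,\infty)$; thus $\partial_\mu\rho_0\lesssim\rho_0^{1-2/d}$ uniformly in $z\geq z_0$. The same asymptotics bound the coefficient of $E$, giving $E\lesssim a^d\rho_0^{2+2/d}$. Substituting, $\sqrt{E\,\partial_\mu\rho_0}\lesssim\bigl(a^d\rho_0^{2+2/d}\cdot\rho_0^{1-2/d}\bigr)^{1/2}=(a^d\rho_0)^{1/2}\rho_0$, which is the assertion.

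Self-consistency of the optimization is then immediate: the optimal shift obeys $t\,\partial_\mu\rho_0/\rho_0\sim(a^d\rho_0)^{1/2}\ll1$, so $\partial_\mu\rho_0$ varies by at most a constant factor on $[\mu-t,\mu+t]$, justifying the replacement of the supremum by $\partial_\mu\rho_0(\mu)$; note that $E$ appears only at $\mu$, so no control of $E$ at the shifted points is needed. The only genuine (if mild) obstacle is establishing the uniformity in $z\in[z_0,\infty)$ used above, which rests precisely on the matching of the powers of $\log z$ in the polylogarithmic asymptotics of $\partial_\mu\rho_0/\rho_0^{1-2/d}$ and of the coefficient of $E$.
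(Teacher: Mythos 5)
Your proof is correct and follows essentially the same route as the paper's: convexity of $\psi$ in $\mu$, difference quotients controlled by combining the trivial bound $\psi\leq\psi_0$ (from $v\geq 0$) with the lower bound of \Cref{thm.main}, and optimization over the shift, with the explicit formula \eqref{eqn.rho1.free.TL} giving $\partial_\mu\rho_0\lesssim\rho_0^{1-2/d}$ uniformly in $z\geq z_0$. The paper's version is terser (it optimizes in $\eps$ and cites the explicit formula for $\rho_0$ without spelling out the polylogarithm cancellation or the self-consistency of the supremum), but the argument is the same.
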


We shall give the proof at the end of this section.
In particular the conditions of small $a^d\rho$ and of small $a^d\rho_0$ are equivalent. 
Moreover, the error-terms of \Cref{thm.main} can equally well be written with $\rho_0$ replaced by $\rho$.

\begin{remark}
The additional assumption on $v$ in dimension $d=1$ is discussed in \cite[Remark 1.10]{Lauritsen.Seiringer.2023}. 
If $v$ is either smooth or has a hard core (meaning that $v(x)=+\infty$ for $|x|\leq a_0$ for some $a_0 > 0$) 
this assumption is satisfied.
\end{remark}

\begin{remark}
The term of order $a^d \rho_0^{2+2/d}$ depends on the temperature. 
This is different to the setting of spin-$\frac{1}{2}$ fermions,
where the analogous term (in $3$ dimensions) is $2\pi a \rho_0^2$ \cite{Seiringer.2006} uniformly in the temperature.
That the term of order $a^d \rho_0^{2+2/d}$ should depend on the temperature may be heuristically understood as follows:
This term arises from the fact that the two-body density vanishes quadratically for incident particles. 
The rate at which it vanishes depends on the exact state, and thus the temperature. 
Concretely, the two-particle density of the free gas satisfies 
\begin{equation}
\begin{aligned}
\rho^{(2)}(x_1,x_2)
  & = 2\pi \frac{- \Li_{d/2+1}(-z)}{(-\Li_{d/2}(-z))^{1+2/d}} \rho_0^{2+2/d} |x_1-x_2|^2
    \left[1 
    +  O\left(\rho_0^{2/d}|x_1-x_2|^2\right)\right],
\end{aligned}
\label{eqn.rho2.free.TL}
\end{equation}
where $O\left(\rho_0^{2/d}|x_1-x_2|^2\right)$ is understood as being bounded by $C \rho_0^{2/d}|x_1-x_2|^2$ uniformly.
This follows from an elementary computation, which we give in \Cref{lem.prop.rho0} below.

In the low-temperature limit $z \to \infty$
we recover the zero-temperature constants in the terms of order $a^d \rho_0^{2+2/d}$.
Namely, we claim that
\begin{equation}\label{eqn.constant.low.temperature}
2\pi c_d \frac{- \Li_{d/2+1}(-z)}{(-\Li_{d/2}(-z))^{1+2/d}}
= 
\begin{cases}
\begin{aligned}
& \textstyle \frac{12\pi}{5}(6\pi^2)^{2/3} \hspace*{-1em} && + \textstyle O\left((\log z)^{-2}\right) & d=3,
\\
& \textstyle 4\pi^2 && + \textstyle O\left((\log z)^{-2}\right) & d=2,
\\
& \textstyle \frac{2\pi^2}{3} && + \textstyle O\left((\log z)^{-2}\right) & d=1,
\end{aligned}
\end{cases} 
\quad 
\textnormal{as } 
z\to \infty.
\end{equation}
To see this write (following \cite{Wood.1992})
\begin{multline*}
-\Li_{s}(-e^x) 
  = \frac{1}{\Gamma(s)} \int_0^\infty \frac{t^{s-1}}{e^{t-x}+1} \ud t
  = \frac{1}{\Gamma(s)} \left[\int_0^x t^{s-1} \ud t - \int_0^x \frac{t^{s-1}}{e^{x-t}+1} \ud t + \int_x^\infty \frac{t^{s-1}}{e^{t-x}+1} \ud t\right]
	\\
	= \frac{x^{s}}{\Gamma(s+1)} 
		- \frac{1}{\Gamma(s)} \int_0^x \frac{(x-u)^{s-1} - (x+u)^{s-1}}{e^u + 1} \ud u 
		- \frac{1}{\Gamma(s)} \int_x^\infty \frac{(x+u)^{s-1}}{e^u+1} \ud u
\end{multline*}
where we changed variables $t= x\pm u$. 
The middle  and last integrals can easily be bounded as $O(x^{s-2})$ and $O(x^s e^{-x})$ respectively. 
Thus 
\begin{equation}\label{eqn.asym.polylog}
\begin{aligned}
-\Li_{s}(-e^x) 
	& 
  = \frac{x^{s}}{\Gamma(s+1)} 
		+ O(x^{s-2}),
\end{aligned}
\end{equation}
and \Cref{eqn.constant.low.temperature} follows.
\end{remark}

\begin{remark}
The error bounds in \Cref{thm.main} are uniform in $z$. 
They arise as the worst cases of two types of bounds, one good for $z\sim 1$ and one good for $z \gg 1$. 
In particular, for concrete values of $z$, the error bounds can be improved. 
See \Cref{lem.pressure.low.temp,lem.pressure.high.temp} below.
\end{remark}

Finally we give the 
\begin{proof}[{Proof of \Cref{prop.rho=rho0}}]
Note that $\psi(\beta,\mu)$ is a convex function of $\mu$. Thus we may bound its derivative by any difference quotient.
More precisely for any $\eps > 0$ we have 
\begin{equation*}
\rho = \partial_\mu \psi(\beta,\mu) \leq \frac{\psi(\beta,\mu+\eps) - \psi(\beta,\mu)}{\eps}.
\end{equation*}
Using the trivial upper bound $\psi(\beta,\mu+\eps)\leq \psi_0(\beta,\mu+\eps)$ (which is a consequence of the assumed non-negativity of the interaction potential $v$) and the lower bound of \Cref{thm.main}
we conclude that 
\begin{equation*}
\rho \leq \frac{\psi_0(\beta,\mu+\eps) - \psi_0(\beta,\mu)}{\eps} + C a^d\rho_0^{2+2/d} \eps^{-1}
  = \rho_0 + O\left(\abs{\partial_\mu^2 \psi_0} \eps\right) + O\left(a^d\rho_0^{2+2/d} \eps^{-1}\right).
\end{equation*}
Using the explicit formula for $\rho_0=\partial_\mu \psi_0$ and optimising in $\eps$ we get that $\rho \leq \rho_0(1 + O( (a^d\rho_0)^{1/2}))$.
For $\eps < 0$ the argument is analogous only the direction of the inequalities is reversed.
\end{proof}

\subsection{Strategy of proof}
To prove \Cref{thm.main} we distinguish two cases. That of a ``low-temperature'' setting and that of a ``high-temperature'' setting. 
For sufficiently small temperatures we compare to the ground state energy studied in \cite{Lauritsen.Seiringer.2023}.
For larger temperatures we consider a specific trial state $\varGamma_J$ of Jastrow-type (defined in \Cref{eqn.def.GammaJ} below) 
and compute the pressure functional evaluated on this trial state.
For these computations we use a rigorous implementation \cite{Lauritsen.Seiringer.2023,Lauritsen.2023} of the formal
cluster expansion of Gaudin--Gillepie--Ripka \cite{Gaudin.Gillespie.ea.1971}.

Temperature-dependent errors naturally arise as powers of 
$\zeta := 1 + \abs{\log z}.$
We shall prove the following propositions.
\begin{prop}\label{lem.pressure.low.temp}
Let $v\geq 0$ be radial and of compact support. 
If $d=1$ assume moreover that $\int \left(\abs{\partial f_0}^2 + \frac{1}{2} v f_0^2\right) \ud x < \infty$. 
Then for sufficiently small $a^d\rho_0$ and large $z = e^{\beta\mu}$ we have 
\begin{equation}\label{eqn.pressure.low.temp}
\psi(\beta,\mu) 
  \geq \psi_0(\beta,\mu) - 2\pi c_d \frac{-\Li_{d/2+1}(-z)}{(-\Li_{d/2}(-z))^{1+2/d}} a^d \rho_0^{2+2/d} 
  \left[1 + \delta_d\right]
\end{equation}
where $\rho_0$ is the particle density of the free gas, $c_d$ is defined in \Cref{eqn.define.cd} and 
\begin{equation}\label{eqn.errors.zero.temp}
\abs{\delta_d}
  \lesssim
  \begin{cases}
  \begin{aligned}
  &a^2\rho_0^{2/3} && + (a^3\rho_0)^{-1} \zeta^{-2} & d=3,
  \\
  &a^2\rho_0 \abs{\log a^2\rho_0}^2 \hspace*{-1em}&& + (a^2\rho_0)^{-1} \zeta^{-2}  & d=2,
  \\
  &(a\rho_0)^{13/17} && + (a\rho_0)^{-1} \zeta^{-2}  & d=1.
  \end{aligned}
  \end{cases}
\end{equation}
\end{prop}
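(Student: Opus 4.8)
The plan is to reduce the pressure lower bound to an upper bound on the interacting free energy and then to control that free energy by the interacting ground-state energy, which is supplied by \cite{Lauritsen.Seiringer.2023}. Since $\psi(\beta,\mu)=\sup_\varGamma P[\varGamma]$, I would first choose $\varGamma$ to be the canonical Gibbs state of $H_N$ at inverse temperature $\beta$ on the single sector $N=\lfloor \rho_0 L^d\rfloor$; this gives, after taking $L\to\infty$, the elementary bound $\psi(\beta,\mu)\geq \mu\rho_0 - f(\beta,\rho_0)$, where $f(\beta,\rho)$ denotes the interacting free energy density. The free gas obeys the matching identity $\psi_0(\beta,\mu)=\mu\rho_0 - f_0(\beta,\rho_0)$ at the same density, because $\rho_0=\partial_\mu\psi_0$ is exactly the free-gas optimiser. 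Hence \eqref{eqn.pressure.low.temp} follows once I establish
\[
 f(\beta,\rho_0)\leq f_0(\beta,\rho_0) + 2\pi c_d \frac{-\Li_{d/2+1}(-z)}{(-\Li_{d/2}(-z))^{1+2/d}}\, a^d\rho_0^{2+2/d}\,[1+\delta_d].
\]

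For the interaction correction I would compare the thermal problem to the ground state, i.e.\ bound
\[
 f(\beta,\rho_0)-e(\rho_0)\ \leq\ f_0(\beta,\rho_0)-e_{\mathrm{free}}(\rho_0)\ +\ \mathrm{error},
\]
where $e$ and $e_{\mathrm{free}}$ are the interacting and free ground-state energy densities. Combined with the zero-temperature upper bound $e(\rho_0)\leq e_{\mathrm{free}}(\rho_0)+C_0\, a^d\rho_0^{2+2/d}(1+o(1))$ of \cite{Lauritsen.Seiringer.2023}, this yields the interaction correction to $f$ with the \emph{zero-temperature} constant $C_0$. That constant is precisely the $z\to\infty$ limit of the coefficient in the displayed bound, as recorded in \eqref{eqn.constant.low.temperature}; the discrepancy between $C_0$ and the full $z$-dependent coefficient is $O(\zeta^{-2})$ relatively and is harmlessly absorbed into $\delta_d$. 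If the comparison above is realised through a trial state --- the Jastrow function of \cite{Lauritsen.Seiringer.2023} excited by the free thermal weights --- then the energetic half of the estimate is computed against the free two-particle density, whose quadratic vanishing \eqref{eqn.rho2.free.TL} together with the variational characterisation of the scattering length in \Cref{def.scattering.length} reproduces $C_0$ and the constant $c_d$ of \eqref{eqn.define.cd}.

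Finally I would turn the comparison into the quantitative error in \eqref{eqn.errors.zero.temp}. By the polylogarithm asymptotics \eqref{eqn.asym.polylog} the free-gas thermal correction $f_0(\beta,\rho_0)-e_{\mathrm{free}}(\rho_0)$ is of absolute order $\rho_0^{1+2/d}\zeta^{-2}$; the error committed in comparing the interacting thermal correction to the free one is of the same absolute order, so measured against the main interaction term $a^d\rho_0^{2+2/d}$ it becomes the relative error $(a^d\rho_0)^{-1}\zeta^{-2}$ appearing in \eqref{eqn.errors.zero.temp}, the remaining diluteness errors being inherited from \cite{Lauritsen.Seiringer.2023}. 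The hard part --- and the reason the bound is only useful once $\zeta^2\gg(a^d\rho_0)^{-1}$ --- is the entropic content of the comparison: one must show that the interacting thermal correction $f(\beta,\rho_0)-e(\rho_0)$ exceeds the free-gas one $f_0(\beta,\rho_0)-e_{\mathrm{free}}(\rho_0)$ by at most an error of the same order, uniformly in the particle number $N\sim\rho_0 L^d$. Equivalently, dressing the free thermal state by the short-range (possibly hard-core) correlation factor must cost an entropy that is controlled uniformly in $N$, which is exactly the estimate that degrades as $z\to\infty$ and forces the separate treatment of the low- and high-temperature regimes.
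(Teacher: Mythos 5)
Your outer skeleton coincides with the paper's: the Legendre-transform reduction $\psi(\beta,\mu)\geq\rho_0\mu-f(\beta,\rho_0)$ and $\psi_0(\beta,\mu)=\rho_0\mu-f_0(\beta,\rho_0)$, the ground-state energy upper bound of \cite{Lauritsen.Seiringer.2023}, and the polylogarithm asymptotics which supply both the $O(\zeta^{-2})$ adjustment of the coefficient (\Cref{eqn.constant.low.temperature}) and the relative error $(a^d\rho_0)^{-1}\zeta^{-2}$ of \Cref{eqn.errors.zero.temp}. The gap is in the middle: you declare the comparison
\[
f(\beta,\rho_0)-e(\rho_0)\ \leq\ f_0(\beta,\rho_0)-e_{\mathrm{free}}(\rho_0)+O\bigl(\rho_0^{1+2/d}\zeta^{-2}\bigr)
\]
to be ``the hard part,'' requiring a uniform-in-$N$ entropy estimate for a Jastrow-dressed thermal state, and you never prove it. No such estimate is needed, and recognising this is exactly the point of the proposition. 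The interacting ground state is an admissible trial state for the canonical free-energy functional and has zero entropy, so $f(\beta,\rho_0)\leq e(\rho_0)$ trivially; and your own polylogarithm computation gives $e_{\mathrm{free}}(\rho_0)-f_0(\beta,\rho_0)=O(\rho_0^{1+2/d}\zeta^{-2})$. These two facts already imply the displayed comparison, since its right-hand side is then $\geq 0\geq f-e$. This is precisely the paper's proof: $f\leq e\leq e_{\mathrm{free}}+c_{0,d}\,a^d\rho_0^{2+2/d}[1+\delta_d]$ followed by $e_{\mathrm{free}}=f_0+O(\rho_0^{1+2/d}\zeta^{-2})$; no thermal trial state, no cluster expansion, and no entropy comparison between the interacting and free systems enters the low-temperature regime at all.

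Two further corrections. Your ``equivalently'' --- that the comparison amounts to controlling the entropy cost of dressing the free thermal state by the correlation factor --- is not an equivalence: that construction is one (unnecessarily strong) way to bound $f$ from above, and it is the content of the genuinely hard \Cref{lem.pressure.high.temp}, not of this proposition. And your closing claim that this estimate ``degrades as $z\to\infty$ and forces the separate treatment of the two regimes'' has the logic backwards: the bound of \Cref{lem.pressure.low.temp} \emph{improves} as $z\to\infty$ (its error is $(a^d\rho_0)^{-1}\zeta^{-2}$) and degrades as $\zeta$ decreases, while it is the Jastrow/GGR analysis of \Cref{lem.pressure.high.temp} whose errors grow with $\zeta$; the two bounds are complementary, and they are glued together in the proof of \Cref{thm.main} for exactly this reason.
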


\begin{prop}\label{lem.pressure.high.temp}
Let $v\geq 0$ be radial and of compact support. 
If $d=1$ assume moreover that $\int \left(\abs{\partial f_0}^2 + \frac{1}{2} v f_0^2\right) \ud x < \infty$. 
Then for $z = e^{\beta\mu}$ satisfying $z \gtrsim 1$ there exists a constant $c>0$ such that if $a^d\rho_0 < c$ and 
$a^d \rho_0 \zeta^{d/2} \abs{\log a^d\rho_0} < c$ then 
\begin{equation*}
  \psi(\beta,\mu) 
    \geq \psi_0(\beta,\mu) 
    - 2\pi c_d \frac{-\Li_{d/2+1}(-z)}{(-\Li_{d/2}(-z))^{1+d/2}} a^d \rho_0^{2+2/d}
    \left[
    1 + \delta_d
    \right],
\end{equation*}
where $\rho_0$ is the particle density of the free gas, $c_d$ is defined in \Cref{eqn.define.cd}
and 
\begin{equation}\label{eqn.delta.errors.high.temp}
\begin{aligned}
  \abs{\delta_d}
    & \lesssim 
    \begin{cases}
    (a^3\rho_0)^{6/15} \zeta^{-3/5} 
  + (a^3\rho_0) \zeta^{1/2} \abs{\log a^3\rho_0}^2 
  + (a^3\rho_0)^{7/3} \zeta^{9/2} \abs{\log a^3\rho_0}^3
  & d=3,
  \\
  (a^2\rho_0)^{1/2} \zeta^{-1/2} 
  + (a^2\rho_0) \zeta \abs{\log a^2\rho_0}
  + (a^2\rho_0)^2 \zeta^{3} \abs{\log a^2\rho_0}^{3},
  & d=2,
  \\
  (a\rho_0)^{1/2} \abs{\log a\rho_0}^{1/2} 
  + a\rho_0 \zeta^{3/2} \abs{\log a\rho_0}^3
  & d=1.
    \end{cases}
\end{aligned}
\end{equation}
\end{prop}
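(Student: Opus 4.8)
The plan is to exploit the variational formula $\psi(\beta,\mu)=\lim_{L\to\infty}\sup_{\varGamma}P[\varGamma]$: since every admissible state yields a lower bound on the pressure, it suffices to evaluate the pressure functional on the Jastrow trial state $\varGamma_J$ of \Cref{eqn.def.GammaJ}. Writing $\varGamma$ for the free Gibbs state of \Cref{eqn.define.Gamma} and using $\log\varGamma=-\log Z-\beta(\mcH-\mu\mcN)$ together with $\frac{1}{\beta L^d}\log Z\to\psi_0(\beta,\mu)$, I would first record the identity
\[
  P[\varGamma_J]=\psi_0(\beta,\mu)-\frac{1}{\beta L^d}\,S(\varGamma_J\,\|\,\varGamma)-\frac{1}{L^d}\Tr_{\mcF}\!\left[\mcV\varGamma_J\right],
\]
where $S(\,\cdot\,\|\,\cdot\,)$ denotes relative entropy. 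This reduces the problem to two estimates: an upper bound on the interaction energy $\Tr_{\mcF}[\mcV\varGamma_J]$, and an upper bound on the relative entropy $S(\varGamma_J\|\varGamma)$, the latter encoding the combined kinetic-energy increase and entropy decrease produced by the correlations.

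The state $\varGamma_J$ is built by conjugating $\varGamma$ with a Jastrow factor $\prod_{i<j}f(x_i-x_j)$, where $f$ is the $p$-wave scattering function $f_0$ of \Cref{def.scattering.length} rescaled to a correlation length $b$ and set equal to $1$ beyond $b$; the scale $b$ is a free parameter optimized at the very end. The engine of the computation is the rigorous Gaudin--Gillespie--Ripka cluster expansion of \cite{Lauritsen.Seiringer.2023,Lauritsen.2023}, which expands $\log\Tr_{\mcF}[\ldots]$ (hence the entropy) and the reduced densities of $\varGamma_J$ into a series of cluster integrals assembled from the one-particle density of the free Gibbs state. Crucially, at positive temperature this one-particle density is the smooth Fermi--Dirac function, so the convergence estimates are uniform in $L$ and I can take the thermodynamic limit directly rather than localizing into finite boxes.

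Only the two-body cluster contributes at leading order. There the pair density of $\varGamma_J$ is $f(x_1-x_2)^2$ times that of the free gas, so the interaction energy per unit volume is $\frac12\int v\,f^2\,\rho^{(2)}$, while the two-body part of the relative entropy reduces to the kinetic cost $\int\abs{\nabla f}^2\rho^{(2)}$ (the differentiation acting in both particle coordinates). Substituting the quadratic behaviour $\rho^{(2)}(x_1,x_2)\simeq 2\pi\frac{-\Li_{d/2+1}(-z)}{(-\Li_{d/2}(-z))^{1+2/d}}\rho_0^{2+2/d}\abs{x_1-x_2}^2$ from \Cref{eqn.rho2.free.TL} and adding, the two contributions assemble into
\[
  2\pi\,\frac{-\Li_{d/2+1}(-z)}{(-\Li_{d/2}(-z))^{1+2/d}}\,\rho_0^{2+2/d}\int\left(\abs{\nabla f}^2+\tfrac12\,v f^2\right)\abs{x}^2\ud x ,
\]
and the variational definition of the $p$-wave scattering length identifies the integral with $c_d a^d$, up to corrections from the finite cutoff $b$. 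This produces exactly the claimed leading coefficient.

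The hard part will be the error analysis, and above all the uniform-in-$L$ convergence of the cluster expansion and the precise control of the relative entropy, which is a nonlinear functional of the state not visible in the naive two-body heuristic. Three error families must be tracked and balanced through the choice of $b$: the replacement of $\rho^{(2)}$ by its leading quadratic term, of relative size $\rho_0^{2/d}b^2$; the finite-range error from using $f$ rather than $f_0$; and the higher-order ($\ge 3$-body) cluster terms, suppressed by additional powers of the diluteness $a^d\rho_0$ but amplified by powers of the temperature parameter $\zeta=1+\abs{\log z}$. Requiring the cluster series to converge is precisely what forces the smallness hypothesis $a^d\rho_0\,\zeta^{d/2}\abs{\log a^d\rho_0}<c$, and optimizing $b$ against these competing contributions yields the error bounds on $\delta_d$ recorded in \Cref{eqn.delta.errors.high.temp}.
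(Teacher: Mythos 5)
Your skeleton is the same as the paper's: take the Jastrow state $\varGamma_J$ of \Cref{eqn.def.GammaJ}, compute $Z_J$ and the reduced densities $\rho^{(2)}_J,\rho^{(3)}_J$ with the Gaudin--Gillespie--Ripka expansion (whose uniformity in $L$ at positive temperature is, as you say, what allows a direct thermodynamic limit), extract the leading term by pairing $\int\left(\abs{\nabla f}^2+\tfrac12 vf^2\right)\abs{x}^2\ud x=c_da^d\left(1+O(a^d/b^d)\right)$ with the quadratic behaviour of $\rho^{(2)}$, and optimize over $b$; your relative-entropy identity is also algebraically correct. The genuine gap is the step you yourself flag as the hard part and then leave unaddressed: an upper bound on $S(\varGamma_J\,\|\,\varGamma)$ requires an upper bound on $\Tr[\varGamma_J\log\varGamma_J]$, a nonlinear spectral functional of $\varGamma_J=\frac{Z}{Z_J}F\varGamma F$. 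The GGR expansion cannot produce this: it yields $Z_J$ and reduced densities, but the relative entropy is not a functional of finitely many reduced densities, so your assertion that its two-body part ``reduces to the kinetic cost $\int\abs{\nabla f}^2\rho^{(2)}$'' is precisely the statement that needs proof, and no tool named in your plan can deliver it.

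The paper closes this step with two computations whose combination is the actual mechanism. First (\Cref{eqn.entropy}): $\varGamma_J$ is isospectral to $\frac{Z}{Z_J}\varGamma^{1/2}F^2\varGamma^{1/2}$, and $F\leq 1$ gives $\varGamma^{1/2}F^2\varGamma^{1/2}\leq\varGamma$, so operator monotonicity of the logarithm together with $(\mcH-\mu\mcN)Z\varGamma=-\partial_\beta(Z\varGamma)$ yields $-\frac{1}{\beta}S(\varGamma_J)\leq-\frac{1}{\beta}\log Z_J+\partial_\beta\log Z_J$. Second (\Cref{eqn.energy}): differentiating the kernel $F_n\varGamma_nF_n$ shows that $\expect{\mcH-\mu\mcN+\mcV}_J$ equals $-\partial_\beta\log Z_J$ plus the two-body integral $\iint\left[\abs{\nabla f_{12}/f_{12}}^2+\frac12 v_{12}\right]\rho^{(2)}_J$ plus the three-body cross term $\iiint\frac{\nabla f_{12}\nabla f_{13}}{f_{12}f_{13}}\rho^{(3)}_J$. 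In the pressure functional the two occurrences of $\partial_\beta\log Z_J$ --- each individually difficult to control --- cancel exactly, leaving $\frac{1}{\beta}\log Z_J=\frac{1}{\beta}\log Z+\eps_Z$ and the gradient integrals; only at this point do the GGR formulas (\Cref{lem.conv.GGR}) and the error bounds of \Cref{lem.bdd.eps.errors} take over. This both proves your heuristic for the relative entropy and locates the three-body contribution correctly: it is not a higher-order cluster remainder but a cross term in $\abs{\nabla F}^2$, hence a separate error family from the three you list. With the isospectrality/operator-monotonicity argument (or an equivalent control of $\log\varGamma_J$) inserted, your plan becomes the paper's proof.
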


\noindent
\Cref{lem.pressure.low.temp} is a simple corollary of \cite[Theorems 1.3, 1.7, 1.9]{Lauritsen.Seiringer.2023},
extending the result to small positive temperatures. 
\Cref{lem.pressure.high.temp} is the main new result of this paper. 
Most of the rest of the paper is concerned with the proof of \Cref{lem.pressure.high.temp}.
\Cref{thm.main} is an immediate consequence:
\begin{proof}[Proof of \Cref{thm.main}]
We use the lower bound in \Cref{lem.pressure.low.temp} for 
\begin{equation*}
 \zeta
 \gtrsim \begin{cases}
 (a^3\rho_0)^{-20/39} \abs{\log a^3\rho_0}^{-6/13}
 & d=3,
 \\
 (a^2\rho_0)^{-3/5} \abs{\log a^2\rho_0}^{-2/5}
 & d=2,
 \\
 (a\rho_0)^{-4/7} \abs{\log a\rho_0}^{-6/7}
 & d=1
 \end{cases}
\end{equation*}
and the lower bound in \Cref{lem.pressure.high.temp} otherwise. 
\Cref{thm.main} follows.
\end{proof}

\begin{remark}
We expect that with the method presented here one could improve the error bounds in \Cref{lem.pressure.high.temp} (and consequently \Cref{thm.main}) 
slightly by computing the values of more small diagrams in the Gaudin--Gillepie--Ripka expansion precisely. 
See also \cite[Remark 1.5]{Lauritsen.Seiringer.2023}.
This is similar to what is done in \cite{Lauritsen.2023,Basti.Cenatiempo.ea.2022a}. 

More precisely we expect that by doing so one could improve the bounds in \Cref{lem.pressure.high.temp} to
\begin{equation}\label{eqn.optimal.bdds}
	\abs{\delta_d}
	\lesssim 
	O
	\left(
	\begin{cases}
    (a^3\rho_0)^{6/15} \zeta^{-3/5} 
  & d=3
  \\
  (a^2\rho_0)^{1/2} \zeta^{-1/2} 
  & d=2
  \\
  (a\rho_0)^{1/2} \abs{\log a\rho_0}^{1/2} 
  & d=1
    \end{cases}
    \right)
    + O \left((a^d\rho_0)^{-2/d} \left(a^d\rho_0 \zeta^{d/2} \abs{\log a^d\rho_0}\right)^{n}\right)
\end{equation}
for any $n$. 
This would then propagate to better error terms in \Cref{thm.main}.
More precisely, by using the bound in \Cref{lem.pressure.low.temp} for $\zeta \geq \zeta_0$ and the bound in 
\Cref{lem.pressure.high.temp} with error improved as in \Cref{eqn.optimal.bdds} otherwise and optimising in
$\zeta_0$ one would improve the error bound in \Cref{thm.main} to 
\begin{equation*}
	\abs{\delta_d}
		\lesssim 
		\begin{cases}
		C_\eps (a^3\rho_0)^{1/3-\eps} 
		& d=3,
		\\
		(a^2\rho_0)^{1/2} 
		& d=2,
		\\
		(a\rho_0)^{1/2} \abs{\log a\rho_0}^{1/2} 
		& d = 1
		\end{cases}
\end{equation*}
for any $\eps > 0$, where $C_\eps$ depends on $\eps$, by taking $n$ sufficiently large in \Cref{eqn.optimal.bdds}.


The first terms in \Cref{eqn.optimal.bdds} come from the precise evaluation of certain small diagrams.
In dimension $d=2,3$ one should not expect to get better bounds than this using the method presented here.
In dimension $d=1$ one might be able to do a more precise analysis, see \Cref{rmk.compare.zero.temp.d=1}, and thus improve the bound.
\end{remark}

The proof of \Cref{lem.pressure.low.temp} 
will be given in 
in \Cref{sec.low.temp}. 
It is mostly independent of the rest of the paper
 (\Cref{sec.preliminaries,sec.GGR.expansion,sec.calc.terms}) 
which is devoted to the proof of \Cref{lem.pressure.high.temp}.

\paragraph*{Structure of the paper:}
First, in \Cref{sec.low.temp} we give the proof of \Cref{lem.pressure.low.temp}.
Then, in \Cref{sec.preliminaries} we define the trial state $\varGamma_J$ and give some preliminary computations.
Next, in \Cref{sec.GGR.expansion} we compute reduced densities of the trial state $\varGamma_J$ using the 
(rigorous implementation of the) Gaudin--Gillepie--Ripka expansion.
Finally, in \Cref{sec.calc.terms} we calculate the individual terms in the pressure functional and prove \Cref{lem.pressure.high.temp}.
In \Cref{sec.density.Gamma_J} we show that $\varGamma_J$ has particle density $\approx \rho_0$.


\section{Low temperature}\label{sec.low.temp}
In this section we prove \Cref{lem.pressure.low.temp} by comparing to the zero-temperature problem.
\begin{proof}[Proof of \Cref{lem.pressure.low.temp}]
The pressures $\psi, \psi_0$ (of the interacting and non-interacting gas, respectively) 
are the Legendre transforms of the corresponding free energy densities $\phi, \phi_0$. 
That is,
\begin{equation}\label{eqn.legendre.low.temp}
\begin{aligned}
\psi(\beta,\mu) 
  & = \sup_{\tilde\rho} \left[\tilde \rho \mu - \phi(\beta,\tilde\rho)\right]
  \geq \rho_0\mu - \phi(\beta,\rho_0)
\\
  \psi_0(\beta,\mu) 
  & = \sup_{\tilde\rho} \left[\tilde \rho \mu - \phi_0(\beta,\tilde\rho)\right]
  = \rho_0\mu - \phi_0(\beta,\rho_0)
\end{aligned}
\end{equation}
with $\rho_0$ the density of the free gas at chemical potential $\mu$ and inverse temperature $\beta$, given in \Cref{eqn.rho1.free.TL}. 
We may trivially bound the free energy density by the ground state energy density $e$.
The latter is bounded from above in \cite[Theorems 1.3, 1.7 and 1.9]{Lauritsen.Seiringer.2023}. That is,
\begin{equation}\label{eqn.trivial.upper.bdd.low.temp}
\phi(\beta, \rho_0) \leq e(\rho_0) \leq e_0(\rho_0) + c_{0,d} a^d \rho_0^{2+2/d}[1 + \delta_d],
\end{equation}
with $e_0(\rho_0)$ denoting the ground state energy density of the free gas and
\begin{equation*}
\begin{aligned}
c_{0,d}
  & = 
  \begin{cases}
  \frac{12\pi}{5}(6\pi^2)^{2/3} & d=3,
  \\
  4\pi^2 & d=2,
  \\
  \frac{2\pi^2}{3} & d=1,
  \end{cases}
  \qquad
  &
  \abs{\delta_d}
  &
  \lesssim
  \begin{cases}
  a^2\rho^{2/3} & d=3,
  \\
  a^2\rho_0\abs{\log a^2\rho_0}^2 & d=2,
  \\
  (a\rho_0)^{13/17} & d=1.
  \end{cases}
\end{aligned}
\end{equation*}
By a straightforward calculation, the ground state energy density of the free gas is 
\begin{equation*}
\begin{aligned}
e_0(\rho_0) & = 4\pi \frac{d^{2/d}}{d+2} \left(\frac{d}{2}\right)^{2/d} \Gamma(d/2)^{2/d} \rho_0^{1+2/d}.
\end{aligned}
\end{equation*}
By \Cref{eqn.asym.polylog,eqn.rho1.free.TL,eqn.formula.psi0} 
we have for large $z = e^{\beta\mu}$ (see also \cite[Equation (11.31)]{Huang.1987})
\begin{equation*}
\begin{aligned}
  \psi_0(\beta,\mu)
    & = \beta^{-1-d/2} \frac{\abs{\S^{d-1}}\Gamma(d/2)}{2(2\pi)^d} ( - \Li_{d/2+1}(-e^{\beta\mu}))
    \\
    & = 4\pi \rho_0^{1+2/d} \frac{-\Li_{d/2+1}(-e^{\beta\mu})}{(-\Li_{d/2}(-e^{\beta\mu}))^{1+2/d}}
  = \frac{2}{d} e_0(\rho_0) \left(1 + O\left( (\beta\mu)^{-2}\right)\right),
\end{aligned}
\end{equation*}
where $\abs{\S^{d-1}} = \frac{2\pi^{d/2}}{\Gamma(d/2)}$ is the area of the $(d-1)$-sphere.
Thus
\begin{equation*}
\phi_0(\beta,\rho_0) = \rho_0\mu - \psi_0(\beta,\mu)
  = e_0 + O\left(\rho_0^{1+2/d} (\beta\mu)^{-2}\right).
\end{equation*}
Combining this with \Cref{eqn.trivial.upper.bdd.low.temp,eqn.legendre.low.temp} we conclude the proof of \Cref{lem.pressure.low.temp}.
\end{proof}

The rest of the paper concerns the proof of \Cref{lem.pressure.high.temp}.
We start with some preliminary computations.

\section{Preliminaries}\label{sec.preliminaries}
To prove \Cref{lem.pressure.high.temp} we will consider a finite system on a cubic box of side length $L$ with periodic boundary conditions and 
bound $\psi(\beta,\mu)$ from below by the pressure functional evaluated on the trial state 
\begin{equation}\label{eqn.def.GammaJ}
\varGamma_J = \frac{Z}{Z_J} F \varGamma F,
\qquad F = \bigoplus_{n=0}^\infty F_n,
\qquad F_n = \prod_{1 \leq i < j \leq n} f(x_i - x_j),
\end{equation}
where $f$ is some cut-off and rescaled scattering function defined in \Cref{eqn.define.f} below,
where $\varGamma$ is defined in \Cref{eqn.define.Gamma},
and where $Z_J$ is such that this is normalised with $\Tr \varGamma_J = 1$.
Concretely, on the $n$-particle space $\varGamma_J$ acts via the kernel
\begin{equation*}
  Z_J^{-1} F_n(X_n) \varGamma_n(X_n,Y_n) F_n(Y_n).
\end{equation*}
(Recall that $\varGamma$ acts via the kernel $Z^{-1}\varGamma_n(X_n,Y_n)$.)
The function $f$ is more precisely 
\begin{equation}\label{eqn.define.f}
	f(x)  
	= \begin{cases}
	\frac{1}{1 - a^d/b^d} f_0(x) & |x|\leq b 
	\\
	1 & |x|\geq b
	\end{cases}
\end{equation}
where $f_0(x)$ is the $p$-wave scattering function defined in \Cref{def.scattering.length}
and $b$ is a length to be chosen later. We will choose $a \ll b \leq C \rho_0^{-1/d}$.
(Here and in the following $\rho_0$ denotes the particle density of the free gas in finite volume.)
In particular for $a^d\rho_0$ small enough $b$ is larger than the range of $v$ and so $f$ is continuous 
(since $f_0(x) = 1 - \frac{a^3}{|x|^3}$ for $x$ outside the support of $v$).

\begin{notation}
\begin{itemize}
\item[]

\item 
We will denote expectation values of operators in the free state $\varGamma$ by $\expect{\cdot}_0$
and in the trial state $\varGamma_J$ by $\expect{\cdot}_J$.
That is, 
$\expect{\mcA}_0 = \Tr_{\mcF}[\mcA\varGamma]$ and $\expect{\mcA}_J = \Tr_{\mcF}[\mcA\varGamma_J]$ for any operator $\mcA$ on $\mcF$.

\item 
We denote $g(x) = f(x)^2 - 1$.

\item 
For any function $h$ we write $h_e = h_{ij} = h(x_i-x_j)$ for an edge $e = (i,j)$.

\item 
Moreover we write $\gamma^{(1)}_e = \gamma^{(1)}_{ij} = \gamma^{(1)}(x_i; x_j)$ for an edge $e = (i,j)$,
where $\gamma^{(1)}$ is the $1$-particle density matrix of $\varGamma$ defined in \Cref{eqn.define.gamma(q)} below.

\item 
We write $X_n = (x_1,\ldots,x_n)$ and $X_{[n,m]} = (x_n,\ldots,x_m)$ if $n\leq m$. For $n > m$ then $X_{[n,m]} = \varnothing$.
\end{itemize}
\end{notation}

\begin{remark}
The trial state $\varGamma_J$ does not have (average) particle density $\rho_0$. 
However we have that 
\begin{equation}\label{eqn.density.Gamma_J}
  \frac{1}{L^d}\expect{\mcN}_J = \rho_0\left(1 + O(a^d b^2 \rho_0^{1+2/d}) + O\left((a^{d}\rho_0)^2 \zeta^d (\log b/a)^2\right)\right).
\end{equation}
This is not needed for the proof of \Cref{lem.pressure.high.temp}, however. We give the proof of  \eqref{eqn.density.Gamma_J} in \Cref{sec.density.Gamma_J}.
\end{remark}

\noindent
We normalize $q$-particle density matrices of $\varGamma$ as 
\begin{equation}\label{eqn.define.gamma(q)}
\begin{aligned}
\gamma^{(q)}(X_q; Y_q)
	& = \frac{1}{Z} \sum_{n=q}^{\infty} \frac{n!}{(n-q)!} \idotsint \varGamma_n(X_{q}, X_{[q+1,n]}; Y_q, X_{[q+1,n]}) \ud X_{[q+1,n]}.
\end{aligned}
\end{equation}
The state $\varGamma$ is quasi-free and particle preserving. Thus by Wick's rule we have for the $q$-particle density
\begin{equation*}
\rho^{(q)}(X_q) = \gamma^{(q)}(X_q;X_q) = \det \left[\gamma^{(1)}_{ij}\right]_{1\leq i,j \leq q}.
\end{equation*}
Moreover, by translation invariance, we have that $\gamma^{(1)}(x;y)$ is a function of $x-y$ only. 
With a slight abuse of notation we then write 
\begin{equation*}
\gamma^{(1)}(x;y) = \gamma^{(1)}(x-y) = \frac{1}{L^d}\sum_{k\in \frac{2\pi}{L}\Z^d} \hat\gamma^{(1)}(k) e^{-ik(x-y)}.
\end{equation*}
A simple calculation shows that (see \cite[Equation (8.65)]{Huang.1987})
\begin{equation*}
\hat \gamma^{(1)}(k) 
= \frac{ze^{ - \beta|k|^2}}{1 + ze^{- \beta|k|^2}}
= \frac{e^{\beta\mu - \beta|k|^2}}{1 + e^{\beta\mu - \beta|k|^2}}.
\end{equation*}

\noindent
For the proof of \Cref{lem.pressure.high.temp} we compute the pressure of the trial state $\varGamma_J$.
We have 
\begin{equation}\label{eqn.calc.free.energy.initial}
\begin{aligned}
	\psi(\beta,\mu)
	& \geq \limsup_{L\to \infty} \frac{1}{L^d} \left[-\expect{\mcH - \mu \mcN + \mcV}_J + \frac{1}{\beta} S(\varGamma_J)\right]
	\\
	& = \limsup_{L\to \infty} \frac{1}{L^d} 
		\left[-\expect{\mcH}_J - \mu \expect{\mcN}_J - \frac{1}{2}\iint v_{12} \rho^{(2)}_J \ud x_1 \ud x_2 
				+ \frac{1}{\beta} S(\varGamma_J)\right],
\end{aligned}		
\end{equation}
where $\rho^{(2)}_J$ is the two-body reduced density of the trial state $\varGamma_J$.
In general we denote by $\rho^{(q)}_J$ the $q$-particle density of $\varGamma_J$.
We calculate $\rho^{(2)}_J$ in \Cref{sec.GGR.expansion} using the Gaudin--Gillepie--Ripka expansion
and we compute the individual terms of \Cref{eqn.calc.free.energy.initial} in \Cref{sec.calc.terms} below.
First, however, we need some preliminary bounds.

\subsection{Useful bounds}
We recall some useful bounds on the scattering function (defined in \Cref{eqn.define.f}) 
from \cite{Lauritsen.Seiringer.2023}.
\begin{lemma}\label{lem.bdd.int.f}
The scattering function $f$ satisfies
\begingroup
\allowdisplaybreaks
\begin{align}
	\int \abs{1 - f(x)^2} |x|^n \ud x 
		& \leq \begin{cases}
		C a^d \log b/a & n=0
		\\
		C a^d b^n & n > 0		
		\end{cases}
	\label{eqn.Ig.bdd}
	\\
	\int \left(\abs{\nabla f(x)}^2 + \frac{1}{2} v(x) f(x)^2\right)|x|^2 \ud x 
		& = c_d a^d \left(1 + O(a^d/b^d))\right) 
	\label{eqn.int.v.x2}
	\\
	\int \left(\abs{\nabla f(x)}^2 + \frac{1}{2} v(x) f(x)^2\right)|x|^n \ud x 
		& \leq 
    \begin{cases}
    C a^{n+d-2} & n+d \leq 2d+1 \\
    C a^{n+d-2}\log b/a & n+d = 2d+2 \\
    C a^{2d} b^{n-d-2} & n+d \geq 2d+3 
    \end{cases}
	\label{eqn.int.v.xn}
	\\
	\abs{\int f(x) \abs{\nabla f(x)} |x|^n \ud x}
		& \leq \begin{cases}
		Ca^{d-1} & n=0 
		\\
    C a^d \log b/a & n=1
    \\
		C a^{d} b^{n-1} & n\geq 2
		\end{cases}
	\label{eqn.int.f.nabla.f.xn}
\end{align}
\endgroup
where $c_d$ is defined in \Cref{eqn.define.cd}.
\end{lemma}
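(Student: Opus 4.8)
The plan is to use that $f$ is explicit away from the origin. Since $b$ exceeds the range $R_0$ of $v$ and $f\equiv 1$ for $|x|\ge b$, every integrand in the lemma is supported in $\{|x|\le b\}$, and on the annulus $R_0\le |x|\le b$ one has the closed forms $f(x)=(1-a^d/b^d)^{-1}f_0(x)$ with $f_0(x)=1-a^d/|x|^d$ and $\nabla f_0(x)=d\,a^d|x|^{-d-1}\hat x$, coming from \Cref{def.scattering.length}. I would therefore split each integral into the core $\{|x|\le R_0\}$ and the annulus $\{R_0\le |x|\le b\}$. On the annulus I insert the explicit $f_0$, reducing everything to elementary radial integrals $\int_{R_0}^b r^{p}\ud r$ against $\abs{\S^{d-1}}r^{d-1}\ud r$; recalling $R_0\sim a$, the value is controlled by the lower endpoint ($\sim a^{p+1}$) when $p<-1$, is logarithmic ($\sim\log b/a$) when $p=-1$, and is controlled by the upper endpoint ($\sim b^{p+1}$) when $p>-1$. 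This trichotomy is exactly what produces the three cases in \eqref{eqn.int.v.xn} and \eqref{eqn.int.f.nabla.f.xn}.

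For \eqref{eqn.Ig.bdd}, expanding the squares gives $\abs{1-f(x)^2}\lesssim a^d/|x|^d$ on the annulus, so its contribution is $\lesssim a^d\int_{R_0}^b r^{n-1}\ud r$, namely $a^d\log b/a$ for $n=0$ and $a^db^n$ for $n>0$; the core contributes at most $R_0^{n+d}\sim a^{n+d}$, which is subleading. For \eqref{eqn.int.v.x2} I would argue directly from the variational value $c_d a^d$: since $f$ is a rescaling of $f_0$ cut off at $b$ and $v\equiv 0$ there,
\[
\int\left(\abs{\nabla f}^2+\tfrac12 v f^2\right)|x|^2\ud x=\frac{1}{(1-a^d/b^d)^2}\left[c_d a^d-\int_{|x|>b}\abs{\nabla f_0}^2|x|^2\ud x\right],
\]
and the tail equals $\abs{\S^{d-1}}\,d\,a^{2d}b^{-d}=O(a^d\cdot a^d/b^d)$ while the prefactor is $1+O(a^d/b^d)$, giving $c_d a^d(1+O(a^d/b^d))$. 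For \eqref{eqn.int.v.xn} the same rescaling reduces matters to $\int_{|x|\le b}(\abs{\nabla f_0}^2+\tfrac12 v f_0^2)|x|^n\ud x$; on the annulus the integrand is $d^2a^{2d}|x|^{-2d-2}$, producing $a^{2d}\int_{R_0}^b r^{\,n-d-3}\ud r$, whose three regimes --- $p=n-d-3$ smaller than, equal to, or larger than $-1$ --- are precisely $n+d\le 2d+1$, $n+d=2d+2$, $n+d\ge 2d+3$ and yield the stated $a^{n+d-2}$, $a^{n+d-2}\log b/a$, $a^{2d}b^{n-d-2}$. Finally for \eqref{eqn.int.f.nabla.f.xn} I use $f\abs{\nabla f}=(1-a^d/b^d)^{-2}f_0\abs{\nabla f_0}\lesssim a^d|x|^{-d-1}$ on the annulus, giving $a^d\int_{R_0}^b r^{\,n-2}\ud r$, i.e.\ $a^{d-1}$, $a^d\log b/a$, $a^d b^{n-1}$ in the cases $n=0$, $n=1$, $n\ge 2$.

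The only delicate point is the core $\{|x|\le R_0\}$ for the small powers, where the weight $|x|^n$ is no smaller than $|x|^2$ and so cannot be dominated by the finite integral $c_d a^d$. For \eqref{eqn.int.v.xn} with $n\ge 2$ one bounds $|x|^n\le R_0^{n-2}|x|^2\sim a^{n-2}|x|^2$ on the core and uses $\int(\abs{\nabla f_0}^2+\tfrac12 v f_0^2)|x|^2\ud x=c_d a^d$, reproducing $a^{n+d-2}$; the cores of \eqref{eqn.Ig.bdd} and \eqref{eqn.int.f.nabla.f.xn} are likewise subleading, handled by $0\le f_0\le 1$ together with Cauchy--Schwarz against the finite scattering integral. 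The genuinely borderline case is $n=0$ in dimension $d=1$, i.e.\ the quantity $\int(\abs{\partial f_0}^2+\tfrac12 v f_0^2)\ud x$, for which no such comparison is available; this is exactly why \Cref{thm.main,lem.pressure.high.temp} assume this integral to be finite in $d=1$, whereas in $d\ge 2$ the measure factor $|x|^{d-1}$ renders the core contributions automatically integrable and of lower order. I expect the core estimates to be the main obstacle to making the argument fully rigorous; all the resulting bounds are recorded in \cite{Lauritsen.Seiringer.2023}, from which the lemma is quoted.
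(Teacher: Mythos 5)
Your annulus computations are fine --- in particular the exact evaluation giving \eqref{eqn.int.v.x2} is correct, as is your observation that the $d=1$, $n=0$ case of \eqref{eqn.int.v.xn} is precisely the extra assumption of \Cref{thm.main} --- but there is a genuine gap, and it sits exactly at the words ``recalling $R_0\sim a$''. For $v\geq0$ one only has $a\leq R_0$: the ratio $R_0/a$ is not bounded by any universal constant (take $v=\eps\chi_{\{|x|\leq R_0\}}$ with $\eps\to0$; then $a\to0$ while $R_0$ stays fixed). Because you use information about $f_0$ only \emph{outside} the support of $v$, the whole shell $a<|x|<R_0$ ends up in your ``core'', where you only have crude bounds; consequently each claimed power of $a$ degenerates into a power of $R_0$ (the core of \eqref{eqn.Ig.bdd} becomes $R_0^{n+d}$, the core of \eqref{eqn.int.v.xn} becomes $R_0^{n-2}a^{d}$, and so on), which is not what the lemma asserts. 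The ingredient you are missing --- and the one the paper's proof is actually built on --- is that the pointwise bounds $[1-a^d/|x|^d]_+\leq f_0(x)\leq1$ and $0\leq\partial_r f_0(x)\leq da^d/|x|^{d+1}$ for $|x|>a$ (\cite[Lemma A.1]{Lieb.Yngvason.2001}, \cite[Lemma 2.2]{Lauritsen.Seiringer.2023}) hold also \emph{inside} the support of $v$: by the Euler--Lagrange equation for $f_0$, the function $r\mapsto r^{d+1}f_0'(r)$ is non-negative and non-decreasing, and equals $da^d$ outside the support of $v$. With these bounds the correct splitting radius is $a$, not $R_0$: on $a<|x|<b$ one runs exactly your radial trichotomy, while the core $\{|x|<a\}$ contributes $O(a^{n+d})$ to \eqref{eqn.Ig.bdd} via $\abs{1-f^2}\leq C$, and $O(a^{n+d-2})$ to the gradient part of \eqref{eqn.int.v.xn} via $|x|^n\leq a^{n-2}|x|^2$ tested against the scattering integral $c_da^d$ of \Cref{def.scattering.length}. (The only place where the support of $v$ unavoidably enters is the term $\int v f^2|x|^n$ with $n>2$, where one bounds $|x|^{n-2}\leq R_0^{n-2}$ on the support of $v$; that constant is a fixed property of $v$, whereas your argument injects such factors everywhere.)

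A second, independent failure: your Cauchy--Schwarz treatment of the core of \eqref{eqn.int.f.nabla.f.xn} with $n=0$ amounts to
\begin{equation*}
\int_{|x|<R_0} f\abs{\nabla f}\ud x
\leq C\left(\int\abs{\nabla f_0}^2|x|^2\ud x\right)^{1/2}\left(\int_{|x|<R_0} f_0^2\,|x|^{-2}\ud x\right)^{1/2},
\end{equation*}
and the second factor is infinite in dimensions $d\leq2$ (for a weak potential $f_0(0)>0$, so $f_0^2|x|^{-2}$ is not integrable at the origin). The bound $\int f\abs{\nabla f}\leq Ca^{d-1}$ is claimed, and used in the bound of $\eps_3$, in all dimensions $d\in\{1,2,3\}$; in $d\leq 2$ its proof requires the monotonicity $\partial_rf_0\geq0$ noted above, which gives $\int_{|x|<a}f_0\abs{\nabla f_0}\ud x\leq \abs{\S^{d-1}}a^{d-1}\int_0^a f_0f_0'\ud r\leq Ca^{d-1}$, rather than Cauchy--Schwarz. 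So the core estimates, which you yourself flag as ``the main obstacle to making the argument fully rigorous'', are not a deferrable technicality: they are the actual content of the lemma beyond elementary radial integration, and your proposal ultimately quotes \cite{Lauritsen.Seiringer.2023} for exactly those steps.
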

\begin{proof}
\Cref{eqn.int.v.xn,eqn.int.v.x2,eqn.Ig.bdd,eqn.int.f.nabla.f.xn} all follow from the definition of the scattering length, \Cref{def.scattering.length},
and the bounds \cites[Lemma A.1]{Lieb.Yngvason.2001}[Lemma 2.2]{Lauritsen.Seiringer.2023}
\begin{equation*}
\left[1 - \frac{a^d}{|x|^d}\right]_+ \leq f_0(x) \leq 1,
\qquad 
\abs{\nabla f_0(x)} \leq \frac{da^d}{|x|^{d+1}} \quad \textnormal{for } |x|> a
\end{equation*}
where the left inequality in the first inequality is an equality for $x$ outside the support of $v$.
We refer to \cite[Equations (4.1) to (4.6)]{Lauritsen.Seiringer.2023} for a detailed proof. 
\end{proof}

\noindent
We will need the following technical lemma.
\begin{lemma}\label{lem.bdd.sum.gamma.k.ini}
Let $\hat \gamma(k) = z e^{-\beta|k|^2}$.
Let $p,n,m$ be non-negative integers with $1\leq n\leq m$. Then
\begin{equation*}
\begin{aligned}
	\frac{1}{L^d} \sum_{k\in \frac{2\pi}{L}\Z^d} \frac{|k|^p \hat \gamma(k)^n}{(1 + \hat \gamma(k))^m} 
	& = \frac{1}{(2\pi)^d} \int_{\R^d} \frac{|k|^p \hat \gamma(k)^n}{(1 + \hat\gamma(k))^m} \ud k 
		+ O\left(L^{-1} \beta \max \left\{\beta^{-1}, \mu \right\}^{\frac{p+d+1}{2}}\right)
	\\ &
	\leq 
  		C\max \left\{\beta^{-1}, \mu \right\}^{\frac{p+d}{2}}
\end{aligned}
\end{equation*}
for $z =e^{\beta \mu}  \gtrsim 1$ and $L$ sufficiently large.
\end{lemma}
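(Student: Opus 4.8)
The plan is to read the left-hand side as a left-endpoint Riemann sum for the integral on the right and to push everything through two scalar integral estimates. Write $g(k) = |k|^p\,\hat\gamma(k)^n/(1+\hat\gamma(k))^m$ for the summand and recall that $\frac{2\pi}{L}\Z^d$ has spacing $h = 2\pi/L$. Comparing the Riemann sum to the integral one coordinate at a time (telescoping) gives the standard estimate
\begin{equation*}
\left| \frac{1}{L^d}\sum_{k\in\frac{2\pi}{L}\Z^d} g(k) - \frac{1}{(2\pi)^d}\int_{\R^d} g(k)\ud k \right| \leq \frac{C}{L}\int_{\R^d}|\nabla g(k)|\ud k.
\end{equation*}
This bound needs only $\partial_i g\in L^1(\R^d)$ for each $i$, so the failure of $|k|^p$ to be smooth at the origin (for $p=1$) is harmless. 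The two assertions of the lemma thus reduce to: (i) $\frac{1}{(2\pi)^d}\int g \leq C\max\{\beta^{-1},\mu\}^{(p+d)/2}$, and (ii) $\int|\nabla g| \leq C\beta\max\{\beta^{-1},\mu\}^{(p+d+1)/2}$. The displayed error term then vanishes as $L\to\infty$ and is in particular dominated by the main term for $L$ large.

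The key is a scalar estimate: for any integer $q\geq 0$,
\begin{equation*}
\int_{\R^d} |k|^q \frac{\hat\gamma(k)^n}{(1+\hat\gamma(k))^m}\ud k \leq C\max\{\beta^{-1},\mu\}^{(q+d)/2}.
\end{equation*}
I would prove this by rescaling $k = \beta^{-1/2}\tilde k$, which sends $\hat\gamma(k)$ to $e^{\nu-|\tilde k|^2}$ with $\nu=\beta\mu$ and reduces the claim to $\int |\tilde k|^q e^{n(\nu-|\tilde k|^2)}(1+e^{\nu-|\tilde k|^2})^{-m}\ud\tilde k \leq C\max\{1,\nu\}^{(q+d)/2}$. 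Here I split at the Fermi surface $|\tilde k|^2 = \nu$: since $n\leq m$ one has $\hat\gamma^n/(1+\hat\gamma)^m\leq 1$, which on $|\tilde k|^2\leq\nu$ contributes $\int_{|\tilde k|\leq\sqrt\nu}|\tilde k|^q\ud\tilde k = C\nu^{(q+d)/2}$, whereas $\hat\gamma^n/(1+\hat\gamma)^m\leq \hat\gamma^n = e^{n(\nu-|\tilde k|^2)}$ on the complement gives a convergent Gaussian tail of lower order ($O(\nu^{(q+d)/2-1})$ for $\nu>1$). For $\nu\leq 1$ the whole integral is at once bounded by $e^{n\nu}\int|\tilde k|^q e^{-n|\tilde k|^2}\ud\tilde k = O(1) = O(\max\{1,\nu\}^{(q+d)/2})$. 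Undoing the rescaling via $\beta^{-(q+d)/2}\max\{1,\beta\mu\}^{(q+d)/2} = \max\{\beta^{-1},\mu\}^{(q+d)/2}$ yields the claim, and the choice $q=p$ is exactly assertion (i).

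For assertion (ii) I would differentiate, using $\nabla\hat\gamma = -2\beta k\,\hat\gamma$ together with $\frac{d}{d\hat\gamma}\frac{\hat\gamma^n}{(1+\hat\gamma)^m} = \frac{\hat\gamma^{n-1}[n-(m-n)\hat\gamma]}{(1+\hat\gamma)^{m+1}}$, to obtain
\begin{equation*}
|\nabla g(k)| \leq p|k|^{p-1}\frac{\hat\gamma^n}{(1+\hat\gamma)^m} + 2\beta|k|^{p+1}\frac{\hat\gamma^n\,|n-(m-n)\hat\gamma|}{(1+\hat\gamma)^{m+1}} \leq C|k|^{p-1}\frac{\hat\gamma^n}{(1+\hat\gamma)^m} + C\beta|k|^{p+1}\frac{\hat\gamma^n}{(1+\hat\gamma)^m},
\end{equation*}
where the last step uses $|n-(m-n)\hat\gamma|\leq C(1+\hat\gamma)$. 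Applying the scalar estimate with $q=p-1$ (the first term being absent when $p=0$) and with $q=p+1$ gives $\int|\nabla g| \leq C\max\{\beta^{-1},\mu\}^{(p+d-1)/2} + C\beta\max\{\beta^{-1},\mu\}^{(p+d+1)/2}$; since $\max\{\beta^{-1},\mu\}\geq\beta^{-1}$ forces $\max\{\beta^{-1},\mu\}^{-1}\leq\beta$, the second term dominates, which is assertion (ii).

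The main obstacle is not any single estimate but carrying the two scales $\beta^{-1}$ and $\mu$ simultaneously through the rescaling: one must check that the split-at-the-Fermi-surface argument delivers the correct power $(q+d)/2$ uniformly in both the degenerate regime $\nu\gg1$ and the classical regime $\nu\lesssim1$, and that the extra factor of $\beta$ produced by $\nabla\hat\gamma$ is precisely what converts the $L^{-1}$ Riemann-sum error into the stated form $O(L^{-1}\beta\max\{\beta^{-1},\mu\}^{(p+d+1)/2})$.
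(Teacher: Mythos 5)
Your proposal is correct and follows essentially the same route as the paper: a Riemann-sum-versus-integral comparison whose error is controlled by $L^{-1}\int|\nabla g|$, with the product rule producing exactly the two weights $|k|^{p-1}$ and $\beta|k|^{p+1}$, followed by the same splitting of the momentum integral at the Fermi surface with separate treatment of the degenerate ($\beta\mu\geq 1$) and classical ($z\lesssim e$) regimes. The only differences are cosmetic — you rescale to dimensionless variables and package the cell-by-cell estimate as a clean gradient lemma, whereas the paper writes out the fundamental-theorem-of-calculus step on each cell directly.
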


\noindent
Note that $\hat \gamma(k) \ne \hat \gamma^{(1)}(k)$. In fact, $\hat\gamma^{(1)}(k) = \frac{\hat\gamma(k)}{1 + \hat\gamma(k)}$.

\begin{proof}
We interpret the sum as a Riemann sum and compare it with its corresponding integral
\begin{equation*}
	I(p,n,m) := \frac{1}{(2\pi)^d} \int_{\R^d} \frac{|k|^p \hat \gamma(k)^n}{(1 + \hat\gamma(k))^m} \ud k 
\end{equation*}
Writing $F_{p,n,m}(k) = \frac{|k|^p \hat \gamma(k)^n}{(1 + \hat \gamma(k))^m} $ then 
\begin{equation*}
\begin{aligned}
\frac{1}{L^d} \sum_{k\in \frac{2\pi}{L}\Z^d} \frac{|k|^p \hat \gamma(k)^n}{(1 + \hat \gamma(k))^m} 
	& = \frac{1}{(2\pi)^d} \sum_{k\in \frac{2\pi}{L}\Z^d} \int_{\left[-\frac{\pi}{L}, \frac{\pi}{L}\right]^d} 
		\left(F_{p,n,m}(k+\xi) - \int_0^1 \partial_t F_{p,n,m}(k+t\xi) \ud t\right) \ud \xi
\end{aligned}
\end{equation*}
The first term is the integral $I(p,n,m)$. 
For the second term we may bound 
\begin{equation*}
\abs{\partial_t F_{p,n,m}(k+t\xi)}
	\leq \begin{cases}
	C |\xi| \beta F_{p,n,m}(k+\xi) & p = 0
	\\
	C |\xi| \left[F_{p-1,n,m}(k+\xi) + \beta F_{p+1,n,m}(k+\xi)\right] & p \ne 0
	\end{cases}
\end{equation*}
Thus we have (with $I_{p-1,n,m} = 0$ if $p=0$)
\begin{equation*}
\begin{aligned}
\frac{1}{L^d} \sum_{k\in \frac{2\pi}{L}\Z^d} \frac{|k|^p \hat \gamma(k)^n}{(1 + \hat \gamma(k))^m} 
	& = I_{p,n,m} + O\left(L^{-1} I_{p-1,n,m} + L^{-1}\beta I_{p+1,n,m}\right)
\end{aligned}
\end{equation*}
We may bound the integrals $I_{p,n,m}$ as follows.
First, if $z \geq e$, i.e. $\beta\mu \geq 1$, we write
\begin{equation*}
\begin{aligned}
	\int_{\R^d} \frac{|k|^p \hat \gamma(k)^n}{(1 + \hat\gamma(k))^m} \ud k 
	& \lesssim \int_0^{\sqrt{\mu}} k^{p+d-1} \hat\gamma(k)^{n-m} \ud k + \int_{\sqrt{\mu}}^\infty k^{p+d-1} \hat\gamma(k)^n \ud k 
	\\
  & = \frac{1}{2} 
    \left[\int_{0}^{\beta\mu} \beta^{-1} \left(\frac{\beta\mu - t}{\beta}\right)^{\frac{p+d-2}{2}} e^{(n-m)t} \ud t
            + \int_0^\infty \beta^{-1} \left(\frac{\beta\mu + t}{\beta}\right)^{\frac{p+d-2}{2}} e^{-nt} \ud t\right]
  \\
  & \lesssim 
  \begin{cases}
  \mu^{\frac{p+d}{2}} & n=m, \\
  \mu^{\frac{p+d}{2}} (\beta\mu)^{-1} & n < m.
  \end{cases}
\end{aligned}
\end{equation*}
Next, if $z < e$ then 
\begin{equation*}
\begin{aligned}
  \int_{\R^d} \frac{|k|^p \hat \gamma(k)^n}{(1 + \hat\gamma(k))^m} \ud k 
  & \lesssim \int_{0}^\infty k^{p+d-1} \hat\gamma(k)^n \ud k 
  = \frac{1}{2} z^n \beta^{-\frac{p+d}{2}}
    \int_0^\infty t^{\frac{p+d-2}{2}} e^{-nt} \ud t
  \lesssim 
  \beta^{-\frac{p+d}{2}}.
\end{aligned}
\end{equation*}
The lemma follows.
\end{proof}

\noindent
Finally, we have the following lemma for the reduced densities of the free state.
\begin{lemma}\label{lem.prop.rho0}
The reduced densities of the free Fermi gas satisfy 
\begin{align}
\rho^{(1)}(x_1)
  & = \rho_0= \frac{1}{(4\pi)^{d/2}} \beta^{-d/2} (-\Li_{d/2}(-z))
    \left[1 + O(L^{-1}\zeta \rho_0^{-1/d})\right],
  \label{eqn.rho1.free}
  \\
  \rho^{(2)}(x_1,x_2)
  & = 2\pi \frac{- \Li_{d/2+1}(-z)}{(-\Li_{d/2}(-z))^{1+2/d}} \rho_0^{2+2/d} |x_1-x_2|^2
    \left[1 
    +  O(\rho_0^{2/d}|x_1-x_2|^2)
    +O (L^{-1} \zeta \rho_0^{-1/d} ) 
    \right].
  \label{eqn.rho2.free}
\end{align}
\end{lemma}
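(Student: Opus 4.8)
The plan is to reduce both identities to the Fourier coefficients $\hat\gamma^{(1)}(k)=\hat\gamma(k)/(1+\hat\gamma(k))$ and then apply \Cref{lem.bdd.sum.gamma.k.ini} to pass from the momentum sums to integrals that \Cref{eqn.polylog} identifies as polylogarithms. First I would treat $\rho^{(1)}$. Since $\rho^{(1)}(x_1)=\gamma^{(1)}(x_1;x_1)=\gamma^{(1)}(0)=\frac{1}{L^d}\sum_{k}\hat\gamma^{(1)}(k)$, applying \Cref{lem.bdd.sum.gamma.k.ini} with $p=0$, $n=m=1$ rewrites this as $\frac{1}{(2\pi)^d}\int_{\R^d}\frac{\hat\gamma(k)}{1+\hat\gamma(k)}\ud k$ up to a Riemann-sum error. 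Passing to polar coordinates and substituting $t=\beta|k|^2$ turns the integral into $\frac{\abs{\S^{d-1}}}{2(2\pi)^d}\beta^{-d/2}\int_0^\infty\frac{t^{d/2-1}}{e^{t-\beta\mu}+1}\ud t$, which by \Cref{eqn.polylog} with $s=d/2$ and $\abs{\S^{d-1}}=2\pi^{d/2}/\Gamma(d/2)$ equals $\frac{1}{(4\pi)^{d/2}}\beta^{-d/2}(-\Li_{d/2}(-z))$. This is exactly the $\rho_0$ of \Cref{eqn.rho1.free.TL}.

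For $\rho^{(2)}$, Wick's rule gives $\rho^{(2)}(x_1,x_2)=\det[\gamma^{(1)}_{ij}]_{1\le i,j\le 2}=\gamma^{(1)}(0)^2-\gamma^{(1)}(r)^2$ with $r=x_1-x_2$, using that $\gamma^{(1)}$ is real and even (as is its transform $\hat\gamma^{(1)}$). I factor this as $(\gamma^{(1)}(0)-\gamma^{(1)}(r))(\gamma^{(1)}(0)+\gamma^{(1)}(r))$. For the first factor, $\gamma^{(1)}(0)-\gamma^{(1)}(r)=\frac{1}{L^d}\sum_k\hat\gamma^{(1)}(k)(1-\cos(k\cdot r))$, and the reflection/permutation symmetry of the lattice $\frac{2\pi}{L}\Z^d$ kills the odd and off-diagonal moments, so the quadratic part is exactly $\frac{|r|^2}{2d}\frac{1}{L^d}\sum_k\hat\gamma^{(1)}(k)|k|^2$, with Taylor remainder controlled by $|r|^4\frac{1}{L^d}\sum_k\hat\gamma^{(1)}(k)|k|^4$. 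Evaluating the $|k|^2$-moment via \Cref{lem.bdd.sum.gamma.k.ini} ($p=2$) and \Cref{eqn.polylog} with $s=d/2+1$ gives $\frac{1}{(4\pi)^{d/2}}\frac{d}{2}\beta^{-(d+2)/2}(-\Li_{d/2+1}(-z))$, while the second factor is $2\rho_0\,(1+O(\rho_0^{2/d}|r|^2))$.

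Multiplying the two factors and eliminating $\beta$ through $\rho_0=\frac{1}{(4\pi)^{d/2}}\beta^{-d/2}(-\Li_{d/2}(-z))$, equivalently $\beta^{-1}=4\pi\rho_0^{2/d}(-\Li_{d/2}(-z))^{-2/d}$, collapses everything to the stated prefactor $2\pi\frac{-\Li_{d/2+1}(-z)}{(-\Li_{d/2}(-z))^{1+2/d}}\rho_0^{2+2/d}|r|^2$, matching \Cref{eqn.rho2.free}. This also explains structurally why the temperature dependence enters only through the ratio of two polylogarithms.

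The main work, and the only place needing care, is the error bookkeeping carried out uniformly in $z\gtrsim 1$. After dividing by the leading $|k|^2$-moment, the Taylor remainder contributes the relative error $O(\max\{\beta^{-1},\mu\}\,|r|^2)$, which I identify with $O(\rho_0^{2/d}|r|^2)$ since $\max\{\beta^{-1},\mu\}\asymp\rho_0^{2/d}$ in both temperature regimes. The Riemann-sum errors from \Cref{lem.bdd.sum.gamma.k.ini}, of size $L^{-1}\beta\max\{\beta^{-1},\mu\}^{(p+d+1)/2}$, become the relative error $O(L^{-1}\zeta\rho_0^{-1/d})$ after the same identification and a split into the cases $\beta\mu\ge 1$ (where $-\Li_{d/2}(-z)\sim(\beta\mu)^{d/2}/\Gamma(d/2+1)$ by \Cref{eqn.asym.polylog}, so $\rho_0\asymp\mu^{d/2}$ and $\zeta\asymp\beta\mu$) and $\beta\mu<1$ (where $-\Li_{d/2}(-z)\asymp 1$, $\rho_0\asymp\beta^{-d/2}$ and $\zeta\asymp 1$). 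I expect the obstacle to be purely this uniformity --- verifying that the Taylor-truncation and Riemann-sum errors combine into the single clean forms in \Cref{eqn.rho1.free,eqn.rho2.free} across the whole range of fugacities --- rather than any conceptual difficulty.
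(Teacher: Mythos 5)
Your proposal is correct and follows essentially the same route as the paper's proof: Wick's rule, a symmetry-based Taylor expansion of $\gamma^{(1)}$ whose second-order coefficient is the $|k|^2$-moment evaluated via \Cref{lem.bdd.sum.gamma.k.ini} and \Cref{eqn.polylog}, and the identification $\max\{\beta^{-1},\mu\}\sim\rho_0^{2/d}$, $\beta\sim\zeta\rho_0^{-2/d}$ to put the Riemann-sum and truncation errors in the stated form. The only cosmetic difference is that you factor $\gamma^{(1)}(0)^2-\gamma^{(1)}(r)^2$ before expanding, while the paper expands $\gamma^{(1)}(r)$ first; the computation and error bookkeeping are identical.
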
	

\noindent
\Cref{eqn.rho2.free,eqn.rho1.free} are the finite volume analogues of \Cref{eqn.rho1.free.TL,eqn.rho2.free.TL}.
\begin{remark}\label{rmk.asym.beta}
Note that $\beta \sim \zeta \rho_0^{-2/d}$. (Recall that $\zeta = 1 + \abs{\log z}$.)
Indeed, for $z\leq C$ this is clear from \Cref{eqn.rho1.free}. For $z \gg 1$ 
this follows from the asymptotics of the polylogarithm, \Cref{eqn.asym.polylog}.
Moreover, for $\beta\mu \geq 1$ then $\mu \sim \rho_0^{2/d}$.
In particular then \Cref{lem.bdd.sum.gamma.k.ini} may be reformulated as 
\begin{equation}\label{lem.bdd.sum.gamma.k}
		\frac{1}{L^d} \sum_{k\in \frac{2\pi}{L}\Z^d} \frac{|k|^p \hat \gamma(k)^n}{(1 + \hat \gamma(k))^m} 
	= \frac{1}{L^d} \sum_{k\in \frac{2\pi}{L}\Z^d } \frac{|k|^p e^{- n \beta |k|^2}}{\left(1 + e^{-\beta|k|^2}\right)^m}
	\leq 
  C\rho_0^{1+p/d}
\end{equation}
for $z\gtrsim 1$ and $L$ sufficiently large.
This is the form we will later use.
\end{remark}

\begin{proof}
By translation invariance
\begin{equation*}
\begin{aligned}
\rho_0	
	& = \frac{\expect{\mcN}_0}{L^d} = \frac{1}{L^d} \int \rho^{(1)}(x) \ud x = \rho^{(1)}(0).
\end{aligned}
\end{equation*}
Moreover, by \Cref{lem.bdd.sum.gamma.k.ini}
\begin{equation*}
\begin{aligned}
\rho^{(1)}(0) 
	& = \frac{1}{L^d} \sum_{k\in \frac{2\pi}{L}\Z^d} \frac{e^{\beta\mu - \beta|k|^2}}{1 + e^{\beta\mu - \beta|k|^2}}
	\\
	& = \frac{1}{(2\pi)^d} \int_{\R^d} \frac{ze^{ - \beta|k|^2}}{1 + ze^{ - \beta|k|^2}} \ud k
		+ O\left(L^{-1}\beta \max \left\{\beta^{-1}, \mu\right\}^{\frac{d+1}{2}}\right)
	\\
	& = 
    \frac{\Gamma(d/2) \abs{\S^{d-1}}}{2(2\pi)^d} 
    \beta^{-d/2}
    (-\Li_{d/2}(-z))
		\left(1 + O\left(L^{-1}\beta \max \left\{\beta^{-1}, \mu\right\}^{1/2}\right)\right)
\end{aligned}
\end{equation*}
where $\abs{\S^{d-1}} = \frac{2\pi^{d/2}}{\Gamma(d/2)}$ is the surface area of the $(d-1)$-sphere.
Using that $\max\left\{\beta^{-1},\mu\right\} \sim \rho_0^{2/d}$ 
(which follow from this equation for $L$ sufficiently large, see \Cref{rmk.asym.beta})
we conclude the proof of \Cref{eqn.rho1.free}.

Next, we consider the $2$-particle density.
By Wick's rule we have 
\begin{equation*}
\rho^{(2)}(x_1,x_2)
	= \rho^{(1)}(x_1)\rho^{(1)}(x_2) - \gamma^{(1)}(x_1;x_2)\gamma^{(1)}(x_2;x_1).
\end{equation*}
By translation invariance $\gamma^{(1)}(x_1;x_2)$ is a function of $x_1-x_2$ only. 
We expand it as a Taylor series in $x_1-x_2$. 
By symmetry of reflection in any of the axes all odd orders and all off-diagonal second order terms vanish.
Additionally, all second order terms are equal by the symmetry of permutation of the axes. 
That is,
\begin{equation*}
\begin{aligned}
	\gamma^{(1)}(x_1;x_2)
		& = \frac{1}{L^d}\sum_{k} \hat \gamma^{(1)}(k) e^{ik(x_1-x_2)}
		\\
		& = \frac{1}{L^d}\sum_{k} \hat \gamma^{(1)}(k) \left[1 - \frac{1}{2d}|k|^2 |x_1-x_2|^2 + O(|k|^4 |x_1-x_2|^4)\right]
		\\
		& = \rho_0- \frac{1}{2d} \left[\frac{1}{L^d} \sum_{k} |k|^2 \hat \gamma^{(1)}(k)\right] |x_1-x_2|^2
			+ O\left(\left[\frac{1}{L^d} \sum_{k} |k|^4 \hat \gamma^{(1)}(k)\right] |x_1-x_2|^4\right).
\end{aligned}
\end{equation*}
(Here $O(|k|^4 |x_1-x_2|^4)$ means a term that is bounded by $|k|^4 |x_1-x_2|^4$ uniformly in $|k|^4 |x_1-x_2|^4$,
even if it is large.)
For the first sum we have by \Cref{lem.bdd.sum.gamma.k.ini,eqn.rho1.free} 
(and writing the error term in terms of $\rho_0$ as above) 
\begin{equation*}
\begin{aligned}
	\frac{1}{L^d} \sum_{k\in \frac{2\pi}{L}\Z^d} |k|^2 \hat\gamma^{(1)}(k)
	& = \frac{1}{(2\pi)^d} \int \frac{ze^{- \beta|k|^2}}{1 + ze^{- \beta|k|^2}} |k|^2 \ud k 
		+ O\left(L^{-1} \zeta \rho_0^{4/d}\right)
	\\
	& =
  \frac{\Gamma(d/2+1) \abs{\S^{d-1}}}{2(2\pi)^d} 
  \beta^{-d/2-1} (-\Li_{d/2+1}(-z))
		\left(1 + O(L^{-1}\zeta \rho_0^{-1/d})\right)
	\\
	& = 2d\pi \frac{- \Li_{d/2+1}(-z)}{(-\Li_{d/2}(-z))^{1+2/d}} \rho_0^{1+2/d}
		\left(1 + O(L^{-1}\zeta \rho_0^{-1/d})\right).
\end{aligned}
\end{equation*}
Using again \Cref{lem.bdd.sum.gamma.k.ini} to bound the second sum we conclude that 
\begin{equation*}
\begin{aligned}
	\gamma^{(1)}(x_1;x_2)
		& = \rho_0- \pi \frac{- \Li_{d/2+1}(-z)}{(-\Li_{d/2}(-z))^{1+2/d}} \rho_0^{1+2/d} |x_1-x_2|^2
		\\
		& \qquad 
			+ O\left(L^{-1} \zeta \rho_0^{1+1/d}|x_1-x_2|^2\right)
			+ O\left(\rho_0^{1+4/d}|x_1-x_2|^4\right).
\end{aligned}
\end{equation*}
We conclude the proof of \Cref{eqn.rho2.free}.
\end{proof}

\section{Gaudin--Gillepie--Ripka expansion}\label{sec.GGR.expansion}
We use the Gaudin--Gillepie--Ripka (GGR) expansion \cite{Gaudin.Gillespie.ea.1971} to compute $Z_J$ and $\rho^{(q)}_J$, the 
$q$-particle reduced densities of the trial state $\varGamma_J$.
For this we recall some notation from \cite{Lauritsen.Seiringer.2023}.
\begin{defn}[{\cite[Definition 3.1]{Lauritsen.Seiringer.2023}}]\label{defn.diagram}
We define $\mcG_p^{q}$ as the set of graphs on 
$q$ \emph{external} vertices $\{1,...,q\}$
and $p$ \emph{internal} vertices $\{q+1,...,q+p\}$ 
such that 
there are no edges between external vertices and such that all internal vertices have degree at least $1$,
i.e. there is at least one edge incident to each internal vertex.
We replace $q$ and/or $p$ with sets $V^*$ and $V$ respectively 
and write $\mcG_V^{V^*}$ if we need the external and/or internal vertices to have definite indices $V^*$ respectively $V$.


Define $\mcT_p^q\subset \mcC_p^q\subset \mcG_p^q$ as the subset of trees and connected graphs respectively. 
(Define similarly $\mcT_V^{V^*}\subset \mcC_V^{V^*} \subset \mcG_V^{V^*}$.)
Define the functions
\begin{equation*}
	W_p^q = W_p^q(x_1,\ldots,x_{p+q}) = \sum_{G\in\mcG_p^q} \prod_{e\in G} g_e. 
\end{equation*}
A \emph{diagram} $(\pi,G)$ (on $q$ external and $p$ internal vertices)
is a pair of a permutation $\pi\in \mcS_{p+q}$ and a graph $G\in \mcG_p^q$.
We view the permutation $\pi$ as a directed graph on the $p+q$ vertices.
The set of all diagrams on $q$ external and $p$ internal vertices is denoted $\mcD_p^q$.

For a diagram $(\pi,G)$ we will refer to $G$ as the $g$-graph and $\pi$ as the $\gamma$-graph.
The value of a diagram $(\pi,G) \in \mcD_p^q$ is the function 
\begin{equation*}
	\Gamma_{\pi,G}^q(x_1,\ldots,x_q) 
		= (-1)^{\pi} \idotsint \prod_{j=1}^{p+q} \gamma^{(1)}(x_j;x_{\pi(j)})
			\prod_{e\in G} g_e  \ud X_{[q+1,q+p]}.
\end{equation*}
A diagram $(\pi,G)\in\mcD_p^q$ is \emph{linked} if the union of $\pi$ and $G$ is a connected graph.
The subset of all linked diagrams is denoted $\mcL_p^q \subset \mcD_p^q$.

Define the set $\tilde\mcL_p^q \subset \mcD_p^q$ as the set of all diagrams such that each 
linked component contains at least one external vertex.
If $q=0$ we write $\mcG_p^q = \mcG_p$ etc. without a superscript $q$.
\end{defn}

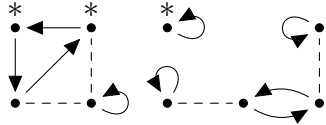
\begin{figure}[htb]
\centering
\begin{tikzpicture}[line cap=round,line join=round,>=triangle 45,x=1.0cm,y=1.0cm]
\foreach \x in {1,...,5}
\foreach \y in {3,4}
\node (\x\y) at (\x, \y) {};
\draw[dashed] (13) -- (23) -- (24);
\draw[dashed] (33) -- (43); 
\draw[dashed] (53) -- (54);
\draw[->] (14) to (13);
\draw[->] (13) to (24);
\draw[->] (24) to (14);
\draw[->] (23) to[out=-30,in=30,loop] ();
\draw[->] (33) to[out=60,in=120,loop] ();
\draw[->] (43) to[bend right] (53);
\draw[->] (53) to[bend right] (43);
\draw[->] (34) to[out=-30,in=30,loop] ();
\draw[->] (54) to[out=150,in=210,loop] ();
\foreach \i in {13,14,23,33,43,34,24,53,54} \draw[fill] (\i) circle [radius=1.5pt];
\foreach \i in {14,24,34} \node[anchor = south] at (\i) {$*$};
\end{tikzpicture}
\caption{Example of a diagram $(\pi,G)\in \mcD_{6}^3$ with $3$ linked component. Vertices labelled with $*$ denote external vertices, 
dashed lines denote $g$-edges and arrows denote $\gamma$-edges, i.e. an arrow from $i$ to $j$ denotes that $\pi(i) = j$.
Note that all internal vertices have at least one incident $g$-edge, 
that external vertices may have none, and that there are no $g$-edges between external vertices.
}
\label{fig.diagram}
\end{figure}

\begin{notation}
By a picture of a diagram, such as \Cref{fig.diagram}, 
we will also denote the value of the pictured diagram.
\end{notation}

To formulate our convergence criterion for the GGR expansion we additionally define the quantities
\begin{equation}\label{eqn.def.gamma.Ig.Igamma}
\begin{aligned}
	I_\gamma
		= \int_{[0,L]^d} \abs{\gamma^{(1)}(x)} \ud x,
	\qquad 
	I_g
		= \int_{\R^d} |g(x)| \ud x
\end{aligned}
\end{equation}
We shall bound $I_\gamma$ and $I_g$ in \Cref{lem.bdd.gamma.Ig.Igamma} below.
Note that $\frac{1}{L^d}\sum_{k\in \frac{2\pi}{L}\Z^d} \abs{\hat\gamma^{(1)}(k)} = \rho_0$.

\subsection{Calculation of \texorpdfstring{$Z_J$}{ZJ}}\label{sec.calc.ZJ}
We calculate $Z_{J}$. For simplicity  denote the diagonal of 
$\varGamma_n$ by $\varGamma_n = \varGamma_n(X_n) = \varGamma_n(X_n; X_n)$.
Then
\begin{equation*}
\begin{aligned}
Z_J
	& = \sum_{n=0}^{\infty} \idotsint \prod_{i<j} f_{ij}^2 \varGamma_n(X_n) \ud X_n
	\\
	& 
	= \sum_{n=0}^{\infty} \idotsint \prod_{i<j} (1+g_{ij}) \varGamma_n(X_n) \ud X_n 
	\\
	&
	= \sum_{n=0}^\infty \idotsint \left[1 + \sum_{p=2}^{n} \frac{n!}{(n-p)!p!}  W_p\right] \varGamma_n \ud X_n.
\end{aligned}
\end{equation*}
Now, if 
\begin{equation*}
	\sum_{n=0}^\infty \sum_{p=2}^{n} \frac{n!}{(n-p)!p!} \idotsint |W_p| \varGamma_n \ud X_n < \infty
\end{equation*}
then we may interchange the two sums. 
A criterion for this is given in  \Cref{lem.new.condition.abs.conv} below.
Thus, if the condition of \Cref{lem.new.condition.abs.conv} is satisfied,
namely that $\rho_0I_g$ is sufficiently small,
we have
\begin{equation*}
\begin{aligned}
Z_J
	& = Z \left[1 + 
		\sum_{p=2}^\infty \frac{1}{p!} 
		\idotsint \ud X_p W_p \left[\frac{1}{Z}\sum_{n=p}^{\infty} \frac{n!}{(n-p)!} \idotsint \ud X_{[p+1,n]}  
		\varGamma_n \right]\right]	
	\\
	& = Z \left[1+\sum_{p=2}^\infty \frac{1}{p!} \idotsint \ud X_p W_p \rho^{(p)}\right].
\end{aligned}
\end{equation*}
The free Fermi gas is a quasi-free state, and thus by Wick's rule we have
\begin{equation*}
	Z_J = 
		 Z \left[1 + \sum_{p=2}^\infty \frac{1}{p!} \idotsint \ud X_p W_p \det\left[\gamma^{(1)}_{ij}\right]_{i,j\leq p}\right]
\end{equation*}
Expanding $W_p$ and the determinant we get
\begin{equation*}
	Z_J = 
		Z \left[1+\sum_{p=2}^\infty \frac{1}{p!} \sum_{(\pi,G)\in \mcD_p} \Gamma_{\pi,G}\right].
\end{equation*}
This is exactly of the form where we can use the GGR expansion \cite[Lemma 3.6]{Lauritsen.2023}, \cite[Theorem 3.4]{Lauritsen.Seiringer.2023}.
From that we conclude that if $\rho_0I_g I_\gamma$ and $\rho_0I_g$ are sufficiently small then
\begin{equation}\label{eqn.Z_J.ini}
	Z_J = 
		Z \exp\left[\sum_{p=2}^\infty \frac{1}{p!} \sum_{(\pi,G)\in \mcL_p} \Gamma_{\pi,G}\right].
\end{equation}

\subsection{Calculation of \texorpdfstring{$\rho^{(q)}_J$}{rho(q)J}}\label{sec.calc.rhoJ}
Next, we calculate the reduced densities $\rho_J^{(q)}$ of the trial state $\varGamma_J$.
We have similarly as for $Z_J$ above
\begin{equation*}
\begin{aligned}
	\rho_J^{(q)}
	& = \frac{1}{Z_J} \sum_{n=q}^{\infty} \frac{n!}{(n-q)!}\idotsint \prod_{1\leq i < j \leq n} f_{ij}^2 \varGamma_n \ud X_{[q+1,n]}
	\\
	& = \frac{1}{Z_J} \prod_{1\leq i < j \leq q} f_{ij}^2 
		\sum_{n=q}^{\infty} \frac{n!}{(n-q)!} \sum_{p=0}^{n-q} \frac{(n-q)!}{p! (n-q-p)!}\idotsint W_p^q \varGamma_n \ud X_{[q+1,n]}.
\end{aligned}
\end{equation*}
By \Cref{lem.new.condition.abs.conv} below we may interchange the sums if $\rho_0I_g$ is sufficiently small.
Then 
\begin{equation*}
\begin{aligned}
	\rho_J^{(q)}
	& = 
		\frac{Z}{Z_J}  \prod_{1\leq i < j \leq q} f_{ij}^2  
			\sum_{p=0}^{\infty} \frac{1}{p!} \idotsint W_p^q 
			\left[\frac{1}{Z} \sum_{n=p+q}^\infty \frac{n!}{(n-p-q)!} \idotsint \varGamma_n \ud X_{[q+p+1,n]}\right]
			\ud X_{[q+1,q+p]}	
	\\
	& = \frac{Z}{Z_J} \prod_{1\leq i < j \leq q} f_{ij}^2  
		\sum_{p=0}^\infty \frac{1}{p!} \idotsint W_p^q \rho^{(p+q)}
			\ud X_{[q+1,q+p]}		 
\end{aligned}
\end{equation*}
Expanding the $W_p^q$ and using the Wick rule for the reduced densities of the free gas as above we get 
\begin{equation*}
\begin{aligned}
	\rho_J^{(q)}
	& = 
		\frac{Z}{Z_J}
		\prod_{1\leq i < j \leq q} f_{ij}^2
		\sum_{p=0}^\infty \frac{1}{p!} \sum_{(\pi,G)\in \mcD_p^q} \Gamma_{\pi,G}^q.
\end{aligned}
\end{equation*}
As above, we use the GGR expansion \cite[Lemma 3.6]{Lauritsen.2023}, \cite[Theorem 3.4]{Lauritsen.Seiringer.2023} to get 
\begin{equation}\label{eqn.rhoq_J.ini}
\begin{aligned}
	\rho_J^{(q)}
	& = 
		\prod_{1\leq i < j \leq q} f_{ij}^2  
		\sum_{p=0}^\infty \frac{1}{p!} \sum_{(\pi,G)\in \tilde\mcL_p^q} \Gamma_{\pi,G}^q
\end{aligned}
\end{equation}
for $\rho_0I_g I_\gamma$ and $\rho_0I_g$ small enough (dependent on $q$).

\subsection{A convergence criterion}
In this section we show
\begin{lemma}\label{lem.new.condition.abs.conv}
There exists a constant $c > 0$ such that if $\rho_0I_g < c$ then 
\begin{align}
\frac{1}{Z}
\sum_{n=0}^\infty \sum_{p=2}^{n} \frac{n!}{(n-p)!p!} \idotsint |W_p| |\varGamma_n| \ud X_n
	& 
	\leq \exp( CL^d \rho_0 I_g ) 
	< \infty,
\label{eqn.new.condition.q.=.0}
\\
\intertext{and for any $q \geq 1$}
\frac{1}{Z}
\sum_{n=q}^\infty \sum_{p=0}^{n-q} \frac{n!}{(n-q-p)!p!} \idotsint |W_p^q| |\varGamma_n| \ud X_{[q+1,n]}
	& 
	\leq C_q \rho_0^q \exp(C L^d \rho_0 I_g) 
	< \infty
\label{eqn.new.condition.q.geq.1}
\end{align}
uniformly in $x_1,\ldots,x_q$.
\end{lemma}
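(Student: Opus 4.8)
The plan is to first collapse the sum over total particle number $n$ into reduced densities of the free gas, then bound those determinants by Hadamard's inequality, and finally control the remaining sum over $g$-graphs by a tree-graph (cluster) estimate. Throughout, every summand is nonnegative: $\varGamma_n$ is the diagonal of a positive operator, hence $|\varGamma_n| = \varGamma_n \geq 0$, and $|W_p|, |W_p^q| \geq 0$. By Tonelli's theorem I may therefore reorganise the sums at will. Carrying out the sum over $n$ for fixed $p$ and recognising the definition \eqref{eqn.define.gamma(q)} of the reduced densities on the diagonal, the left-hand side of \eqref{eqn.new.condition.q.=.0} equals $\sum_{p\geq 2}\frac{1}{p!}\idotsint |W_p|\,\rho^{(p)}\ud X_p$, and the left-hand side of \eqref{eqn.new.condition.q.geq.1} equals $\sum_{p\geq 0}\frac{1}{p!}\idotsint |W_p^q(x_1,\dots,x_{q+p})|\,\rho^{(q+p)}\ud X_{[q+1,q+p]}$ with $x_1,\dots,x_q$ held fixed. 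Since the free state is quasi-free, $\rho^{(p)} = \det[\gamma^{(1)}_{ij}]_{i,j\leq p}$ is the determinant of the positive-semidefinite correlation matrix whose diagonal entries are $\gamma^{(1)}(0) = \rho_0$; Hadamard's inequality then gives the pointwise bound $\rho^{(p)} \leq \rho_0^{\,p}$.

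It remains to bound $\sum_{p}\frac{\rho_0^{\,p}}{p!}\idotsint |W_p|\ud X_p$ (and its analogue with external vertices). The naive estimate $|W_p| \leq \sum_{G\in\mcG_p}\prod_{e\in G}|g_e|$ is fatal: summing positive edge weights over all connected graphs on $m$ vertices produces a factor $2^{\binom{m}{2}}$, and $\sum_m \frac{\rho_0^{\,m}}{m!}2^{\binom{m}{2}}L^d I_g^{\,m-1}$ diverges for every $\rho_0 I_g > 0$. The essential point --- and the step I expect to be the main obstacle --- is to retain the signs and exploit that $0 \leq f \leq 1$ (since $f_0(x) \leq f_0(b) = 1 - a^d/b^d$ for $|x|\leq b$), so that $g = f^2 - 1 \in [-1,0]$. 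For such repulsive weights the tree-graph (Penrose) inequality bounds the \emph{signed} connected sum by a sum over spanning trees: $\bigl|\sum_{G\text{ connected on }V}\prod_{e\in G} g_e\bigr| \leq \sum_{T\text{ tree on }V}\prod_{e\in T}|g_e|$. Decomposing an arbitrary $G\in\mcG_p$ into its connected components (each of size $\geq 2$, since isolated vertices are excluded) and applying this to each component yields the pointwise bound $|W_p| \leq \sum \prod_{B} \Psi(B)$, where the sum is over partitions of $\{1,\dots,p\}$ into blocks $B$ of size $\geq 2$ and $\Psi(B) = \sum_{T\text{ tree on }B}\prod_{e\in T}|g_e|$.

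Now I would integrate and assemble. The blocks depend on disjoint variables, so the integral factorises over blocks; by Cayley's formula there are $m^{m-2}$ trees on a block of size $m$, and each integrates --- using translation invariance edge by edge along the tree and the compact support of $g$ --- to $L^d I_g^{\,m-1}$, with $I_g$ as in \eqref{eqn.def.gamma.Ig.Igamma}. The exponential (cluster) formula then gives $\sum_{p}\frac{\rho_0^{\,p}}{p!}\idotsint|W_p| \leq \exp\bigl(\sum_{m\geq 2}\frac{\rho_0^{\,m}}{m!}m^{m-2}L^d I_g^{\,m-1}\bigr)$. Since $m^{m-2}/m! \leq C e^m$, the exponent is a convergent power series in $\rho_0 I_g$ for $\rho_0 I_g$ small, dominated by its $m=2$ term; equivalently it is controlled by the mean number of interacting pairs $\tfrac12\langle\sum_{i<j}|g_{ij}|\rangle_0 \lesssim L^d \rho_0^2 I_g$. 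This establishes the exponential bound of \eqref{eqn.new.condition.q.=.0}.

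For \eqref{eqn.new.condition.q.geq.1} the same three steps apply, now with the $q$ external vertices pinned at $x_1,\dots,x_q$ and (by \Cref{defn.diagram}) carrying no edges among themselves. In the component decomposition I would separate the blocks that touch at least one external vertex from the purely internal ones. The purely internal blocks reproduce exactly the factor $\exp(CL^d\rho_0^2 I_g)$ above. Each externally anchored block is pinned at a fixed position, so integrating its internal vertices produces no volume factor $L^d$; summing over the possible internal attachments converges, for $\rho_0 I_g$ small, to a constant depending only on $q$, while Hadamard supplies one factor $\rho_0$ per external vertex. Collecting these gives the bound $C_q \rho_0^{\,q}\exp(CL^d\rho_0^2 I_g)$, uniformly in $x_1,\dots,x_q$, as claimed. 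The only delicate ingredient in the whole argument is the sign-dependent tree-graph bound of the second paragraph, without which the combinatorics of the $g$-graphs do not converge.
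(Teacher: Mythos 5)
Your proposal is correct and follows essentially the same route as the paper's proof: interchange of the sums by nonnegativity, Wick's rule plus Hadamard to get $\rho^{(p)}\le\rho_0^p$, decomposition of the $g$-graphs into connected components, the sign-dependent tree-graph (Penrose) bound on each component using $g=f^2-1\in[-1,0]$, Cayley's formula with translation-invariant edge-by-edge integration giving $L^d I_g^{m-1}$ per block, exponential resummation, and for $q\ge 1$ the separation of externally anchored components (no volume factor, a $q$-dependent constant, one factor $\rho_0$ per external vertex from Hadamard) from purely internal ones. The only step the paper makes explicit that you leave implicit is how to integrate a tree spanning several pinned external vertices — one iteratively cuts an edge on the path between two external vertices, bounding $|g_e|\le 1$, until each resulting piece contains exactly one pinned vertex — but this is the same ``trivial modification'' of the tree-graph bound that the paper itself defers to its references.
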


\begin{proof}
Write
\begin{equation*}
\begin{aligned}
\frac{1}{Z}
\sum_{n=0}^\infty \sum_{p=2}^{n} \frac{n!}{(n-p)!p!} \idotsint |W_p| |\varGamma_n| \ud X_n
	& = \frac{1}{Z} \sum_{p=2}^\infty\sum_{n=p}^{\infty} \frac{n!}{(n-p)!p!}\idotsint \ud X_n |W_p| \varGamma_n
	\\
	& = \sum_{p=2}^\infty \frac{1}{p!} \idotsint \ud X_p |W_p| \rho^{(p)}.
\end{aligned}
\end{equation*}
By splitting all graphs into their connected components we have
\begin{equation*}
W_p  = \sum_{G\in \mcG_p} \prod_{e\in G} g_e
	= \sum_{k=1}^\infty \frac{1}{k!} \sum_{n_1,\ldots,n_k\geq 2} \binom{p}{n_1,\ldots, n_k} \chi_{(\sum n_\ell = p)} 
		\prod_{\ell=1}^k \left[\sum_{G_\ell \in \mcC_{n_\ell}} \prod_{e_\in G_\ell} g_e\right].
\end{equation*}
Here $k$ is the number of connected components having sizes $n_1,\ldots,n_k$.
Note that $n_\ell \geq 2$ since each connected component needs at least $2$ vertices 
since any vertex in a graph $G\in \mcG_p$ is internal and hence connected to at least one other vertex.
The factor $\frac{1}{k!}$ comes from counting the possible labelings of the connected component 
and the factor $\binom{p}{n_1 \cdots n_k}$ comes from counting the possible labelings of the vertices in the different connected components.

By the tree-graph inequality \cite{Ueltschi.2018} we have
\begin{equation*}
\begin{aligned}
	\frac{1}{p!} \abs{W_p}
		& \leq \sum_{k=1}^\infty \frac{1}{k!} \sum_{n_1,\ldots,n_k\geq2} \frac{1}{n_1! \cdots n_k!} 
			\chi_{(\sum n_\ell = p)} \prod_{\ell=1}^k \left[\sum_{T_\ell \in \mcT_\ell} \prod_{e\in T_\ell} |g_e|\right]
\end{aligned}
\end{equation*}
By the Wick rule and Hadamard's inequality $\rho^{(p)} = \det \left[\gamma^{(1)}_{ij}\right]_{1\leq i,j\leq p} \leq \rho_0^p$.
Thus
\begin{align*}
	& \sum_{p=2}^\infty \frac{1}{p!} \idotsint \ud X_p |W_p| 
	\rho^{(p)}
	\\
	& \quad \leq 
		\sum_{k=1}^\infty \frac{1}{k! } 
		\sum_{n_1,\ldots,n_k\geq 2} \frac{1}{n_1!\cdots n_k!} 
		\rho_0^{\sum n_\ell}
		\prod_{\ell=1}^k\left[\sum_{T_\ell \in \mcT_\ell} \idotsint \prod_{e\in T_\ell} |g_e|\right]
\intertext{
For each tree the integration is over all variables, thus by the translation invariance the integration 
over the variables in the tree $T_\ell$ gives $L^d (\int |g|)^{n_\ell - 1}$.
Using moreover Cayley's formula $\# \mcT_n = n^{n-2}\leq C^n n!$ we get
}
	& \quad \leq 
		\sum_{k=1}^\infty \frac{1}{k!} 
		\sum_{n_1,\ldots,n_k\geq 2} \frac{1}{n_1!\cdots n_k!} 
		\rho_0^{\sum n_\ell}
		C^{\sum n_\ell} n_1! \cdots n_k! \left(\int |g|\right)^{\sum n_\ell - k} L^{dk}
	\\
	& \quad = 
		\sum_{k=1}^\infty \frac{1}{k!} 
		\left[C\rho_0L^d \sum_{n=2}^\infty (C\rho_0I_g)^{n-1} \right]^k 
	\\ & \quad 
		\leq \exp\left(CL^d \rho_0I_g\right)
		< \infty
\end{align*}
if $\rho_0I_g$ is sufficiently small.

The proof of \Cref{eqn.new.condition.q.geq.1} is in spirit the same. 
Write 
\begin{equation*}
\begin{aligned}
\frac{1}{Z}
\sum_{n=q}^\infty \sum_{p=0}^{n-q} \frac{n!}{(n-q-p)!p!} \idotsint |W_p^q| |\varGamma_n| \ud X_{[q+1,n]}
	& = \sum_{p=0}^\infty \frac{1}{p!} \idotsint \ud X_{[q+1,q+p]} |W_p^q| \rho^{(q+p)}.
\end{aligned}
\end{equation*}
By decomposing the graphs into their connected components we have 
\begin{equation*}
\begin{aligned}
W_p^q 
	& = 
	\sum_{\kappa=1}^q \frac{1}{\kappa!} 
	\sum_{\substack{(V_1^*,\ldots,V_\kappa^*) \\ \textnormal{partition of } \{1,\ldots,q\} \\ V_\lambda^* \ne \varnothing}}
	\sum_{n_1^*,\ldots,n_\kappa^* \geq 0}
	\sum_{k=0}^\infty \frac{1}{k!}
	\sum_{n_1,\ldots,n_k\geq 2}
	\binom{p}{n_1^*,\ldots,n_\kappa^*, n_1,\ldots,n_k} \chi_{(\sum_\lambda n^*_\lambda + \sum_\ell n_\ell = p)}
	\\
	& \quad \times 
	\prod_{\lambda=1}^\kappa 
	\left[
		\sum_{G^*_\lambda \in \mcC_{n^*_\lambda}^{V_\lambda^*}} 
		\prod_{e\in G_\lambda^*} g_e
	\right]
	\prod_{\ell=1}^k 
	\left[
		\sum_{G_\ell \in \mcC_{n_\ell}}
		\prod_{e\in G_\ell} g_e
	\right].
\end{aligned}
\end{equation*}
Here $\kappa$ is the number of connected components having external vertices and $k$ is the number of connected components
only with internal vertices. 
The partition $(V_1^*,\ldots,V_\kappa^*)$ partitions the external vertices into the $\kappa$ different connected components
with external vertices and the numbers $n_1^*,\ldots,n_\kappa^*$ 
are the number of internal vertices in the connected components with external vertices. 
The numbers $n_1,\ldots,n_k$ and the combinatorial factors are as above.

Using the tree-graph inequality as above we will obtain a sum of trees. 
(Technically we need to use a trivial modification of the tree-graph bound adapted to the setting with external vertices,
see \cites[Section 3.1.3]{Lauritsen.Seiringer.2023}[Section 4.2]{Lauritsen.2023} for details.)
Namely we will have factors like
\begin{equation*}
\sum_{T_\lambda^* \in \mcT_{n_\lambda^*}^{V^*_\lambda}} \prod_{e \in T_\lambda^*} |g_e|.
\end{equation*}
We bound these as follows. 
If $\# V_\lambda^* = 1$ we do nothing and define $T_{\lambda,1}^* = T_\lambda^*$.
Otherwise iteratively pick any edge on the path between any two external vertices
and bound the factor $|g_e|\leq 1$. Remove this edge from $T_\lambda^*$. 
Repeating this procedure $\# V^*_\lambda - 1$ many times results in $\# V^*_\lambda$ many trees all with exactly $1$ external vertex.
Label these as $T_{\lambda,1}^*,\ldots,T_{\lambda,\#V_\lambda^*}^*$.
We then have the bound 
\begin{equation*}
\prod_{e \in T_\lambda^*} |g_e|
	\leq \prod_{\nu = 1}^{\#V_\lambda^*} \prod_{e\in T_{\lambda,\nu}^*} |g_e|.
\end{equation*}
Using this bound together with Hadamard's inequality 
as above we get
\begin{align*}
	& \sum_{p=0}^\infty \frac{1}{p!} \idotsint \ud X_{[q+1,q+p]} |W_p^q| \rho^{(q+p)}
	\\
	& \quad 
		\leq 
		\sum_{\kappa=1}^q \frac{1}{\kappa!}
		\sum_{\substack{(V_1^*,\ldots,V_\kappa^*) \\ \textnormal{part. of } \{1,\ldots,q\} \\ V_\lambda^* \ne \varnothing}}
		\sum_{n_1^*,\ldots,n_\kappa^* \geq 0}
		\sum_{k=0}^\infty \frac{1}{k!}
		\sum_{n_1,\ldots,n_k\geq 2}
		\frac{1}{\prod_{\lambda=1}^\kappa n_\lambda^* ! \prod_{\ell=1}^k n_\ell !}
		\rho_0^{q + \sum_\lambda n_\lambda^* + \sum_\ell n_\ell}
	\\
	& \qquad \times
		\sum_{T_\lambda^* \in \mcT_{n_\lambda^*}^{\# V_\lambda^*}} 
		\sum_{T_\ell \in \mcT_{n_\ell}}
		\left[
		\prod_{\lambda=1}^\kappa
		\prod_{\nu = 1}^{\#V_\lambda^*} 
		\idotsint
		\prod_{e\in T_{\lambda,\nu}^*} |g_e|
		\right]
		\left[
		\prod_{\ell=1}^k 
		\idotsint
		\prod_{e\in T_{\ell}}
		|g_e|
		\right].
\intertext{
In the integrations each tree $T_{\lambda,\nu}^*$ is integrated over all but the one external vertex and so gives a value 
$(\int |g|)^{\# T_{\lambda,\nu}^* - 1}$
and each tree $T_\ell$ is integrated over all coordinates giving the value $(\int |g|)^{n_\ell - 1} L^d$.
Moreover $\sum_\nu (\# T_{\lambda,\nu}^* - 1) = n_\lambda^*$.
Thus, using additionally Cayley's formula,
}
	& \quad 
		\leq 
		\sum_{\kappa=1}^q \frac{1}{\kappa!}
		\sum_{\substack{(V_1^*,\ldots,V_\kappa^*) \\ \textnormal{part. of } \{1,\ldots,q\} \\ V_\lambda^* \ne \varnothing}}
		\sum_{n_1^*,\ldots,n_\kappa^* \geq 0}
		\sum_{k=0}^\infty \frac{1}{k!}
		\sum_{n_1,\ldots,n_k\geq 2}
		\frac{1}{\prod_{\lambda=1}^\kappa n_\lambda^* ! \prod_{\ell=1}^k n_\ell !}
		\rho_0^{q + \sum_\lambda n_\lambda^* + \sum_\ell n_\ell}
	\\
	& \qquad \times 
		C^{\sum_\lambda n_\lambda^* + \sum_\ell n_\ell}
		\prod_{\lambda=1}^\kappa (n_\lambda^* + \# V_\lambda^*)! 
		\prod_{\ell=1}^k n_\ell!
		\left(\int |g|\right)^{\sum_\lambda n_\lambda^* + \sum_\ell n_\ell}
		L^{dk}.
\intertext{
Next, we may bound the binomial coefficients as $(n+m)! \leq 2^{n+m}n!m!$ so 
$\prod_{\lambda=1}^\kappa (n_\lambda^* + \# V_\lambda^*)! \leq 2^{\sum_\lambda n_\lambda^* + q} \prod_\lambda n_\lambda^* ! (\# V_\lambda^*)!$. 
Thus
}
	& \quad 
		\leq 
			2^q
			\sum_{\kappa=1}^q \frac{1}{\kappa!} 
			\sum_{\substack{(V_1^*,\ldots,V_\kappa^*) \\ \textnormal{part. of } \{1,\ldots,q\} \\ V_\lambda^* \ne \varnothing}}
			\prod_{\lambda=1}^\kappa (\# V_\lambda^*)!
		\left[
			\sum_{n^* = 0}^\infty (C \rho_0I_g)^{n_*}
		\right]^{\kappa}
		\rho_0^q
		\sum_{k=0}^{\infty}
		\frac{1}{k!}
		\left[
			CL^d \rho_0\sum_{n=2}^\infty \left(C\rho_0I_g\right)^{n-1}
		\right]^k
	\\
	& \quad 
		\leq 
		C_q \rho_0^q \exp\left(CL^d \rho_0I_g\right) < \infty
\end{align*}
if $\rho I_g$ is small enough.
\end{proof}

\subsection{Calculation of \texorpdfstring{$I_g, I_\gamma$}{Ig, Igamma}}\label{sec.bdd.gamma.Ig.Igamma}
In this section we bound the quantities 
$I_g$ and $I_\gamma$ defined in \Cref{eqn.def.gamma.Ig.Igamma} above.
We show (recall $\zeta = 1 + \abs{\log z}$)
\begin{lemma}\label{lem.bdd.gamma.Ig.Igamma}
The quantities $I_g$ and $I_\gamma$ 
satisfy 
\begin{equation*}
\begin{aligned}
	I_g 
	\lesssim a^d \log b/a,
	\qquad
	I_\gamma 
	\lesssim \zeta^{d/2}.
\end{aligned}
\end{equation*}
\end{lemma}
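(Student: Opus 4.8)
The bound on $I_g$ is immediate: by definition $g=f^2-1$, so $I_g=\int_{\R^d}|g(x)|\ud x=\int_{\R^d}|1-f(x)^2|\ud x$, which is exactly the left-hand side of \eqref{eqn.Ig.bdd} with $n=0$; hence $I_g\lesssim a^d\log b/a$. The bound on $I_\gamma$ is the substantial part, and the plan is to deduce it from $L^2$ control of $\gamma^{(1)}$ via a weighted Cauchy--Schwarz inequality. Working on the centred fundamental domain $[-L/2,L/2]^d$, so that the Euclidean norm $|x|$ coincides with the torus distance, I would write for a parameter $\lambda>0$ to be optimised, using $2>d/2$ for $d\le 3$,
\[
I_\gamma\le\Big(\int_{[-L/2,L/2]^d}(1+\lambda|x|^2)^{-2}\ud x\Big)^{1/2}\Big(\int_{[-L/2,L/2]^d}(1+\lambda|x|^2)^2|\gamma^{(1)}(x)|^2\ud x\Big)^{1/2}.
\]
The first factor is bounded by $\big(\int_{\R^d}(1+\lambda|x|^2)^{-2}\ud x\big)^{1/2}\lesssim\lambda^{-d/4}$.

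For the second factor I would use $(1+\lambda|x|^2)^2\le 2+2\lambda^2|x|^4$, reducing matters to the two moments $\int|\gamma^{(1)}|^2$ and $\int|x|^4|\gamma^{(1)}|^2$. The zeroth moment is handled by Parseval: $\int_{[-L/2,L/2]^d}|\gamma^{(1)}|^2=\frac1{L^d}\sum_k\hat\gamma^{(1)}(k)^2$, and since $\hat\gamma^{(1)}(k)^2=\hat\gamma(k)^2/(1+\hat\gamma(k))^2$, this is $\lesssim\rho_0$ by \eqref{lem.bdd.sum.gamma.k} (case $p=0$, $n=m=2$).

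The fourth moment is the crux, because in finite volume the position weight $|x|^4$ does not translate directly into a momentum derivative: the coordinate $x_i$ jumps across the periodic boundary. To deal with this I would use the exact discrete identity $(e^{-i\eta x_i}-1)^2\gamma^{(1)}(x)=\frac1{L^d}\sum_k(\delta_i^2\hat\gamma^{(1)})(k)e^{-ikx}$ with $\eta=2\pi/L$ and $\delta_i^2$ the second finite difference in the $k_i$-direction, together with the elementary inequality $|x_i|\le\frac{L}{4}|e^{-i\eta x_i}-1|$ on $[-L/2,L/2]$ (from $|\sin\theta|\ge\frac2\pi|\theta|$ for $|\theta|\le\pi/2$). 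Parseval then gives $\int x_i^4|\gamma^{(1)}|^2\lesssim L^4\,\frac1{L^d}\sum_k|\delta_i^2\hat\gamma^{(1)}(k)|^2$, and bounding the second difference by the second derivative, $|\delta_i^2\hat\gamma^{(1)}|\lesssim\eta^2\sup|\partial_{k_i}^2\hat\gamma^{(1)}|$, reduces everything to $\frac1{L^d}\sum_k|\partial_{k_i}^2\hat\gamma^{(1)}(k)|^2$. A direct computation shows that $\partial_{k_i}^2\hat\gamma^{(1)}$ is a combination of terms of the form $\beta\,\hat\gamma/(1+\hat\gamma)^2$ and $\beta^2k_i^2\,\hat\gamma/(1+\hat\gamma)^2$ (up to bounded factors $(1-\hat\gamma)/(1+\hat\gamma)$), so its square is controlled by $(\beta^2+\beta^4|k|^4)\,\hat\gamma^2/(1+\hat\gamma)^4$. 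Applying \eqref{lem.bdd.sum.gamma.k} in the cases $p=0$ and $p=4$ (with $n=2$, $m=4$) yields $\frac1{L^d}\sum_k|\partial_{k_i}^2\hat\gamma^{(1)}|^2\lesssim\beta^2\rho_0+\beta^4\rho_0^{1+4/d}$, and using $\beta\sim\zeta\rho_0^{-2/d}$ from \Cref{rmk.asym.beta} this is $\lesssim\zeta^4\rho_0^{1-4/d}$. Summing over $i$ gives $\int|x|^4|\gamma^{(1)}|^2\lesssim\zeta^4\rho_0^{1-4/d}$.

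Putting the pieces together, the second Cauchy--Schwarz factor is $\lesssim(\rho_0+\lambda^2\zeta^4\rho_0^{1-4/d})^{1/2}$, so $I_\gamma\lesssim\lambda^{-d/4}(\rho_0+\lambda^2\zeta^4\rho_0^{1-4/d})^{1/2}$. Choosing $\lambda=\rho_0^{2/d}\zeta^{-2}$ balances the two terms in the bracket at $\rho_0$ and gives $\lambda^{-d/4}=\rho_0^{-1/2}\zeta^{d/2}$, whence $I_\gamma\lesssim\zeta^{d/2}$, as claimed. I expect the main obstacle to be the finite-volume bookkeeping in the fourth-moment estimate: the passage from the position weight to finite differences in $k$, and in particular the uniform-in-$L$ comparison $\frac1{L^d}\sum_k|\delta_i^2\hat\gamma^{(1)}(k)|^2\lesssim\eta^4\,\frac1{L^d}\sum_k|\partial_{k_i}^2\hat\gamma^{(1)}(k)|^2$, which I would justify by Cauchy--Schwarz in the difference parameter together with the Riemann-sum comparison already used in the proof of \Cref{lem.bdd.sum.gamma.k.ini}, applied to the shifted lattice. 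Everything else is a direct application of \eqref{lem.bdd.sum.gamma.k} and \eqref{eqn.Ig.bdd}.
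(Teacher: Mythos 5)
Your proof is correct, and it follows essentially the same route as the paper: bound $I_g$ by \Cref{eqn.Ig.bdd}, then control $I_\gamma$ by a weighted Cauchy--Schwarz inequality, convert the weighted $L^2$ norm of $\gamma^{(1)}$ into momentum-space derivative bounds on $\hat\gamma^{(1)}$, estimate the resulting sums by \Cref{lem.bdd.sum.gamma.k}, and optimise the weight parameter (your $\lambda=\rho_0^{2/d}\zeta^{-2}$ corresponds to the paper's length $\lambda=\beta^{1/2}\zeta^{1/2}$, and both land on $I_\gamma\lesssim\zeta^{d/2}$). The one place where you genuinely deviate is the treatment of the position weight: the paper keeps $(\lambda^2+|x|^2)\gamma^{(1)}$ intact and identifies its torus Fourier coefficient with $[\lambda^2-\Delta_k]\hat\gamma^{(1)}(k)$, computing the latter explicitly from $\hat\gamma^{(1)}=\hat\gamma/(1+\hat\gamma)$; this identification is formal in finite volume, since $|x|^2$ is not periodic and multiplication by it does not exactly correspond to differentiation on the discrete momentum lattice. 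Your finite-difference substitute --- the exact identity for $(e^{-i\eta x_i}-1)^2\gamma^{(1)}$ with $\eta=2\pi/L$, the pointwise bound $|x_i|\leq\frac{L}{4}|e^{-i\eta x_i}-1|$ on the fundamental domain, and the comparison of second differences with second derivatives --- is a more careful implementation of exactly that step, at the cost of the extra Riemann-sum bookkeeping you flag at the end. So you buy uniform-in-$L$ rigour on a point the paper passes over quickly, while the paper's version is shorter because the explicit formula for $[\lambda^2-\Delta_k]\hat\gamma^{(1)}(k)$ feeds directly into \Cref{lem.bdd.sum.gamma.k} without splitting the weight into moments.
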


\noindent
Note that these bounds are uniform in the volume $L^d$.


\begin{proof}
The bound $I_g \leq C a^d \log b/a$ follows from \Cref{eqn.Ig.bdd}.
For $I_\gamma$ we have for any (length) $\lambda > 0$
\begin{equation*}
\begin{aligned}
	I_\gamma
	& 
  = \int_{[0,L]^d} \abs{\gamma^{(1)}(x)}
	\\ 	& 
	\leq \left(\int_{\R^d} \abs{\gamma^{(1)}(x)}^2 (\lambda^2 + |x|^2)^{2} \ud x\right)^{1/2} \left(\int_{\R^d} \frac{1}{(\lambda^2+|x|^2)^2} \ud x \right)^{1/2}	
	\\ 	& 
  = C \lambda^{d/2-2} \left[\frac{1}{L^d} \sum_{k\in \frac{2\pi}{L}\Z^d} \abs{\widehat{(\lambda^2+|x|^2)\gamma^{(1)}}(k)}^2\right]^{1/2}
\end{aligned}
\end{equation*}
Moreover (with $\hat\gamma(k) = ze^{- \beta|k|^2}$ is as in \Cref{lem.bdd.sum.gamma.k})
\begin{equation*}
\begin{aligned}
\widehat{(\lambda^2+|x|^2)\gamma^{(1)}}(k)
	& = \left[\lambda^2 - \Delta_k\right] \hat \gamma^{(1)}(k)
	\\
	& = \frac{\lambda^2 \hat\gamma(k)^3 + (2\lambda^2 - 4\beta^2|k|^2 - 2d\beta) \hat \gamma(k)^2 
		+ (\lambda^2 + 4\beta^2|k|^2 - 2d\beta) \hat\gamma(k)
			}{(1 + \hat\gamma(k))^3}.
\end{aligned}
\end{equation*}
Using \Cref{lem.bdd.sum.gamma.k} we conclude that 
\begin{equation*}
\frac{1}{L^d} \sum_{k\in\frac{2\pi}{L}\Z^d} \abs{\widehat{(\lambda^2+|x|^2)\gamma^{(1)}}(k)}^2
  \leq C \rho_0\left(\lambda^4 + \beta^{4}\rho_0^{4/d} + \beta^2\right)
  \leq C \rho_0\left(\lambda^4 + \zeta^2\beta^2\right).
\end{equation*}
Thus for $\lambda = \beta^{1/2}\zeta^{1/2}$ we have $I_\gamma\lesssim \zeta^{d/2}$.
(Recall that $\beta \sim \zeta \rho_0^{-2/d}$ by \Cref{rmk.asym.beta}.)
\end{proof}

\noindent
We conclude that $\rho_0I_g I_\gamma \leq C a^d \rho_0\zeta^{d/2} \log b/a$.
Thus, the conditions of the calculations in \Cref{sec.calc.ZJ,sec.calc.rhoJ} are valid for $a^d\rho_0\zeta^{d/2}  \log b/a$ small enough,
independently of the volume $L^d$.
Recalling \Cref{eqn.Z_J.ini,eqn.rhoq_J.ini} we summarize the computations of this section.

\begin{lemma}\label{lem.conv.GGR}
For any $q_0$ there exists a constant $c_{q_0} > 0$ independently of $L$ such that if $a^d\rho_0\zeta^{d/2}  \log b/a < c_{q_0}$ then 
\begin{align}
	Z_J 
	& = 
		Z \exp\left[\sum_{p=2}^\infty \frac{1}{p!} \sum_{(\pi,G)\in \mcL_p} \Gamma_{\pi,G}\right],
	\label{eqn.Z_J}
	\\
	\rho_J^{(q)}
	& = 
		\prod_{1\leq i < j \leq q} f_{ij}^2  
		\sum_{p=0}^\infty \frac{1}{p!} \sum_{(\pi,G)\in \tilde\mcL_p^q} \Gamma_{\pi,G}^q.
	\label{eqn.rhoq_J}
\end{align}
for any $q \leq q_0$.
\end{lemma}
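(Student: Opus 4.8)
The plan is to recognize that this lemma merely assembles the pieces established earlier in \Cref{sec.GGR.expansion}: the identities \Cref{eqn.Z_J,eqn.rhoq_J} are verbatim \Cref{eqn.Z_J.ini,eqn.rhoq_J.ini}, already derived in \Cref{sec.calc.ZJ,sec.calc.rhoJ}, so the only remaining task is to certify that the hypotheses under which those identities were obtained hold uniformly in $q \leq q_0$ and, crucially, uniformly in $L$. Those hypotheses are precisely that $\rho_0 I_g I_\gamma$ and $\rho_0 I_g$ be sufficiently small, the smallness thresholds arising from \Cref{lem.new.condition.abs.conv} (which licenses the interchange of the sums over $n$ and $p$) and from the GGR resummation theorems of \cite{Lauritsen.2023,Lauritsen.Seiringer.2023} (which pass from the full diagram sum to the sum over linked diagrams in the exponent).

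First I would invoke \Cref{lem.bdd.gamma.Ig.Igamma} to convert these abstract conditions into the single explicit parameter appearing in the statement. From $I_g \lesssim a^d \log b/a$ and $I_\gamma \lesssim \zeta^{d/2}$ one obtains
\[
\rho_0 I_g I_\gamma \lesssim a^d \rho_0\, \zeta^{d/2} \log b/a,
\qquad
\rho_0 I_g \lesssim a^d \rho_0 \log b/a.
\]
Since $\zeta = 1 + \abs{\log z} \geq 1$, the second quantity is dominated by the first, so both smallness conditions are implied by requiring the single product $a^d \rho_0 \zeta^{d/2} \log b/a$ to be small enough.

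Next I would dispose of the $q$-dependence. The thresholds furnished by \Cref{lem.new.condition.abs.conv} (through the $q$-dependent constant $C_q$) and by the GGR resummation depend on $q$, but \Cref{eqn.rhoq_J} is only needed for $q \leq q_0$. I would accordingly take the infimum of the relevant thresholds over the finite set $q \in \{0, \dots, q_0\}$ — which is a minimum of finitely many strictly positive numbers, hence strictly positive — and absorb it, together with the implicit constants from the previous paragraph, into the definition of $c_{q_0}$. With $c_{q_0}$ so chosen, the hypothesis $a^d \rho_0 \zeta^{d/2} \log b/a < c_{q_0}$ simultaneously guarantees every convergence condition, and \Cref{eqn.Z_J,eqn.rhoq_J} follow.

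The step I regard as the genuinely substantive one, as opposed to bookkeeping, is the claim that $c_{q_0}$ can be chosen independently of $L$; this is also the feature that distinguishes the positive-temperature analysis from its zero-temperature counterpart. It rests entirely on the bound $I_\gamma \lesssim \zeta^{d/2}$ of \Cref{lem.bdd.gamma.Ig.Igamma} being uniform in the volume. The convergence thresholds of \Cref{lem.new.condition.abs.conv} and of the cited GGR theorems are themselves $L$-independent, so once $I_g$ and $I_\gamma$ are bounded uniformly in $L$ the entire argument becomes volume-independent, which is exactly what allows the thermodynamic limit in \Cref{eqn.calc.free.energy.initial} to be taken directly, without the box-localization used at zero temperature in \cite{Lauritsen.Seiringer.2023}.
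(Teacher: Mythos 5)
Your proposal is correct and follows essentially the same route as the paper: the paper likewise treats the lemma as a summary of \Cref{eqn.Z_J.ini,eqn.rhoq_J.ini}, inserts the volume-uniform bounds $I_g \lesssim a^d\log b/a$ and $I_\gamma \lesssim \zeta^{d/2}$ from \Cref{lem.bdd.gamma.Ig.Igamma} to reduce the smallness hypotheses of \Cref{lem.new.condition.abs.conv} and of the cited GGR resummation theorems to the single condition $a^d\rho_0\zeta^{d/2}\log b/a < c_{q_0}$, and absorbs the (finitely many, $q$-dependent) thresholds for $q \leq q_0$ into $c_{q_0}$. Your emphasis on the $L$-independence of $I_\gamma$ as the substantive point also matches the paper's own remark that this is what permits taking the thermodynamic limit directly.
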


\section{Calculation of terms}\label{sec.calc.terms}
In this section we compute and bound the different terms in \Cref{eqn.calc.free.energy.initial}
and thereby prove \Cref{lem.pressure.high.temp}.

\subsection{Energy}
The kinetic energy of the trial state $\varGamma_J$ is 
\begin{equation*}
\begin{aligned}
	\expect{\mcH}_J
	& = \frac{1}{Z_J} \sum_{n=1}^\infty \idotsint \left[(-\Delta_{X_n}) \left[F(X_n) \varGamma(X_n, Y_n) F(Y_n)\right] \right]_{Y_n = X_n} \ud X_n
	\\
	& = \frac{1}{Z_J} \sum_{n=1}^\infty 
		\idotsint \left(|\nabla_{X_n} F|^2 \varGamma_n(X_n;X_n) - F_n^2 (\Delta_{X_n} \varGamma_n)(X_n;X_n)\right) \ud X_n.
\end{aligned}
\end{equation*}
The second term may be calculated as 
\begin{equation*}
\begin{aligned}
\frac{1}{Z_J} \sum_{n=1}^\infty \idotsint F_n^2 (-\Delta_{X_n} \varGamma_n)(X_n;X_n) \ud X_n
	& = \frac{Z}{Z_J} \Tr[F^2 \mcH \varGamma]
	\\
	& = \frac{1}{Z_J} \Tr[ F^2 (-\partial_\beta (Z\varGamma) + \mu\mcN Z \varGamma)]
	\\
	& = - \partial_\beta \log Z_J + \mu \expect{\mcN}_J
\end{aligned}
\end{equation*}
For the first term we have that 
\begin{equation*}
|\nabla_{X_n} F_n|^2
	= \left[2\sum_{j<k} \abs{\frac{\nabla f_{jk}}{f_{jk}}}^2
				+ \sum_{\substack{i,j,k \\ \textnormal{all distinct}}} \frac{\nabla f_{ij}\nabla f_{jk}}{f_{ij} f_{jk}}\right] F_n^2.
\end{equation*}
Thus the full energy is 
\begin{equation}\label{eqn.energy}
\begin{aligned}
\expect{\mcH  - \mu\mcN + \mcV}_J
	& =  - \partial_\beta \log Z_J + \iint \left[\abs{\frac{\nabla f_{12}}{f_{12}}}^2 + \frac{1}{2} v_{12}\right] \rho^{(2)}_J \ud x_1 \ud x_2
	\\ & \quad 
		+ \iiint \frac{\nabla f_{12} \nabla f_{13}}{f_{12} f_{13}}\rho^{(3)}_J \ud x_1 \ud x_2 \ud x_3.
\end{aligned}
\end{equation}


\subsection{Entropy}
We note that $\varGamma_J = \frac{Z}{Z_J} F \varGamma F$ is isospectral to $\frac{Z}{Z_J} \varGamma^{1/2} F^2 \varGamma^{1/2}$. 
Moreover, since $F\leq 1$ we have $\varGamma^{1/2} F^2 \varGamma^{1/2}\leq \varGamma$ as operators. 
Thus by operator monotonicity of the logarithm
\begin{equation*}
\begin{aligned}
	 \Tr\left[\varGamma_J \log \varGamma_J\right]
	& =   \frac{Z}{Z_J} \Tr\left[\varGamma^{1/2} F^2 \varGamma^{1/2} \left(\log \frac{Z}{Z_J} + \log   \varGamma^{1/2} F^2 \varGamma^{1/2}\right)\right]
	\\
	& \leq \log \frac{Z}{Z_J} + \frac{Z}{Z_J} \Tr\left[\varGamma^{1/2} F^2 \varGamma^{1/2}\log  \varGamma\right]
	\\
	& =  - \log Z_J - \beta \frac{Z}{Z_J} \Tr\left[F^2 \varGamma (\mcH - \mu \mcN)\right]
	\\
	& = - \log Z_J + \frac{1}{Z_J} \beta \partial_\beta \Tr\left[F^2 Z \varGamma\right]
	\\
	& = - \log Z_J + \beta \partial_\beta \log Z_J
\end{aligned}
\end{equation*}
We conclude the bound on the entropy
\begin{equation}\label{eqn.entropy}
\begin{aligned}
	-\frac{1}{\beta} S(\varGamma_J)
	& = \frac{1}{\beta} \Tr[\varGamma_J \log \varGamma_J]
	\leq -\frac{1}{\beta} \log Z_J + \partial_\beta \log Z_J.
\end{aligned}
\end{equation}

\subsection{Pressure}
Combining \Cref{eqn.energy,eqn.entropy} the terms $\pm \partial_\beta \log Z_j$ cancel and we conclude the bound for the pressure
\begin{equation*}
\begin{aligned}
  L^d P[\varGamma_J]
	&
	= - \expect{\mcH  - \mu\mcN + \mcV}_J + \frac{1}{\beta} S(\varGamma_J)
	\\ & 
	\geq 
		 \frac{1}{\beta} \log Z_J
		- \iint \left[\abs{\frac{\nabla f_{12}}{f_{12}}}^2 + \frac{1}{2} v_{12}\right] \rho^{(2)}_J \ud x_1 \ud x_2
		- \iiint \frac{\nabla f_{12} \nabla f_{13}}{f_{12} f_{13}}\rho^{(3)}_J \ud x_1 \ud x_2 \ud x_3.
\end{aligned}
\end{equation*}

\begin{remark}\label{rmk.energy.vs.free}
The cancellation of the terms $\pm \partial_\beta \log Z_j$ is not essential. 
Namely the energy of the trial state $\varGamma_J$ is the energy of the free gas plus the relevant interaction term up to small errors.
And the entropy of the trial state $\varGamma_J$ is bounded from above by the entropy of the free gas up to small errors.
To see this write
\[
    -\partial_\beta \log Z_J = -\partial_\beta \log Z - \partial_\beta \log \frac{Z_J}{Z}
        = \expect{\mcH - \mu\mcN}_0 - \partial_\beta \log \frac{Z_J}{Z}.
\]
One can show that $\partial_\beta \log \frac{Z_J}{Z}$ is small compared to the interaction of order $L^d a^d \rho_0^{2+2/d}$.
Thus, the energy of the trial state $\varGamma_J$ is 
\[
    \expect{\mcH  - \mu\mcN + \mcV}_J
    = \expect{\mcH - \mu\mcN}_0 + \iint \left[\abs{\frac{\nabla f_{12}}{f_{12}}}^2 + \frac{1}{2} v_{12}\right] \rho^{(2)}_J \ud x_1 \ud x_2
    + \textnormal{small error}.
\]
Similarly for the entropy
\begin{equation*}
\begin{aligned}
-\frac{1}{\beta} \log Z_J + \partial_\beta \log Z_J
  & = -\frac{1}{\beta}\log Z + \partial_\beta \log Z - \frac{1}{\beta} \log \frac{Z_J}{Z} + \partial_\beta \log \frac{Z_J}{Z}
  \\ 
  & = -\frac{1}{\beta}S(\varGamma) - \frac{1}{\beta} \log \frac{Z_J}{Z} + \partial_\beta \log \frac{Z_J}{Z}.
\end{aligned}
\end{equation*}
We show below that $\frac{1}{\beta} \log \frac{Z_J}{Z}$ is small compared to the interaction term of size $L^d a^d \rho_0^{2+2/d}$.
Thus the entropy of the trial state $\varGamma_J$ may be bounded as
\[
  -\frac{1}{\beta} S(\varGamma_J)
  \leq -\frac{1}{\beta} S(\varGamma) + \textnormal{small error}.
\]
The proof that $\partial_\beta \log \frac{Z_J}{Z}$ is small is somewhat analogous to the proof of \Cref{lem.bdd.eps.errors} in \Cref{sec.bdd.eps.errors}.
As we will not need it, we omit the details.
\end{remark}

By \Cref{eqn.rhoq_J}, we have for $a^d\rho_0\zeta^{d/2} \log b/a$ sufficiently small that 
\begin{equation*}
\rho_J^{(2)} = f_{12}^2 \left[\rho^{(2)} + \sum_{p=1}^\infty \frac{1}{p!} \sum_{(\pi,G)\in \tilde\mcL_p^2} \Gamma_{\pi,G}^2\right]
\end{equation*}
We may then write 
\begin{equation}\label{eqn.free.energy}
\begin{aligned}
  & L^d P[\varGamma_J]
  \\ & \quad \geq
		\frac{1}{\beta} \log Z
		- \iint \left[\abs{\nabla f_{12}}^2 + \frac{1}{2} v_{12} f_{12}^2\right] \rho^{(2)} \ud x_1 \ud x_2
		+ \underbrace{\frac{1}{\beta} \log \frac{Z_J}{Z}}_{\eps_Z}
	\\ & \qquad 
		- \underbrace{\iint \left[\abs{\nabla f_{12}}^2 + \frac{1}{2} v_{12}f_{12}^2\right] 
			\sum_{p=1}^\infty \frac{1}{p!} \sum_{(\pi,G)\in \tilde\mcL_p^2} \Gamma_{\pi,G}^2 \ud x_1 \ud x_2}_{\eps_{2}}
		- \underbrace{
		\vphantom{\iint \left[\abs{\nabla f_{12}}^2 + \frac{1}{2} v_{12}f_{12}^2\right] 
			\sum_{p=1}^\infty \frac{1}{p!} \sum_{(\pi,G)\in \tilde\mcL_p^2} \Gamma_{\pi,G}^2 \ud x_1 \ud x_2}
		\iiint \frac{\nabla f_{12} \nabla f_{13}}{f_{12} f_{13}}\rho^{(3)}_J \ud x_1 \ud x_2 \ud x_3}_{\eps_{3}}.
\end{aligned}
\end{equation}
The first term is the pressure of the free gas (times the volume), 
the second term leads to the leading order correction, 
and the remaining terms are error terms.
We shall show in \Cref{sec.bdd.eps.errors} below the following bounds. (Recall that $\zeta = 1 + \abs{\log z}$.)
\begin{lemma}\label{lem.bdd.eps.errors}
For $z\gtrsim 1$ there exists a constant $c > 0$ such that if $a^d\rho_0 \zeta^{d/2} \abs{\log a^d \rho_0} < c$
then, for sufficently large $L$, the error-terms are bounded as 
\begin{align*}
	\frac{\abs{\eps_Z}}{L^d} 
  & \lesssim 
     a^db^2\rho_0^{2+4/d}\zeta^{-1} 
    + a^{2d} \rho_0^{3 + 2/d} \zeta^{d/2-1}(\log b/a)^2
	\\
	\frac{\abs{\eps_2}}{L^d} 
  & \lesssim 
  \begin{cases}
  a^{2d} \rho_0^{3+2/d} \log b/a
  +
  a^{4d-2}\rho_0^5 \zeta^{3d/2}  (\log b/a)^3
  & d \geq 2,
  \\
  a b \rho_0^5 \log b/a 
  + a^2 \rho_0^5 \zeta^{3/2} (\log b/a)^3 
  & d = 1,
  \end{cases}
	\\
	\frac{\abs{\eps_3}}{L^d} & \lesssim
  \begin{cases}
     a^{2d} b^2 \rho_0^{3+4/d}
    + a^{3d-2}  \rho_0^4\zeta^{d/2} \log b/a
    & d\geq 2 ,
    \\
     a^2 \rho_0^5 \zeta   (\log b/a)^2 & d=1.
  \end{cases}
\end{align*}
\end{lemma}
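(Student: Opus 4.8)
The plan is to feed the GGR expansion of \Cref{lem.conv.GGR} into the three error terms of \Cref{eqn.free.energy} and to estimate the resulting linked-diagram sums, isolating in each case a single leading family of diagrams that must be evaluated with care while bounding the rest crudely. The tools are the tree-graph and Hadamard bounds already used in \Cref{lem.new.condition.abs.conv}, the volume-uniform estimates $I_g\lesssim a^d\log b/a$ and $I_\gamma\lesssim\zeta^{d/2}$ of \Cref{lem.bdd.gamma.Ig.Igamma}, the scattering-function bounds of \Cref{lem.bdd.int.f}, and the relation $\beta^{-1}\sim\rho_0^{2/d}\zeta^{-1}$ from \Cref{rmk.asym.beta}. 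The organising principle is fermionic: for a fixed $g$-graph $G$, summing the value $\Gamma_{\pi,G}$ over the admissible permutations $\pi$ reassembles, by Wick's rule, a determinant of $\gamma^{(1)}$'s, i.e.\ a free reduced density $\rho^{(n)}$ (or a truncation thereof). Since these densities vanish at particle coincidences (\Cref{lem.prop.rho0}), convolving them against the short-range factors $g$ and the scattering weights gains two powers of the separation, converting a naive factor $\int\abs{g}\sim a^d\log b/a$ into the smaller $\int\abs{g}\abs{x}^2\sim a^db^2$. This quadratic gain is exactly what produces the $b^2\rho_0^{2/d}$ improvements in the stated bounds.

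For $\eps_Z=\beta^{-1}\sum_{p\geq2}\frac1{p!}\sum_{(\pi,G)\in\mcL_p}\Gamma_{\pi,G}$ the leading $p=2$ contribution has the single $g$-graph $G=\{(1,2)\}$, and summing over $\pi\in\{\mathrm{id},(12)\}$ gives $\frac{L^d}{2\beta}\int g(y)\rho^{(2)}(y)\ud y$; using $\rho^{(2)}(y)\lesssim\rho_0^{2+2/d}\abs{y}^2$ and \Cref{eqn.Ig.bdd} this is $\lesssim L^da^db^2\rho_0^{2+4/d}\zeta^{-1}$, the first claimed term. All $p\geq3$ linked diagrams are controlled by the tree-graph and Hadamard machinery of \Cref{lem.new.condition.abs.conv}; the largest carry one $\gamma$-edge (a factor $I_\gamma\sim\zeta^{d/2}$) and two $g$-edges, contributing of order $\beta^{-1}\rho_0^3I_g^2I_\gamma\sim a^{2d}\rho_0^{3+2/d}\zeta^{d/2-1}(\log b/a)^2$, the second term.

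The terms $\eps_2$ and $\eps_3$ are handled in the same spirit, now with $q=2$ respectively $q=3$ external vertices. For $\eps_2$ one sums the $\gamma$-graph over $\tilde\mcL_p^2$ at fixed $G$, reconstructing free densities $\rho^{(2+p)}$ weighted by $g$-edges; the leading $p=1$ family is controlled by the coincidence-vanishing of $\rho^{(3)}$ integrated against the weight $\abs{\nabla f}^2+\frac12 vf^2$ (whose relevant moments are governed by \Cref{eqn.int.v.x2,eqn.int.v.xn}), giving the first term, while the $p\geq2$ remainder costs an extra small factor $\rho_0I_gI_\gamma$ and yields the second. For $\eps_3$ the $p=0$ part of $\rho^{(3)}_J$ is $\prod_{i<j\leq3}f_{ij}^2\,\rho^{(3)}$, and the weight $\frac{\nabla f_{12}\cdot\nabla f_{13}}{f_{12}f_{13}}$ carries two gradients meeting at vertex $1$; combining the determinantal vanishing of $\rho^{(3)}$ with the gradient moments \Cref{eqn.int.f.nabla.f.xn} produces the first term, and the internal-vertex dressings ($p\geq1$) produce the rest. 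Throughout, the dimension-dependent and $d=1$ exceptional cases arise from the different regimes of \Cref{lem.bdd.int.f}.

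The main obstacle is not any single estimate but the bookkeeping that makes the fermionic cancellation quantitative: one must group the diagrams so that the $\pi$-sum genuinely reproduces a reduced density \emph{before} the short-range integrals are taken, rather than bounding $\abs{\Gamma_{\pi,G}}$ term by term, which discards the quadratic vanishing and overshoots the target by a factor of order $(b/a)^2$. One then has to separate cleanly the finitely many leading diagrams from the convergent tail, with all constants uniform in $L$. Tracking the exact powers of $\zeta$ — through $I_\gamma\sim\zeta^{d/2}$, through $\beta^{-1}\sim\rho_0^{2/d}\zeta^{-1}$, and through the polylogarithmic prefactors of \Cref{lem.prop.rho0} — is the other delicate point, since it is these powers that decide, after the eventual optimisation over $b$, the final error exponents in \Cref{thm.main}.
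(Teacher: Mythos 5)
Your treatment of $\eps_Z$, and of $\eps_3$ for $d\geq2$, is essentially the paper's: evaluate the one small class of diagrams precisely (the $p=2$ class for $\eps_Z$, the bare $\rho^{(3)}$ term for $\eps_3$) using the quadratic vanishing at coincidences, and control everything else with the linked-diagram tail bound. But for $\eps_2$ (in all dimensions) and for $\eps_3$ in $d=1$ your scheme does not reach the stated bounds, and this is a genuine gap rather than bookkeeping. If the diagrams with $k+n_g+n_g^*=2$ entering $\eps_2$ are controlled only through \Cref{eqn.tail-bound} --- your ``extra small factor $\rho_0 I_g I_\gamma$'' --- their contribution to $\abs{\eps_2}/L^d$ is of order $a^{d-2}\rho_0^4 I_g^2 I_\gamma^2 \sim a^{3d-2}\rho_0^4\zeta^{d}(\log b/a)^2$, where $a^{d-2}$ is the zeroth moment of $\abs{\nabla f}^2+\tfrac12 vf^2$ from \Cref{eqn.int.v.xn}. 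This exceeds the claimed second term $a^{4d-2}\rho_0^5\zeta^{3d/2}(\log b/a)^3$ by the factor $\left(a^d\rho_0\zeta^{d/2}\log b/a\right)^{-1}\gg1$, and for $\zeta$ large (the low-temperature end of the regime in which \Cref{lem.pressure.high.temp} is actually applied) it is not dominated by the first term $a^{2d}\rho_0^{3+2/d}\log b/a$ either. The same failure occurs for $\eps_3$ in $d=1$: the crude treatment gives $a\rho_0^4\zeta^{1/2}\log b/a$, far above the stated $a^2\rho_0^5\zeta(\log b/a)^2$. This is exactly why the paper performs a diagram-by-diagram analysis of the classes $k+n_g+n_g^*=1$ (the $p=2$, $k=1$ graphs) and $k+n_g+n_g^*=2$ (split into groups (i)--(iii)), plus a refined $d=1$ analysis for $\eps_3$; the paper's remark after the proof states explicitly that this refinement is forced by the temperature dependence of the bounds.

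Moreover, the method you propose for any such refinement --- group the permutations at fixed $g$-graph so that the $\pi$-sum reconstructs a (truncated) reduced density, then Taylor expand at coincidences --- fails precisely on these classes. Once the $g$-graph contains a cluster consisting only of internal vertices, the position-space Taylor bound, e.g. $\abs{\rho^{(5)}-\rho^{(3)}\rho^{(2)}}\lesssim \rho_0^{5+6/d}\abs{x_1-x_2}^2\abs{x_1-x_3}^2\abs{x_4-x_5}^2$, has no decay in the separation between the clusters, so integrating the internal $g$-edge costs an extra factor $L^d$, which is fatal in the thermodynamic limit; the paper's \Cref{rmk.compare.zero.temp.d=1} makes exactly this point (and it is why the zero-temperature work resorted to a box method). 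The paper's actual proof avoids both pitfalls by working in momentum space: linkedness is retained as a non-trivial characteristic function (e.g. $\chi_{(k_5-k_{\pi(5)}+k_4-k_{\pi(4)}=0)}$), which removes one $k$-sum, while the coincidence gain is recovered through partial cancellations only --- pairing $\pi$ with $\pi\cdot(4\,5)$ and Taylor expanding $\hat g$, and for group (iii) exploiting $\rho_J^{(2)}(x_2,x_2)=0$ together with momentum-space bounds on second derivatives. Without this hybrid argument, or some substitute achieving the vanishing and the linkedness gain simultaneously, the stated bounds cannot be obtained. A minor further point: the tail control comes from \Cref{eqn.tail-bound}, imported from the earlier papers, not from \Cref{lem.new.condition.abs.conv}, whose tree-graph/Hadamard estimate carries $\exp(CL^d\rho_0 I_g)$ and is suited only to proving absolute convergence, not to bounding linked sums by a single volume factor.
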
	

In particular we have the bounds (recalling that $a \ll b\lesssim \rho_0^{-1/d}$)
\begin{equation}\label{eqn.bdd.eps.terms.total}
    \frac{\abs{\eps_Z} + \abs{\eps_2} + \abs{\eps_3}}{L^d}
    \lesssim \begin{cases}
    a^3 b^2  \rho_0^{10/3}\zeta^{-1}
    + a^6 \rho_0^{11/3}\zeta^{1/2}  (\log b/a)^2
    + a^{10} \rho_0^5  \zeta^{9/2}(\log b/a)^3
    & d= 3,
    \\
    a^2 b^2 \rho_0^{4} \zeta^{-1}
    + a^4 \rho_0^{4}\zeta  \log b/a
    + a^{6}\rho_0^5 \zeta^{3}  (\log b/a)^3
    & d= 2,
    \\
    ab \rho_0^5 \log b/a 
    + a^2 \rho_0^5\zeta^{3/2}  (\log b/a)^3
    & d=1.
    \end{cases}
\end{equation}

\noindent
For the second term in \Cref{eqn.free.energy} above we use \Cref{eqn.rho2.free,eqn.int.v.x2,eqn.int.v.xn}, thus
\begin{equation}\label{eqn.calc.2.body.leading}
\begin{aligned}
	& \iint \left[\abs{\nabla f_{12}}^2 + \frac{1}{2} v_{12} f_{12}^2\right] \rho^{(2)} \ud x_1 \ud x_2
	\\ & \quad 
		= 2\pi \frac{-\Li_{d/2+1}(-e^{\beta\mu})}{(-\Li_{d/2}(-e^{\beta\mu}))^{1+2/d}} \rho_0^{2+2/d} 
		L^d \int \left(\abs{\nabla f}^2 + \frac{1}{2}v f^2\right) |x|^2 \ud x
		\left(1 + O(L^{-1}\zeta \rho_0^{-1/d})\right)
	\\
	& \qquad 
		+ O\left(L^d \rho_0^{2+4/d} \int \left(\abs{\nabla f}^2 + \frac{1}{2}v f^2\right) |x|^4 \ud x\right)
	\\
	& \quad =
		2\pi c_d \frac{-\Li_{d/2+1}(-e^{\beta\mu})}{(-\Li_{d/2}(-e^{\beta\mu}))^{1+2/d}} L^d a^d \rho_0^{2+2/d} 
			\left(1 + O(a^d/b^d) +  O\left(L^{-1} \zeta \rho_0^{-1/d}\right)\right)
  \\ & \qquad 
			+ 
      \begin{cases}
      O\left(L^d a^{d+2} \rho_0^{2 + 4/d} \log b/a\right) & d \geq 2
      \\
      O\left(L a^2 b \rho_0^{6}\right) & d = 1
      \end{cases}
\end{aligned}
\end{equation}
where $c_d$ is defined in \Cref{eqn.define.cd}.
Combining  \Cref{eqn.free.energy,eqn.calc.2.body.leading,eqn.bdd.eps.terms.total}
we thus conclude the bound 
\begin{equation*}
\begin{aligned}
  & \psi(\beta,\mu)
	\\ & \quad  
  \geq \limsup_{L\to \infty} P[\varGamma_J]
	\\
	& \quad 
  \geq \lim_{L\to \infty} \left[\frac{1}{L^d\beta} \log Z\right]
		- 2\pi c_d \frac{-\Li_{d/2+1}(-e^{\beta\mu})}{(-\Li_{d/2}(-e^{\beta\mu}))^{1+2/d}} a^d \rho_0^{2+2/d} 
	\\
	& \qquad 
    + \begin{cases}
    O
    \begin{pmatrix}
    a^{6}b^{-3} \rho_0^{8/3}
    + a^3 b^2  \rho_0^{10/3}\zeta^{-1}
    + a^6 \rho_0^{11/3}\zeta^{1/2}  (\log b/a)^2
    + a^{10} \rho_0^5\zeta^{9/2}  (\log b/a)^3
    \end{pmatrix}
    \vspace*{0.2em}
    & d= 3,
    \\
    O
    \begin{pmatrix}
    a^{4}b^{-2} \rho_0^{3}
    + a^2b^2  \rho_0^{4}\zeta^{-1}
    + a^4  \rho_0^4\zeta \log b/a
    + a^{6}  \rho_0^5\zeta^{3} (\log b/a)^3
    \end{pmatrix}
    \vspace*{0.2em}
    & d= 2,
    \\
    O
    \begin{pmatrix}
    a^{2}b^{-1} \rho_0^{4}
    + ab \rho_0^5 \log b/a 
    + a^2  \rho_0^5 \zeta^{3/2}(\log b/a)^3
    \end{pmatrix}
    & d=1.
    \end{cases}
\end{aligned}
\end{equation*}
Using that $\lim_{L\to \infty} \left[\frac{1}{L^d\beta} \log Z\right] = \psi_0(\beta,\mu)$ and 
optimising in $b$ we find for the choices (recall that we require $b\lesssim \rho_0^{-1/d}$)
\begin{equation*}
  b = \begin{cases}
  \min 
  \left\{
    a(a^3\rho_0)^{-2/15} \zeta^{1/5}, 
    \rho_0^{-1/3}
  \right\} & d=3 ,
  \\
  \min
  \left\{
    a (a^2\rho_0)^{-1/4} \zeta^{1/4}, 
    \rho_0^{-1/2}
  \right\}
  & d=2,
  \\
  a (a\rho_0)^{-1/2} \abs{\log a\rho_0}^{-1/2}
  & d=1,
  \end{cases}
\end{equation*}
that 
\begin{equation*}
\begin{aligned}
  \psi(\beta,\mu)
  & 
  \geq \psi_0(\beta,\mu) 
		- 2\pi c_d \frac{-\Li_{d/2+1}(-e^{\beta\mu})}{(-\Li_{d/2}(-e^{\beta\mu}))^{1+2/d}} a^d \rho_0^{2+2/d} 
    \left[1 + \delta_d\right],
\end{aligned}
\end{equation*}
where $\delta_d$ is as in \Cref{eqn.delta.errors.high.temp}. 
The calculations above are valid as long as the conditions of \Cref{lem.conv.GGR} are satisfied. 
That is, if $a^d  \rho_0 \zeta^{d/2} \abs{\log a^d\rho_0}$ is sufficiently small.
This concludes the proof of \Cref{lem.pressure.high.temp}.
It remains to give the proof of \Cref{lem.bdd.eps.errors}.

\subsection{Error-terms (proof of \texorpdfstring{\Cref{lem.bdd.eps.errors}}{Lemma~\ref*{lem.bdd.eps.errors}})}
\label{sec.bdd.eps.errors}
In this section we give the 
\begin{proof}[{Proof of \Cref{lem.bdd.eps.errors}}]
To better illustrate where the different error-terms come from we will write them in terms of the quantities $I_g, I_\gamma$
and $I_{|x|^ng} := \int_{\R^d} |x|^n |g(x)| \ud x$, $n\geq 1$.
By \Cref{lem.bdd.gamma.Ig.Igamma} and \Cref{eqn.Ig.bdd} we have the bounds 
\[
  I_g \leq C a^d\log b/a, \quad 
  I_\gamma\leq C \zeta^{d/2} = C (1+\abs{\log z})^{d/2}, \quad 
  I_{|x|^ng} \leq C a^d b^n.
\]
For the analysis of the error-terms we use the bounds \cite[Equation (4.13)]{Lauritsen.2023} and \cite[Equations (4.10) and (4.22)]{Lauritsen.Seiringer.2023}.
To state these, we define for any diagram $(\pi,G) \in \tilde\mcL_p^m$ the numbers 
$k = k(G) = k(\pi,G)$ as the number of clusters entirely with internal vertices (of sizes $n_1,\ldots,n_k$) and 
$\kappa = \kappa(G) = \kappa(\pi,G)$ as the number of clusters with each at least one external vertex 
(of sizes [meaning number of internal vertices] $n_1^*,\ldots,n_\kappa^*$).
Define 
\[
	n_g^* := \sum_{\lambda=1}^\kappa n_\lambda^*,
	\qquad 
	n_g := \sum_{\ell=1}^k n_\ell - 2k.
\]
The bounds \cite[Equation (4.13)]{Lauritsen.2023}, \cite[Equations (4.10) and (4.22)]{Lauritsen.Seiringer.2023} 
then read\footnote{The case $m=0$ 
is not included in the statement in \cite[Equation (4.13)]{Lauritsen.2023} and \cite[Equations (4.10) and (4.22)]{Lauritsen.Seiringer.2023}.
It follows from the analysis in \cite[Section 3.1.1]{Lauritsen.Seiringer.2023} and \cite[Section 4.1]{Lauritsen.2023}, however.}
for any $k_0, n_{g0}$
\begin{equation}\label{eqn.tail-bound}
\frac{1}{p!} \abs{\sum_{\substack{(\pi,G)\in\tilde \mcL_p^m \\ k(\pi,G) = k_0 \\ n_g(\pi,G) + n_{g}^*(\pi,G) = n_{g0}}} \Gamma_{\pi,G} }
	\leq 
	\begin{cases}
	C L^d \rho_0\left(C \rho_0I_g\right)^{n_{g0}+k_0} I_\gamma^{k_0-1}
	& m=0,
	\\
	C_m \rho_0^m \left(C\rho_0I_g\right)^{n_{g0}+k_0} I_\gamma^{k_0}
	& m > 0,
	\end{cases}
	\qquad 
	p=2k_0 + n_{g0},
\end{equation}
where the constants $C,C_m$ depend only on $m$ but not on $n_{g0}$ or $k_0$ (in particular not on $p$).

From this bound the natural ``size'' of a diagram $(\pi,G)\in \tilde\mcL_p^m$ is not $p$ but rather $n_g + n_g^* + k$,
since its value is (neglecting $\log$'s and dependence on $z$) $\lesssim\rho_0^m(a^d\rho_0)^{n_g + n_g^* + k}$.
For the bounds of the terms $\eps_2, \eps_3, \eps_Z$ we will bound sufficiently large diagrams
by the bound in \Cref{eqn.tail-bound} and do a more precise computation for small diagrams. 
We first bound $\eps_Z$.

\subsubsection{Bound of \texorpdfstring{$\eps_Z$}{epsilonZ}}
We have by \Cref{lem.conv.GGR} 
\[
	\eps_Z 
		= -\frac{1}{\beta} \log \frac{Z_J}{Z} 
		= -\frac{1}{\beta} \sum_{p=2}^\infty \frac{1}{p!} \sum_{(\pi, G)\in \mcL_p} \Gamma_{\pi,G}.
\]
We use the bound in \Cref{eqn.tail-bound} above for $m=0$ and for diagrams with $n_g + k \geq 2$.
These are precisely the diagrams with $p\geq 3$
(note that $k\geq 1$ for any diagram $(\pi,G)\in \mcL_p$).
Thus
\begin{equation*}
\begin{aligned}
\sum_{p=3}^\infty \frac{1}{p!} \abs{\sum_{\substack{(\pi,G)\in \mcL_p}} \Gamma_{\pi,G}} 
  & \leq C L^d \rho_0 \sum_{\substack{k_0 \geq 1 \\ n_{g0}+k_0\geq 2}} \left(C\rho_0 I_g\right)^{n_{g0}+k_0} I_\gamma^{k_0-1}
  \\
  & = C L^d \rho_0 \left[\sum_{n_{g0}=1}^\infty \left(C\rho_0 I_g\right)^{n_{g0}+1} 
    + \sum_{k_0=2}^\infty \sum_{n_{g0}=0}^\infty \left(C\rho_0 I_g\right)^{n_{g0}+k_0} I_\gamma^{k_0-1}\right]
  \\  & 
  \leq 
  C L^d \rho_0^3 I_g^2 (1 + I_\gamma)
\end{aligned}
\end{equation*}
for sufficiently small $\rho_0 I_g$ and $\rho_0 I_g I_\gamma$.
For the diagrams with $n_g + k=1$ we do a more precise calculation. 
These are precisely the diagrams with $p=2$. In particular these diagrams have $n_g = 0$ and $k=1$.
We have then
(recall that pictures of diagrams refer to their values)
\begin{equation*}
\begin{aligned}
\sum_{\substack{(\pi,G)\in\mcL_2}}\Gamma_{\pi,G} 
	& = 
\vcenter{\hbox{
\begin{tikzpicture}[line cap=round,line join=round,>=triangle 45,x=1.0cm,y=1.0cm]
	\node (1) at (0,1) {};
	\node (2) at (0,0) {};
	\draw[dashed] (1) -- (2);
	\draw[->] (1) to[out=-30,in=30,loop] (1);
	\draw[->] (2) to[out=-30,in=30,loop] (2);
	\foreach \i in {1,2} \draw[fill] (\i) circle [radius=1.5pt];
\end{tikzpicture}
}}
+
\vcenter{\hbox{
\begin{tikzpicture}[line cap=round,line join=round,>=triangle 45,x=1.0cm,y=1.0cm]
	\node (1) at (0,1) {};
	\node (2) at (0,0) {};
	\draw[dashed] (1) -- (2);
	\draw[->] (1) to[bend right] (2);
	\draw[->] (2) to[bend right] (1);
	\foreach \i in {1,2} \draw[fill] (\i) circle [radius=1.5pt];
\end{tikzpicture}
}}
	\\ 
	& = \iint \det \begin{bmatrix}
	\gamma^{(1)}(0) & \gamma^{(1)}(x-y) \\ \gamma^{(1)}(y-x) & \gamma^{(1)}(0)
	\end{bmatrix}
	g(x-y) \ud x \ud y
	\\
	& = \iint \rho^{(2)}(x,y) g(x-y) \ud x \ud y
	\\ 
	& = O\left(L^d I_{|x|^2g} \rho_0^{2+2/d}\right)
\end{aligned}
\end{equation*}
using \Cref{eqn.rho2.free}.
Thus, using \Cref{lem.bdd.gamma.Ig.Igamma} and recalling that $\beta \sim \zeta \rho_0^{-2/d}$ from \Cref{rmk.asym.beta},
we conclude that 
\[	
\begin{aligned}
	\frac{1}{L^d}\abs{\eps_Z}
	= \frac{1}{\beta L^d} \abs{\log \frac{Z_J}{Z}}
	& \lesssim 
  I_{|x|^2g}  \rho_0^{2+4/d}\zeta ^{-1}
  + I_{g}^2 (I_\gamma + 1) \rho_0^{3+2/d}\zeta ^{-1} 
  \\
  & 
  \lesssim
   a^db^2  \rho_0^{2+4/d}\zeta ^{-1}
  + a^{2d}  \rho_0^{3 + 2/d} \zeta ^{d/2-1}(\log b/a)^2.
\end{aligned}
\]

\subsubsection{Bound of \texorpdfstring{$\eps_3$}{epsilon3}}
We have by \Cref{lem.conv.GGR}
\[
	\rho_J^{(3)}
	= f_{12}^2 f_{13}^2 f_{23}^2
	\left[
		\rho^{(3)} + \sum_{p=1}^\infty \frac{1}{p!} \sum_{(\pi,G)\in \tilde\mcL_p^3} \Gamma_{\pi,G}^3
	\right].
\]
Note that $\rho^{(3)}$ vanishes whenever two particles are incident and it is symmetric in exchange of the particles.
Thus, for fixed $x_1$ as a function of $x_2,x_3$ it vanishes quadratically around $x_2=x_1$ and $x_3=x_1$.
Thus, by Taylor expanding $\rho^{(3)}$ in $x_2$ and $x_3$ around $x_2=x_1$ and $x_3=x_1$ we get 
$\rho^{(3)} \leq C \rho_0^{3+4/d} |x_1-x_2|^2 |x_1-x_3|^2$ using \Cref{lem.bdd.sum.gamma.k} to bound the derivatives.
We use 
the bound in \Cref{eqn.tail-bound} on the remaining terms. 
(That is, a precise calculation for diagrams with $n_g + n_g^* + k =0$ and the bound in \Cref{eqn.tail-bound}
for diagrams with $n_g + n_g^* + k \geq 1$.)
Thus, by a similar computation as for $\eps_Z$
\begin{equation*}
\begin{aligned}
\abs{\sum_{p=1}^\infty \frac{1}{p!} \sum_{(\pi,G)\in \tilde\mcL_p^3} \Gamma_{\pi,G}^3}
	& \leq C \rho_0^3 \left[
    \sum_{n_{g0}=1}^\infty (C\rho_0 I_g)^{n_{g0}} 
    + \sum_{k_0=1}^\infty \sum_{n_{g0}=0}^\infty (C\rho_0 I_g)^{n_{g0} + k_0} I_\gamma^{k_0} \right] 
  \leq C \rho_0^4 I_g (1+I_\gamma)
\end{aligned}
\end{equation*}
for sufficiently small $\rho_0 I_g$ and $\rho_0 I_g I_\gamma$.
Moreover, $f\leq 1$ and the support of $\nabla f$ is contained a ball of radius $\sim b$. 
Thus by \Cref{eqn.int.f.nabla.f.xn} and \Cref{lem.bdd.gamma.Ig.Igamma}
\begin{equation*}
\begin{aligned}			
	\abs{\eps_3}
	& \leq 
	C L^d \rho_0^{3+4/d} \left(\int f |\nabla f| |x|^2 \right)^2
	+ C L^d I_g (I_\gamma+1) \rho_0^4  \left(\int f |\nabla f|  \right)^2
	\\
	& \leq 
	C L^d a^{2d} b^2 \rho_0^{3+4/d}
	+ C L^d a^{3d-2} \rho_0^4 \zeta^{d/2} \log b/a.
\end{aligned}
\end{equation*}

\paragraph{Refined analysis in dimension $d=1$.}
In dimension $d=1$ we need also to analyse diagrams with $k +n_g + n_g^* = 1$ in more detail.
Intuitively this follow by ``counting powers of $\rho_0$'':
The claimed leading term in \Cref{thm.main} is of order $a\rho_0^4$. 
Thus, we need to compute precisely all diagrams for which the naive bound \Cref{eqn.tail-bound} 
only gives a power $\leq 4$ of $\rho_0$.

The diagrams with $k +n_g + n_g^* = 1$ have either $p=1$, in which case $n_g^*=1$, or $p=2$, in which case $k=1$.
For the diagrams with $p=1$ for any graph any permutation makes each linked component have at least one external vertex and thus 
we get 
\begin{align*}
\sum_{(\pi,G)\in \tilde\mcL_1^3} \Gamma_{\pi,G}^3 
  & = \sum_{G\in \mcG_1^3} \int \rho^{(4)} \prod_{e\in G} g_e \ud x_4.
\intertext{Bound all but one $g$-factor, by symmetry say $g_{14}$, by $|g_{ij}|\leq 1$ and Taylor expand
$\rho^{(4)}$ in $x_2,x_3,x_4$ around $x_j=x_1$ to get 
$\rho^{(4)} \leq C \rho^{10} |x_1-x_2|^2 |x_1-x_3|^2 |x_1-x_4|^2$ similarly to the bound on $\rho^{(3)}$ above.
We conclude}
  |\cdot| 
    & \leq C \rho_0^{10} |x_1-x_2|^2 |x_1-x_3|^2
  \int |g(z)| |z|^2 \ud z
  \\
  & \leq C ab^2\rho_0^{10} |x_1-x_2|^2 |x_1-x_3|^2.
\end{align*}
By \Cref{eqn.int.f.nabla.f.xn} this gives the contribution $L a^3b^{4}\rho_0^{10}$ to $\eps_3$.
For $p=2$ we have the graph (recall that $*$'s label external vertices)
\begin{equation*}
G = 
\vcenter{\hbox{
\begin{tikzpicture}[line cap=round,line join=round,>=triangle 45,x=1.0cm,y=1.0cm]
  \node (1) at (-0.5,1) {};
  \node (2) at (0.5,1) {};
  \node (3) at (1.5,1) {};
  \node (4) at (0,0) {};
  \node (5) at (1,0) {};
  \draw[dashed] (4) -- (5);
  \foreach \i in {1,...,5} \draw[fill] (\i) circle [radius=1.5pt];
  \foreach \i in {1,2,3} \node[anchor=east] at (\i) {$*$};
  \foreach \i in {1,...,5} \node[anchor=south] at (\i) {$\i$};
\end{tikzpicture}
}}
\end{equation*}
The only $\pi$'s for which $(\pi,G)\notin\tilde\mcL_2^3$ are those not connecting $\{4,5\}$ to $\{1,2,3\}$.
Thus 
\begin{equation*}
\begin{aligned}
\sum_{(\pi,G)\in \tilde\mcL_2^3} \Gamma_{\pi,G}^3 
  & = \int \left[\rho^{(5)}(x_1,\ldots,x_5) - \rho^{(3)}(x_1,x_2,x_3)\rho^{(2)}(x_4,x_5)\right] g_{45} \ud x_4 \ud x_5.
\end{aligned}
\end{equation*}
This vanishes (quadratically) whenever any $x_i$ and $x_j$, $i,j=1,2,3$ are incident.
Thus, as with $\rho^{(3)}$ and $\rho^{(4)}$, we bound the derivatives and use Taylor's theorem.
Denote the derivative w.r.t. $x_j$ by $\partial_{x_j}$. We are thus interested in bounding 
$\partial_{x_2}^2\partial_{x_3}^2\Gamma_{\pi,G}^3$.
By explicit computation (with the permutation denoted $\pi^{-1}$ for convenience of notation)
we have
\begin{align*}
  & \partial_{x_2}^2\partial_{x_3}^2 \Gamma_{\pi^{-1},G}^3
    \\ &\quad  = \partial_{x_2}^2\partial_{x_3}^2 
      \left[
        (-1)^{\pi} 
        \frac{1}{L^5} \sum_{k_1,\ldots,k_5} 
        \hat \gamma^{(1)}(k_1)\cdots \hat \gamma^{(1)}(k_5)
        \iint e^{i(k_1-k_{\pi(1)})x_1} \cdots e^{i(k_5-k_{\pi(5)})x_5} g_{45} \ud x_4 \ud x_5
      \right]
    \\
    & \quad = - (-1)^{\pi} \frac{1}{L^{4}} \sum_{k_1,\ldots,k_5} (k_2 - k_{\pi(2)})^2 (k_3 - k_{\pi(3)})^2 
      \hat \gamma^{(1)}(k_1)\cdots \hat \gamma^{(1)}(k_5)
    \\ & \qquad 
      \times 
      e^{i(k_1-k_{\pi(1)})x_1} \cdots e^{i(k_3-k_{\pi(3)})x_3} \hat g(k_4  -k_{\pi(4)}) \chi_{(k_5-k_{\pi(5)} + k_4 - k_{\pi(4)}=0)},
\intertext{where $\chi$ denotes a characteristic function. 
Any permutation such that $(\pi,G) \in \tilde\mcL_3^2$ has $\pi(\{4,5\}) \ne \{4,5\}$.
In particular for the relevant permutations the characteristic function is not identically one, and thus effectively
it reduces the number of $k$-sums by $1$.
More precisely we get for the permutations with $\pi(5),\pi(4) \ne 5$ (the others are similar)}
	& \quad = - (-1)^{\pi} \frac{1}{L^{4}} \sum_{k_1,\ldots,k_4} (k_2 - k_{\pi(2)})^2 (k_3 - k_{\pi(3)})^2 
      \hat \gamma^{(1)}(k_1)\cdots \hat \gamma^{(1)}(k_4)  \hat \gamma^{(1)}(-k_4 + k_{\pi(4)} + k_{\pi(5)})
    \\ & \qquad 
      \times 
      e^{i(k_1-k_{\pi(1)})x_1} \cdots e^{i(k_3-k_{\pi(3)})x_3} \hat g(k_4  -k_{\pi(4)}).
\end{align*}
Bounding $\abs{\hat \gamma^{(1)}(-k_4 + k_{\pi(4)} + k_{\pi(5)})} \leq 1$ and $\abs{\hat g} \leq I_g \leq C a \log b/a$ the $k$-sums 
are readily bounded by \Cref{lem.bdd.sum.gamma.k}.
Thus for any valid permutation $\pi$ we have
\begin{equation*}
\abs{\partial_{x_2}^2\partial_{x_3}^2\Gamma_{\pi,G}^3}
\leq C a \rho_0^{4+4} \log b/a.
\end{equation*}
By Taylor's theorem we conclude that 
\begin{equation*}
\abs{\Gamma_{\pi,G}^3}
\leq C a \rho_0^{4+4} \log b/a |x_1-x_2|^2 |x_1-x_3|^2.
\end{equation*}
We thus get the contribution to $\eps_3$ of 
$L a^3 b^2 \rho_0^{8} \log b/a$ by \Cref{eqn.int.f.nabla.f.xn}.
Finally, using the bound in \Cref{eqn.tail-bound} for diagrams with $k+n_g+n_g^* \geq 2$ 
we get (again for suffiently small $\rho_0 I_g$ and $\rho_0 I_g I_\gamma$)
\begin{equation*}
\sum_{p=2}^\infty \frac{1}{p!}\abs{\sum_{\substack{(\pi,G)\in \tilde\mcL_p^2 \\ (k + n_g + n_g^*)(\pi,G) \geq 2}} \Gamma_{\pi,G}^2}
  \leq C a^3 \rho_0^5 \zeta (\log b/a)^2.
\end{equation*}
By \Cref{eqn.int.f.nabla.f.xn} this gives a contribution to $\eps_3$ of $L a^2 \rho_0^5 (\log b/a)^2$.
We conclude the bound 
\begin{equation*}
\begin{aligned}
\abs{\eps_3} 
  & \leq C L \left(a^2 b^4 \rho_0^{9} + a^3 b^{4} \rho_0^{10} + a^3 b^2 \rho_0^{8} \log b/a + a^2  \rho_0^5 \zeta(\log b/a)^2\right)
  \\ & \leq CL a^2  \rho_0^5\zeta  (\log b/a)^2
\end{aligned}
\end{equation*}
in dimension $d=1$.

\subsubsection{Bound of \texorpdfstring{$\eps_2$}{epsilon2}}
We use the bound in \Cref{eqn.tail-bound} for diagrams with $n_g + n_g^* + k \geq 3$ 
and a more precise analysis for the small diagrams.
Write 
\begin{equation}\label{eqn.eps2.decompose.size}
	\sum_{p=2}^\infty \frac{1}{p!} \sum_{(\pi,G) \in \tilde\mcL_p^2} \Gamma_{\pi,G}^2
	= \xi_{=1} + \xi_{=2} + \xi_{\geq 3},
\end{equation}
where $\xi_{=j}$ is the sum of the values of all diagrams with $n_g + n_g^* + k = j$ and $\xi_{\geq 3}$
is the sum of the values of all diagrams with $n_g + n_g^* + k \geq 3$.

For the large diagrams with $n_g + n_g^* + k \geq 3$ we have 
similarly as above for $\rho_0 I_g$ and $\rho_0 I_g I_\gamma$ sufficiently small 
\begin{equation}\label{eqn.bdd.xi.large}
	\abs{\xi_{\geq 3}}
	= 
	\abs{\sum_{p=2}^\infty \frac{1}{p!} \sum_{\substack{(\pi,G)\in \tilde\mcL_p^2 \\ (k+n_g+n_g^*)(\pi,G) \geq 2}} \Gamma_{\pi,G}^2}
  \leq C \rho_0^5 I_g^3 (1 + I_\gamma^3).
\end{equation}

\paragraph{Diagrams with $k+n_g+n_g^* = 1$.}
For the diagrams with $p=1$ and $p=2$ with $k=1$ we do a more precise calculation.
For $p=1$ there are three possible $g$-graphs:
(Recall that $*$'s label the external vertices)
\begin{equation*}
\begin{aligned}
G=
\vcenter{\hbox{
\begin{tikzpicture}[line cap=round,line join=round,>=triangle 45,x=1.0cm,y=1.0cm]
	\node (1) at (-0.5,1) {};
	\node (2) at (0.5,1) {};
	\node (3) at (0,0) {};
	\draw[dashed] (1) -- (3);
	\foreach \i in {1,2,3} \draw[fill] (\i) circle [radius=1.5pt];
	\foreach \i in {1,2} \node[anchor=south] at (\i) {$\i$};
	\foreach \i in {1,2} \node[anchor=east] at (\i) {$*$};
\end{tikzpicture}
}}
,
\qquad 
G=
\vcenter{\hbox{
\begin{tikzpicture}[line cap=round,line join=round,>=triangle 45,x=1.0cm,y=1.0cm]
	\node (1) at (-0.5,1) {};
	\node (2) at (0.5,1) {};
	\node (3) at (0,0) {};
	\draw[dashed] (2) -- (3);
	\foreach \i in {1,2,3} \draw[fill] (\i) circle [radius=1.5pt];
	\foreach \i in {1,2} \node[anchor=south] at (\i) {$\i$};
	\foreach \i in {1,2} \node[anchor=east] at (\i) {$*$};
\end{tikzpicture}
}}
,
\qquad
G=
\vcenter{\hbox{
\begin{tikzpicture}[line cap=round,line join=round,>=triangle 45,x=1.0cm,y=1.0cm]
	\node (1) at (-0.5,1) {};
	\node (2) at (0.5,1) {};
	\node (3) at (0,0) {};
	\draw[dashed] (1) -- (3) -- (2);
	\foreach \i in {1,2,3} \draw[fill] (\i) circle [radius=1.5pt];
	\foreach \i in {1,2} \node[anchor=south] at (\i) {$\i$};
	\foreach \i in {1,2} \node[anchor=east] at (\i) {$*$};
\end{tikzpicture}
}}
\end{aligned}
\end{equation*}
Any permutation makes any of these diagrams have at least one external vertex in each linked component and thus 
\begin{align}
	\sum_{(\pi,G)\in \tilde \mcL_1^2} \Gamma_{\pi,G}^2
	& = \int \rho^{(3)}  \left[g_{13} + g_{23} + g_{13}g_{23}\right] \ud x_3
	\nonumber
\intertext{Bounding $|g_{13} g_{23}| \leq |g_{13}|$ and recalling the bound $\rho^{(3)}(x_1,x_2,x_3) \leq C \rho_0^{3+4/d} |x_1-x_2|^2 |x_1-x_3|^2$
we get by symmetry}
\abs{\sum_{(\pi,G)\in \tilde \mcL_1^2} \Gamma_{\pi,G}^2}
	& \leq C \rho_0^{3+4/d} |x_1-x_2|^2 \int |g(z)| |z|^2\ud z
	= C I_{|x|^2g} \rho_0^{3+4/d} |x_1-x_2|^2
	\label{eqn.eps2.p=1}
\end{align}

The diagrams with $p=2$ and $k=1$ have $g$-graph
\begin{equation}\label{eqn.graph.p=2.k=1}
G= 
\vcenter{\hbox{
\begin{tikzpicture}[line cap=round,line join=round,>=triangle 45,x=1.0cm,y=1.0cm]
	\node (1) at (0,1) {};
	\node (2) at (1,1) {};
	\node (3) at (0,0) {};
	\node (4) at (1,0) {};
	\draw[dashed] (3) -- (4);
	\foreach \i in {1,...,4} \draw[fill] (\i) circle [radius=1.5pt];
	\foreach \i in {1,2} \node[anchor=south] at (\i) {$\i$};
	\foreach \i in {1,2} \node[anchor=east] at (\i) {$*$};
\end{tikzpicture}
}}
\end{equation}
The only permutations $\pi$ such that $(\pi,G)\notin\tilde\mcL_2^2$ are those connecting only external to external and internal to internal,
i.e. those with either $\pi(3)=3, \pi(4)=4$ or $\pi(3)=4, \pi(4)=3$.
Thus
\begin{equation}\label{eqn.sum.diagrams.q=2.p=2.k=1}
\begin{aligned}
	\sum_{\substack{(\pi,G)\in \tilde \mcL_2^2 \\ k(\pi,G) = 1}} \Gamma_{\pi,G}^2
	& = \iint \left[\rho^{(4)}(x_1,\ldots,x_4) - \rho^{(2)}(x_1,x_2) \rho^{(2)}(x_3,x_4)\right] g_{34} \ud x_3 \ud x_4.
\end{aligned}
\end{equation}
Clearly this vanishes quadratically in $x_1-x_2$ since both determinants do, 
thus we bound it using Taylor's theorem, expanding in $x_1$ around $x_1=x_2$ 
analogously to what we did for (some of the diagrams for) $\eps_3$ above.
We treat each diagram separately. (For convenience we denote the permutation $\pi^{-1}$.)
Denoting the derivative with respect to $x_1^\mu$ by $\partial^\mu_{x_1}$ we have
\begin{equation*}
\begin{aligned}
\partial_{x_1}^\mu \partial_{x_1}^{\nu} \Gamma_{\pi^{-1},G}^2
	& = - \frac{1}{L^{4d}} \sum_{k_1,\ldots,k_4} \left(k_1^\mu - k_{\pi(1)}^\mu\right) \left(k_1^\nu - k_{\pi(1)}^\nu\right) 
		\hat \gamma^{(1)}(k_1) \hat \gamma^{(1)}(k_2) \hat \gamma^{(1)}(k_3) \hat \gamma^{(1)}(k_4) 
	\\
	& \qquad \times 
		e^{i(k_1-k_{\pi(1)})x_1} e^{i(k_2-k_{\pi(2)})x_2}
		\iint e^{i(k_3-k_{\pi(3)})x_3} e^{i(k_4-k_{\pi(4)})x_4} g(x_3-x_4) \ud x_3 \ud x_4
	\\
	& = - \frac{1}{L^{3d}} \sum_{k_1,\ldots,k_4} \left(k_1^\mu - k_{\pi(1)}^\mu\right) \left(k_1^\nu - k_{\pi(1)}^\nu\right) 
		\hat \gamma^{(1)}(k_1) \hat \gamma^{(1)}(k_2) \hat \gamma^{(1)}(k_3) \hat \gamma^{(1)}(k_4) 
	\\
	& \qquad \times 
		\hat g(k_{\pi(3)}-k_3) \chi_{(k_4-k_{\pi(4)} = k_{\pi(3)} - k_3)}
\end{aligned}		
\end{equation*}
The only permutations for which the characteristic function is identically $1$ 
are those with either $\pi(3)=3, \pi(4)=4$ or $\pi(3)=4, \pi(4)=3$.
These are exactly the permutations that do not appear in \Cref{eqn.sum.diagrams.q=2.p=2.k=1} above.
Thus, similarly as for (some of the diagrams for) $\eps_3$ above 
the charactersitic function effectively reduces the number of $k$-sums by $1$. 
Bounding $\abs{\hat g} \leq I_g$, $\hat \gamma^{(1)} \leq 1$ for one of the $\gamma^{(1)}$-factors,
and using \Cref{lem.bdd.sum.gamma.k} to bound the $k$-sums we have 
for any diagram $(\pi,G)\in \tilde\mcL_2^2$ with $G$ as in \Cref{eqn.graph.p=2.k=1} 
\begin{equation*}
\begin{aligned}
\abs{\partial_{x_1}^\mu \partial_{x_1}^{\nu} \Gamma_{\pi,G}^2}
	& \leq C I_g \rho_0^{3 + 2/d}.
\end{aligned}
\end{equation*}
We conclude the bound 
\begin{equation}\label{eqn.eps2.p=2}
\begin{aligned}
	\abs{\sum_{\substack{(\pi,G)\in \tilde \mcL_2^2 \\ k(\pi,G) = 1}} \Gamma_{\pi,G}^2}
	& \leq C I_g \rho_0^{3+2/d} |x_1-x_2|^2.
\end{aligned}
\end{equation}
In particular, by combining \Cref{eqn.eps2.p=2,eqn.eps2.p=1}, we have 
\begin{equation}\label{eqn.bdd.xi.small}
\begin{aligned}
  \abs{\xi_{=1}}
  & \leq 
  C I_g \rho_0^{3+2/d}  |x_1-x_2|^2.
\end{aligned}
\end{equation}

\paragraph{Diagrams with $k+n_g+n_g^* = 2$.}
Finally consider all diagrams with $k + n_g + n_g^* = 2$ more precisely. 
We split these into three groups.
\begin{enumerate}[(i)]
\item $n_g^* = 2$ 
\item $n_g^*=1$ and vertices $\{1\}$ and $\{2\}$ are connected
\item Remaining diagrams
\end{enumerate}
We will use a Taylor expansion to bound the values of the diagrams in group (iii).
Write 
\begin{equation*}
\xi_{=2} = \xi_{\textnormal{(i)}} + \xi_{\textnormal{(ii)}} + \xi_{\textnormal{(iii)}}
\end{equation*}
Then as $\rho_J^{(2)}(x_2;x_2)=0$ we get from \Cref{eqn.eps2.decompose.size}
\begin{equation*}
\abs{\xi_{\textnormal{(iii)}}(x_2,x_2)} 
\leq \abs{\xi_{\textnormal{(i)}}(x_2,x_2)} 
+ \abs{\xi_{\textnormal{(ii)}}(x_2,x_2)} 
+ \abs{\xi_{=1}(x_2,x_2)}
+ \abs{\xi_{\geq3}(x_2,x_2)}.
\end{equation*}
Moreover, $\xi_{\textnormal{(iii)}}$ is symmetric in exchange of $x_1$ and $x_2$ so the first order vanishes.
We conclude by Taylor's theorem that 
\begin{equation}\label{eqn.bdd.xi.taylor.expansion}
\begin{aligned}
\abs{\xi_{\textnormal{(iii)}}(x_1,x_2)} 
& \leq \abs{\xi_{\textnormal{(i)}}(x_2,x_2)} 
+ \abs{\xi_{\textnormal{(ii)}}(x_2,x_2)} 
+ \abs{\xi_{=1}(x_2,x_2)}
+ \abs{\xi_{\geq3}(x_2,x_2)} 
\\ & \quad 
+ C \sup_{\mu,\nu} \sup_{z_1,z_2} \abs{\partial_{x_1}^\mu \partial_{x_1}^\nu \xi_{\textnormal{(iii)}}(z_1,z_2)} |x_1-x_2|^2,
\end{aligned}
\end{equation}
where again $\partial_{x_1}^\mu$ denotes the derivative w.r.t. $x_1^\mu$.
Bounding $\partial_{x_1}^\mu \partial_{x_1}^\nu \xi_{\textnormal{(iii)}}$ is analogous to the argument in \cite[Proof of Lemmas 4.1 and 4.8]{Lauritsen.Seiringer.2023}:
For diagrams with an internal vertex connected to $\{1\}$ with a $g$-edge we do a precise calculation as in \cite[Proof of Lemma 4.8]{Lauritsen.Seiringer.2023}.
For the remaining diagrams where $\{1\}$ has no incident $g$-edges we modify the proof of the absolute convergence of the GGR expansion as in 
\cite[Proof of Lemma 4.1]{Lauritsen.Seiringer.2023}.

First, the diagrams in group (iii) with an internal vertex connected to $\{1\}$ with a $g$-edge all have $g$-graph 
\begin{equation}\label{eqn.graph.xi.(iii).ng*=1}
G = 
\vcenter{\hbox{
\begin{tikzpicture}[line cap=round,line join=round,>=triangle 45,x=1.0cm,y=1.0cm]
  \node (1) at (-0.5,1) {};
  \node (2) at (1.5,1) {};
  \node (3) at (0.5,1) {};
  \node (4) at (0,0) {};
  \node (5) at (1,0) {};
  \draw[dashed] (1) -- (3);
  \draw[dashed] (4) -- (5);
  \foreach \i in {1,...,5} \draw[fill] (\i) circle [radius=1.5pt];
  \foreach \i in {1} \node[anchor=east] at (\i) {$*$};
  \foreach \i in {2} \node[anchor=west] at (\i) {$*$};
  \foreach \i in {1,...,5} \node[anchor=south] at (\i) {$\i$};
\end{tikzpicture}
}}
\end{equation}
since $n_g^* = 1$ and $k+n_g+n_g^* = 2$.
Then 
\begin{equation*}
\begin{aligned}
\Gamma_{\pi^{-1},G}^2
  & = (-1)^{\pi} \frac{1}{L^{5d}} \sum_{k_1,\ldots,k_5} \hat \gamma^{(1)}(k_1) \cdots \hat \gamma^{(1)}(k_5) 
      \iiint e^{i(k_1-k_{\pi(1)}) x_1} \cdots  e^{i(k_5-k_{\pi(5)}) x_5} g_{13} g_{45} \ud x_3 \ud x_4 \ud x_5
  \\
  & = (-1)^{\pi} \frac{1}{L^{4d}} \sum_{k_1,\ldots,k_5} \hat \gamma^{(1)}(k_1) \cdots \hat \gamma^{(1)}(k_5)  
  e^{i(k_1 - k_{\pi(1)} + k_3 - k_{\pi(3)})x_1}
    e^{i(k_2-k_{\pi(2)})x_2} 
  \\ & \quad \times 
    \hat g(k_3-k_{\pi(3)}) \hat g(k_5 - k_{\pi(5)}) \chi_{(k_4 - k_{\pi(4)} + k_5 - k_{\pi(5)}=0)}.
\end{aligned}
\end{equation*}
The characteristic function $\chi$ is identically $1$ only if $\pi(\{4,5\}) = \{4,5\}$, 
but then $(\pi,G)\notin \tilde\mcL_3^2$ so these permutations do not appear in $\xi_{\textnormal{(iii)}}$.
Taking the derivative, bounding $|\hat g| \leq I_g$ and using \Cref{lem.bdd.sum.gamma.k} to bound the $k$-sums 
we conclude as above that 
\begin{equation*}\label{eqn.bdd.derivative.xi.(iii).ng*=1}
\abs{\partial_{x_1}^\mu \partial_{x_1}^\nu \Gamma_{\pi^{-1},G}^2} \leq C \rho_0^{4+2/d} I_g^2
\end{equation*}
for all diagrams $(\pi,G)\in \tilde\mcL_3^2$ with $G$ as in \Cref{eqn.graph.xi.(iii).ng*=1}.

Next, 
for the diagrams with no $g$-edges connected to $\{1\}$ the argument is as for the bound of 
$\partial_{x_1}^{\mu}\partial_{x_1}^\nu\xi_0$ in \cite[Proof of Lemma 4.1]{Lauritsen.Seiringer.2023}.
Analogously to \cite[Equations (4.19) and (4.20)]{Lauritsen.Seiringer.2023} we conclude the bound 
(the term $1$ in the factor $I_\gamma + 1$ arises similarly as in the bounds above from the value of diagrams with $k=1$)
\begin{equation*}
\abs{\partial_{x_1}^2 \sum_{\substack{(\pi,G)\in \tilde\mcL_3^2 \\ \textnormal{no $g$-edges incident to } \{1\}}} 
  \Gamma_{\pi^{-1},G}^2} 
  \leq C \rho_0^4 I_g^2 (I_\gamma + 1) 
  	\left[\rho_0^{2/d} I_\gamma + \rho_0^{1/d} I_{\partial \gamma} + I_{\partial^2 \gamma}\right] 
\end{equation*}
where 
\begin{equation*}
I_{\gamma} = \int_{[0,L]^d} \abs{\gamma^{(1)}} \ud x,
\qquad 
I_{\partial\gamma} = \max_\mu \int_{[0,L]^d} \abs{\partial^\mu \gamma^{(1)}} \ud x,
\qquad 
I_{\partial^2 \gamma} = \max_{\mu,\nu} \int_{[0,L]^d} \abs{\partial^\mu\partial^\nu \gamma^{(1)}} \ud x.
\end{equation*}
Recall that $I_\gamma \leq C \zeta^{d/2}$ by \Cref{lem.bdd.gamma.Ig.Igamma}.
By a simple modification of the proof of \Cref{lem.bdd.gamma.Ig.Igamma}
we may bound $I_{\partial\gamma} \leq C \zeta^{d/2} \rho_0^{1/d}$ and $I_{\partial^2 \gamma} \leq C \zeta^{d/2} \rho_0^{2/d}$.
Thus
\begin{equation}\label{eqn.bdd.derivative.xi.(iii)}
\abs{\partial_{x_1}^{2} \xi_{\textnormal{(iii)}}(z_1,z_2)}
\leq C \rho_0^{4+2/d} I_g^2 \zeta^{d}.
\end{equation}

\noindent
Next, we bound $\xi_{\textnormal{(i)}}$. 
For the diagrams with $n_g^*=2$, if $G$ is any graph with $n_g^*(G) = 2$ then for any permutation $\pi \in \mcS_4$ we have $(\pi,G)\in \tilde\mcL_2^2$. 
Thus Taylor expanding $\rho^{(4)}$ in $x_2,x_3,x_4$ around $x_j=x_1$ 
and bounding some $g$-factors by $1$ we get similarly to \Cref{eqn.eps2.p=1}
\begin{equation}\label{eqn.bdd.xi.(i)}
\begin{aligned}
\xi_{\textnormal{(i)}} 
  & = \sum_{\substack{G\in \mcG_2^2 \\ n_g^*(G) = 2}} \iint \rho^{(4)} \prod_{e\in G} g_e \ud x_3 \ud x_4
  \\
  \abs{\xi_{\textnormal{(i)}}}
  & \leq C\rho_0^{4+6/d} |x_1-x_2|^2 \iint |g(z_1)|^2 |g(z_2)|^2 (|z_1|^2 + |z_2|^2 + |x_1-x_2|^2)^2 
  	\ud z_1 \ud z_2
  \\
  & \leq C I_{g}^2 \rho_0^{4+6/d} |x_1-x_2|^2 (b^2 + |x_1-x_2|^2)^{2}.
\end{aligned}
\end{equation}

\noindent
Finally, we bound $\xi_{\textnormal{(ii)}}$. 
All diagrams with $n_g^* = 1$ and $\{1\}$ and $\{2\}$ connected have $g$-graph 
\begin{equation}\label{eqn.define.diagram.eps2}
G_0 = 
\vcenter{\hbox{
\begin{tikzpicture}[line cap=round,line join=round,>=triangle 45,x=1.0cm,y=1.0cm]
  \node (1) at (-0.5,1) {};
  \node (2) at (1.5,1) {};
  \node (3) at (0.5,1) {};
  \node (4) at (0,0) {};
  \node (5) at (1,0) {};
  \draw[dashed] (1) -- (3) -- (2);
  \draw[dashed] (4) -- (5);
  \foreach \i in {1,...,5} \draw[fill] (\i) circle [radius=1.5pt];
  \foreach \i in {1} \node[anchor=east] at (\i) {$*$};
  \foreach \i in {2} \node[anchor=west] at (\i) {$*$};
  \foreach \i in {1,...,5} \node[anchor=south] at (\i) {$\i$};
\end{tikzpicture}
}}
\end{equation}
For convenience of notation we denote the permutation in the diagram $\pi^{-1}$.
Then
\begin{equation*}
\begin{aligned}
& \Gamma_{\pi^{-1},G_0}^2
  \\ 
  & = (-1)^{\pi} \frac{1}{L^{5d}} \sum_{k_1,\ldots,k_5} 
    \hat\gamma^{(1)}(k_1)\cdots \hat\gamma^{(1)}(k_5) 
  \\
    & \quad \times 
    \iiint e^{i(k_1 - k_{\pi(1)})x_1} \cdots e^{i(k_5 - k_{\pi(5)})x_5} g(x_1-x_3)g(x_2-x_3) g(x_4-x_5) \ud x_3 \ud x_4 \ud x_5
  \\
  & = (-1)^{\pi} \frac{1}{L^{5d}} \sum_{k_1,\ldots,k_5} 
    \hat\gamma^{(1)}(k_1)\cdots \hat\gamma^{(1)}(k_5) 
      e^{i\left(k_1 - k_{\pi(1)} - \frac{k_3 - k_{\pi(3)}}{2}\right)x_1} 
      e^{i\left(k_2 - k_{\pi(2)} - \frac{k_3 - k_{\pi(3)}}{2}\right)x_2}
  \\
    & \quad \times 
    \int e^{i(k_3-k_{\pi(3)}) (x_3 - \frac{x_1+x_2}{2})} 
      g\left(\frac{x_1-x_2}{2} + \frac{x_1+x_2}{2} - x_3\right)
      g\left(-\frac{x_1-x_2}{2} + \frac{x_1+x_2}{2} - x_3\right)
      \ud x_3
  \\
    & \quad \times 
    \iint g(x_4-x_5) e^{i(k_4-k_{\pi(4)})(x_4-x_5)} e^{i(k_5 - k_{\pi(5)} + k_{4}-k_{\pi(4)}) x_5} \ud x_4 \ud x_5
  \\
  & = 
    (-1)^{\pi} \frac{1}{L^{4d}} \sum_{k_1,\ldots,k_5} 
    \hat\gamma^{(1)}(k_1)\cdots \hat\gamma^{(1)}(k_5) 
  \\
    & \quad \times 
      e^{i\left(k_1 - k_{\pi(1)} - \frac{k_3 - k_{\pi(3)}}{2}\right)x_1} 
      e^{i\left(k_2 - k_{\pi(2)} - \frac{k_3 - k_{\pi(3)}}{2}\right)x_2}
    \hat G_1(k_3 - k_{\pi(3)}) \hat g(k_{\pi(4)}-k_4) \chi_{(k_5 - k_{\pi(5)} + k_{4}-k_{\pi(4)} = 0)},
\end{aligned}
\end{equation*}
where
\begin{equation*}
\hat G_1(k) := \int e^{-ikz} g\left(\frac{x_1-x_2}{2} + z\right) g\left(-\frac{x_1-x_2}{2} + z\right) \ud z.
\end{equation*}

\noindent
We group together pairs of diagrams $\pi$ and (using cycle notation) $\pi \cdot (4\,5) = (\pi(4)\,\pi(5)) \cdot \pi$, meaning where $\pi(4)$ and $\pi(5)$ are swapped.
These have opposite signs.
Thus,
\begin{equation*}
\begin{aligned}
& \Gamma_{\pi^{-1},G_0}^2 + \Gamma_{(\pi (4\,5))^{-1}, G_0}
  \\ 
  & = 
    (-1)^{\pi} \frac{1}{L^{4d}} \sum_{k_1,\ldots,k_5} 
    \hat\gamma^{(1)}(k_1)\cdots \hat\gamma^{(1)}(k_5) 
      e^{i\left(k_1 - k_{\pi(1)} - \frac{k_3 - k_{\pi(3)}}{2}\right)x_1} 
      e^{i\left(k_2 - k_{\pi(2)} - \frac{k_3 - k_{\pi(3)}}{2}\right)x_2}
  \\
    & \quad \times 
    \hat G_1(k_3 - k_{\pi(3)}) 
    \chi_{(k_5 - k_{\pi(5)} + k_{4}-k_{\pi(4)} = 0)}
    \left[
    \hat g(k_{\pi(4)}-k_4) 
    - \hat g(k_{\pi(5)} - k_4)
    \right].
\end{aligned}
\end{equation*}
We Taylor expand $\hat g(k_{\pi(5)} - k_4)$ in $k_{\pi(5)}$ around $k_{\pi(5)} = k_{\pi(4)}$. That is,
\begin{equation*}
\hat g(k_{\pi(5)} - k_4) = \hat g(k_{\pi(4)} - k_4) + O\left(\nabla \hat g\right) \abs{k_{\pi(4)} - k_{\pi(5)}},
\end{equation*}
where $O\left(\nabla \hat g\right)$ should be interpreted as being bounded by 
$\abs{\nabla \hat g(k)} \leq \int |x| |g(x)| = I_{|x|g}$ uniformly in $k_{\pi(4)} - k_{\pi(5)}$.
Moreover, $\abs{\hat G_1}\leq I_g$.
Thus
\begin{equation*}
\begin{aligned}
& \abs{\Gamma_{\pi^{-1},G_0}^2 + \Gamma_{(\pi (4\,5))^{-1}, G_0}}
  \\ 
  & \leq
    C I_g I_{|x|g}
    \times
    \frac{1}{L^{4d}} \sum_{k_1,\ldots,k_5} 
    \hat\gamma^{(1)}(k_1)\cdots \hat\gamma^{(1)}(k_5) 
    \abs{k_{\pi(4)} - k_{\pi(5)}}
    \chi_{(k_5 - k_{\pi(5)} + k_{4}-k_{\pi(4)} = 0)}.
\end{aligned}
\end{equation*}
The characteristic function is not identically $1$ for linked diagrams. 
Indeed, if $\pi(\{4,5\}) = \{4,5\}$ then the diagram would not be linked.
Thus, the characteristic function effectively reduces the number of $k$-sums by $1$.
Bounding similarly as above $\hat \gamma^{(1)} \leq 1$ and 
using finally \Cref{lem.bdd.sum.gamma.k} to bound the $k$-sums we conclude 
for any permutation $\pi$ such that $(\pi,G_0)\in \tilde\mcL_3^2$ that
\begin{equation*}
\begin{aligned}
\abs{\Gamma_{\pi^{-1},G_0}^2 + \Gamma_{(\pi (4\,5))^{-1}, G_0}}
  & \leq C \rho_0^{4+1/d} I_{|x|g} I_g.
\end{aligned}
\end{equation*}
Since, $\pi$ and $\pi (4\, 5)$ either both give rise to linked diagrams or neither do we conclude that 
\begin{equation}\label{eqn.bdd.xi.(ii)}
\abs{\xi_{\textnormal{(ii)}}} 
= \frac{1}{3!}\abs{\sum_{(\pi,G_0) \in \tilde\mcL_3^2} \Gamma_{\pi,G_0}}
\leq C \rho_0^{4+1/d} I_{|x|g} I_g.
\end{equation}
Combining then \Cref{eqn.bdd.xi.(i),eqn.bdd.xi.small,eqn.bdd.xi.large,eqn.bdd.xi.taylor.expansion,eqn.bdd.derivative.xi.(iii),eqn.bdd.xi.(ii)}
and using \Cref{lem.bdd.gamma.Ig.Igamma}
we conclude the bound 
\begin{equation*}
\abs{\xi_{\textnormal{(iii)}}} 
\lesssim 
  a^{2d} b \rho_0^{4+1/d} \log b/a
  + a^{3d} \rho_0^5 \zeta^{3d/2} (\log b/a)^3
  + a^{2d} \rho_0^{4+2/d} \zeta^{d} (\log b/a)^2 |x_1-x_2|^2.
\end{equation*}

We conclude the bound  
\begin{equation*}
\begin{aligned}
  \abs{\sum_{p=1}^\infty \frac{1}{p!} \sum_{(\pi,G)\in \tilde\mcL_p^2} \Gamma_{\pi,G}^2}
  & \lesssim
    a^{d} \rho_0^{3+2/d} \log b/a |x_1-x_2|^2 + a^{d} b^2 \rho_0^{3+4/d} |x_1-x_2|^2 
  \\ & \quad 
  + a^d \rho_0^{4+6/d} |x_1-x_2|^2 (b^2 + |x_1-x_2|^2)^2 (\log b/a)^2 
  + a^{2d} b \rho_0^{4+1/d} \log b/a 
  \\ & \quad 
  + a^{2d} \rho_0^{4+2/d} \zeta^{d} (\log b/a)^2 |x_1-x_2|^2 
  + a^{3d} \rho_0^5 \zeta^{3d/2} (\log b/a)^3. 
\end{aligned}
\end{equation*}
Thus, using \Cref{lem.bdd.int.f} we get
\begin{equation*}
\begin{aligned}
\frac{\abs{\eps_2}}{L^d} 
  & \lesssim
  a^{2d} \rho_0^{3+2/d} \log b/a 
  + a^{2d} b^{2} \rho_0^{3+4/d} 
  + a^{4d} b^{4 - d} \rho_0^{4+6/d} (\log b/a)^2%
  + a^{3d-2} b \rho_0^{4+1/d} \log b/a 
  \\ & \quad 
  + a^{3d} \rho_0^{4+2/d} \zeta^{d} (\log b/a)^2
  + a^{4d-2} \rho_0^5 \zeta^{3d/2} (\log b/a)^3.
  \\
  & \lesssim 
  \begin{cases}
  a^{2d} \rho_0^{3+2/d} \log b/a + a^{4d-2} \rho_0^5 \zeta^{3d/2} (\log b/a)^3
  & d\geq 2, \\
  a b \rho_0^5 \log b/a + a^2  \rho_0^5 \zeta^{3/2}(\log b/a)^3
  & d = 1.
  \end{cases}
\end{aligned}
\end{equation*}
This concludes the proof of \Cref{lem.bdd.eps.errors}.
\end{proof}

\begin{remark}[{Necessity of precise analysis of diagrams with $k+n_g+n_g^*=2$}]
For the bound of $\eps_2$ we give here a precise analysis of the diagrams with $k+n_g+n_g^*=2$. 
In general, one should not expect this to be needed in dimensions $d=2,3$.
More precisely, by just considering powers of $\rho_0$, one would expect that 
diagrams with $k+n_g+n_g^* \geq 1$ are all subleading as they carry a higher power of 
$\rho_0$ (using \Cref{eqn.tail-bound}) than the claimed leading term, with exponent $2+2/d$.

The reason we need a precise analysis here is the temperature dependence of our bounds:
For some regime of temperatures the bound one would get by using \Cref{eqn.tail-bound} is not good enough.
\end{remark}

\begin{remark}[Optimality of the error bounds]\label{rmk.compare.zero.temp.d=1}
One should not expect the bound given in \Cref{eqn.bdd.xi.(ii)} to be optimal. 
More precisely in \Cref{eqn.bdd.xi.(ii)} we only took into account the cancellations of pairs of diagrams.
However, one should expect much more cancellations. 
We have 
\begin{equation*}
	\xi_{\textnormal{(ii)}}
	= \frac{1}{3!} \iiint \left[\rho^{(5)}(x_1,\ldots,x_5) - \rho^{(3)}(x_1,x_2,x_3)\rho^{(2)}(x_4,x_5)\right] 
		g_{13} g_{23} g_{45} \ud x_3 \ud x_4 \ud x_5.
\end{equation*}
Naively, just using that $\rho^{(5)}(x_1,\ldots,x_5) - \rho^{(3)}(x_1,x_2,x_3)\rho^{(2)}(x_4,x_5)$
vanishes whenever any two of the particles $1,2,3$ or the particles $4,5$ are incident 
we get by Taylor expansion 
\begin{equation}
	\abs{\rho^{(5)}(x_1,\ldots,x_5) - \rho^{(3)}(x_1,x_2,x_3)\rho^{(2)}(x_4,x_5)}
	\leq C \rho_0^{5+6/d} |x_1-x_2|^2 |x_1-x_3|^2 |x_4-x_5|^2.
\end{equation}
Using this bound and bounding $|g_{23}|\leq 1$ we get 
\begin{equation}
	\abs{\xi_{\textnormal{(ii)}}}
		\leq \rho_0^{5+6/d} a^{2d}b^4 L^d |x_1-x_2|^2.
\end{equation}
This bound is too large by a volume factor. 
(This arises since we ``forget'' that the relevant diagrams are linked when we do the Taylor expansion.)
It however illustrates how many more cancellations between the different permutations are present 
than what we used in the bound \Cref{eqn.bdd.xi.(ii)} --- it carries a higher power of $\rho_0$.
Using these cancellations but losing the information that diagrams are linked is what we did in \cite{Lauritsen.Seiringer.2023}.

If one could somehow see these cancellations, while still keeping the information
that the diagrams have to be linked, one might be able to improve upon the bound \Cref{eqn.bdd.xi.(ii)}. 
In $1$ dimension this error term is actually (for some regime of temperatures) the dominant error term.
Thus, by improving the analysis of these diagrams, one might improve the error term in \Cref{lem.pressure.high.temp}
in $d=1$.
\end{remark}

\appendix

\section{Particle density of the trial state}
\label{sec.density.Gamma_J}
In this section we give the 
\begin{proof}[Proof of \Cref{eqn.density.Gamma_J}]
We calculate $\expect{\mcN}_J$ and compare it to $\expect{\mcN}_0 = \rho_0L^d$.
We have by \Cref{eqn.rhoq_J}
\begin{equation*}
\begin{aligned}
\expect{\mcN}_J
	& = \int \rho_J^{(1)}(x) \ud x 
	= L^d \left[\rho^{(1)} + \sum_{p=1}^\infty \frac{1}{p!} \sum_{(\pi,G)\in \mcL_p^1} \Gamma_{\pi,G}^1\right]
	= \expect{\mcN}_0 + L^d \sum_{p=1}^\infty \frac{1}{p!} \sum_{(\pi,G)\in \mcL_p^1} \Gamma_{\pi,G}^1.
\end{aligned}
\end{equation*}
Next, we bound $\sum_{p=1}^\infty \frac{1}{p!} \sum_{(\pi,G)\in \mcL_p^1} \Gamma_{\pi,G}^1$.
We use the bound in \Cref{eqn.tail-bound}
for diagrams with $k+n_g+n_g^* \geq 2$, i.e.,
for $p=2$ with $k = 0, n_{g}^* = 2$ and 
for $p\geq 3$. 
That is, 
\begin{equation*}
\frac{1}{2!} \abs{\sum_{\substack{(\pi,G)\in\mcL_2^1 \\ k(\pi,G) = 0}} \Gamma_{\pi,G}^1 }
  \leq C I_g^2 
  \rho_0^3,
	\qquad
\sum_{p=3}^\infty
\frac{1}{p!} \abs{\sum_{\substack{(\pi,G)\in\mcL_p^1}} \Gamma_{\pi,G}^1 }
  \leq C I_g^2 (1 + I_\gamma^2) \rho_0^3
\end{equation*}
for sufficiently small $\rho_0 I_g$ and $\rho_0 I_g I_\gamma$.
Thus, we get 
\begin{equation*}
\sum_{p=1}^\infty \frac{1}{p!} \sum_{\substack{(\pi,G)\in\mcL_p^1}}\Gamma_{\pi,G}^1
 	=  \sum_{\substack{(\pi,G)\in\mcL_1^1}}\Gamma_{\pi,G}^1
 		+ \frac{1}{2} \sum_{\substack{(\pi,G)\in \mcL_2^1 \\ k(\pi,G) = 1}} \Gamma_{\pi,G}^1
    + O\left(I_g^2\left(I_\gamma^2+1\right) \rho_0^3\right).
\end{equation*}
For the $p=1$-term there are two diagrams. Thus (where $*$ labels the external vertex)
\begin{equation*}
\displaystyle \sum_{\substack{(\pi,G)\in\mcL_1^1}}\Gamma_{\pi,G}^1 = 
\vcenter{\hbox{
\begin{tikzpicture}[line cap=round,line join=round,>=triangle 45,x=1.0cm,y=1.0cm]
	\node (1) at (0,1) {};
	\node (2) at (0,0) {};
	\draw[dashed] (1) -- (2);
	\draw[->] (1) to[out=-30,in=30,loop] (1);
	\draw[->] (2) to[out=-30,in=30,loop] (2);
	\foreach \i in {1,2} \draw[fill] (\i) circle [radius=1.5pt];
	\foreach \i in {1} \node[anchor=south] at (\i) {$*$};
\end{tikzpicture}
}}
+
\vcenter{\hbox{
\begin{tikzpicture}[line cap=round,line join=round,>=triangle 45,x=1.0cm,y=1.0cm]
	\node (1) at (0,1) {};
	\node (2) at (0,0) {};
	\draw[dashed] (1) -- (2);
	\draw[->] (1) to[bend right] (2);
	\draw[->] (2) to[bend right] (1);
	\foreach \i in {1,2} \draw[fill] (\i) circle [radius=1.5pt];
	\foreach \i in {1} \node[anchor=south] at (\i) {$*$};
\end{tikzpicture}
}}
	= \int \det \begin{bmatrix}
	\gamma^{(1)}(0) & \gamma^{(1)}(x) \\ \gamma^{(1)}(x) & \gamma^{(1)}(0)
	\end{bmatrix}
	g(x) \ud x
	= O\left(I_{|x|^2g}\rho_0^{2+2/d}\right).
\end{equation*}
For the $p=2$-term with $k=1$ there are $4$ diagrams. Thus 
\begin{equation*}
\begin{aligned}
	\frac{1}{2}\sum_{\substack{(\pi,G)\in \mcL_2^1 \\ k(\pi,G) = 1}} \Gamma_{\pi,G}^1 
	& = 
\frac{1}{2}
\left[
\vcenter{\hbox{
\begin{tikzpicture}[line cap=round,line join=round,>=triangle 45,x=1.0cm,y=1.0cm]
	\node (1) at (0,1) {};
	\node (2) at (-0.5,0) {};
	\node (3) at (0.5,0) {};
	\draw[dashed] (2) -- (3);
	\draw[->] (1) to[bend right] (2);
	\draw[->] (2) to[bend right] (1);
	\draw[->] (3) to[out=60,in=120,loop] (3);
	\foreach \i in {1,2,3} \draw[fill] (\i) circle [radius=1.5pt];
	\foreach \i in {1} \node[anchor=south] at (\i) {$*$};
\end{tikzpicture}
}}
	+
\vcenter{\hbox{
\begin{tikzpicture}[line cap=round,line join=round,>=triangle 45,x=1.0cm,y=1.0cm]
	\node (1) at (0,1) {};
	\node (2) at (-0.5,0) {};
	\node (3) at (0.5,0) {};
	\draw[dashed] (2) -- (3);
	\draw[->] (1) to[bend right] (3);
	\draw[->] (3) to[bend right] (1);
	\draw[->] (2) to[out=60,in=120,loop] (2);
	\foreach \i in {1,2,3} \draw[fill] (\i) circle [radius=1.5pt];
	\foreach \i in {1} \node[anchor=south] at (\i) {$*$};
\end{tikzpicture}
}}
	+
\vcenter{\hbox{
\begin{tikzpicture}[line cap=round,line join=round,>=triangle 45,x=1.0cm,y=1.0cm]
	\node (1) at (0,1) {};
	\node (2) at (-0.5,0) {};
	\node (3) at (0.5,0) {};
	\draw[dashed] (2) -- (3);
	\draw[->] (1) to (2);
	\draw[->] (2) to[bend right] (3);
	\draw[->] (3) to (1);
	\foreach \i in {1,2,3} \draw[fill] (\i) circle [radius=1.5pt];
	\foreach \i in {1} \node[anchor=south] at (\i) {$*$};
\end{tikzpicture}
}}
	+
\vcenter{\hbox{
\begin{tikzpicture}[line cap=round,line join=round,>=triangle 45,x=1.0cm,y=1.0cm]
	\node (1) at (0,1) {};
	\node (2) at (-0.5,0) {};
	\node (3) at (0.5,0) {};
	\draw[dashed] (2) -- (3);
	\draw[->] (1) to (3);
	\draw[->] (3) to[bend left] (2);
	\draw[->] (2) to (1);
	\foreach \i in {1,2,3} \draw[fill] (\i) circle [radius=1.5pt];
	\foreach \i in {1} \node[anchor=south] at (\i) {$*$};
\end{tikzpicture}
}}
\right]
	\\
	& = \frac{1}{L^{3d}} \sum_{k_1,k_2,k_3} \iint \ud x_2 \ud x_3 \, \hat\gamma^{(1)}(k_1) \hat\gamma^{(1)}(k_2) \hat\gamma^{(1)}(k_3) g(x_2-x_3)
	\\ & \qquad \times 
		\left[e^{i(k_1-k_2)(x_1-x_2)} - e^{ik_1(x_1-x_2)} e^{ik_2(x_2-x_3)} e^{ik_3(x_3-x_1)}\right]
	\\
	& = \frac{1}{L^{2d}} \sum_{k_1,k_2,k_3} \hat\gamma^{(1)}(k_1) \hat\gamma^{(1)}(k_2) \hat\gamma^{(1)}(k_3) \hat g(k_1-k_2) \left[\chi_{(k_1=k_2)} - \chi_{(k_1=k_3)}\right]
	\\
	& = \frac{1}{L^{2d}} \sum_{k,\ell} \hat \gamma^{(1)}(k)^2 \hat\gamma^{(1)}(\ell) \left[\hat g(0) - \hat g(k-\ell)\right].
\end{aligned}
\end{equation*}
Taylor expanding $\hat g$ and using that $\int x g(x) = 0$ so $\nabla \hat g(0) = 0$ 
we get 
\begin{equation*}
\abs{\frac{1}{2}\sum_{\substack{(\pi,G)\in \mcL_2^1 \\ k(\pi,G) = 1}} \Gamma_{\pi,G}^1 }
  \leq C I_{|x|^2g} \rho_0^{2+2/d}.
\end{equation*}
Thus, by \Cref{lem.bdd.gamma.Ig.Igamma}
\begin{equation*}
\begin{aligned}
\abs{\sum_{p=1}^\infty \frac{1}{p!} \sum_{\substack{(\pi,G)\in\mcL_p^1}}\Gamma_{\pi,G}^1}
  &
  \leq C I_{|x|^2g} \rho_0^{2+2/d} + C I_g^2 I_\gamma^2 \rho_0^3 + C I_g^2 \rho_0^3
  \\ &
 	\leq C a^d b^2 \rho_0^{2+2/d} + C a^{2d} \rho_0^3 \zeta^d (\log b/a)^2.
\end{aligned}
\end{equation*}
That is, 
\Cref{eqn.density.Gamma_J} is satisfied.
\end{proof}

\printbibliography

\end{document}